\definecolor{mygreen}{RGB}{20,120,60}
\title{Near-Optimal Massively Parallel Graph Connectivity\footnote{A preliminary version of this paper is to appear in the proceedings of {\em The 60th Annual IEEE Symposium on Foundations of Computer Science }(FOCS 2019).\vspace{0.2cm}}}
\author{Soheil Behnezhad\thanks{Supported in part by NSF SPX grant CCF-1822738 and a Google PhD Fellowship.} \\ University of Maryland \\ \texttt{soheil@cs.umd.edu}
\and Laxman Dhulipala \\ Carnegie Mellon University \\ \texttt{ldhulipa@cs.cmu.edu}
\and Hossein Esfandiari \\ Google Research \\ \texttt{esfandiari@google.com}
\and Jakub Łącki \\ Google Research \\ \texttt{jlacki@google.com}
\and Vahab Mirrokni \\ Google Research \\ \texttt{mirrokni@google.com}
}
\date{}
\newcommand{\MPC}[0]{\ensuremath{\mathsf{MPC}}}
\newcommand{\PRAM}[0]{\ensuremath{\mathsf{PRAM}}}
\newcommand{\scanpram}[0]{\emph{scan} \ensuremath{\mathsf{PRAM}}\renewcommand{\scanpram}[0]{scan \ensuremath{\mathsf{PRAM}}}}
\newcommand{\arbpram}[0]{\emph{arbitrary} CRCW \ensuremath{\mathsf{PRAM}}\renewcommand{\arbpram}[0]{arbitrary \ensuremath{\mathsf{PRAM}}}}
\newcommand{\pripram}[0]{\emph{priority} CRCW \ensuremath{\mathsf{PRAM}}\renewcommand{\pripram}[0]{priority CRCW \ensuremath{\mathsf{PRAM}}}}
\newcommand{\combpram}[0]{\emph{combining} CRCW \ensuremath{\mathsf{PRAM}}\renewcommand{\combpram}[0]{combining CRCW \ensuremath{\mathsf{PRAM}}}}
\newcommand{\multipram}[0]{\emph{multiprefix} CRCW \ensuremath{\mathsf{PRAM}}\renewcommand{\multipram}[0]{multiprefix CRCW \ensuremath{\mathsf{PRAM}}}}
\newcommand{\twocycle}[0]{\textsc{2-Cycle}}
\DeclareMathOperator{\dist}{dist}
\let\next\relax
\DeclareMathOperator{\next}{next}
\newcommand{\level}[1]{\ensuremath{\ell(#1)}}
\newcommand{\budget}[1]{\ensuremath{b(#1)}}
\newcommand{\budgetconstant}{\ensuremath{1.25}}
\newcommand{\eqdef}[0]{\ensuremath{:=}}
\newcommand{\newpara}[1]{\smallskip\noindent {\bf #1.}}
\DeclareMathOperator{\poly}{poly}
\newcommand{\E}[0]{\ensuremath{\mathbb{E}}}
\newcommand{\NC}[0]{\ensuremath{\mathsf{NC}}}
\renewcommand{\P}[0]{\ensuremath{\mathsf{P}}}
\newtheorem{theorem}{Theorem}
\newtheorem{lemma}{Lemma}[section]
\newtheorem{claim}[lemma]{Claim}
\newtheorem{observation}[lemma]{Observation}
\newtheorem{remark}[lemma]{Remark}
\newtheorem{conjecture}[lemma]{Conjecture}
\definecolor{mygreen}{RGB}{20,140,80}
\definecolor{linkcolor}{RGB}{0,0,230}
\definecolor{mylightgray}{RGB}{230,230,230}
\definecolor{verylightgray}{RGB}{245,245,245}
\newcommand{\etal}[0]{\textit{et al.}}
\algnewcommand{\IIf}[1]{\State\algorithmicif\ #1\ \algorithmicthen}
\algnewcommand{\EndIIf}{\unskip\ \algorithmicend\ \algorithmicif}
\newcommand{\smparagraph}[1]{\vspace{-0.3cm}\paragraph{#1}}
\newcounter{myalgctr}
\newtcolorbox{OuterBox}[1][]{%
    breakable,
    enhanced,
    frame hidden,
    interior hidden,
    left=-5pt,
    right=-5pt,
    top=-5pt,
    float=p,
    boxsep=0pt,
    arc=0pt
#1}%
\newtcolorbox{InnerBox}[1][]{%
    enforce breakable,
    enhanced,
    colback=gray,
    colframe=white,
#1}%
\newenvironment{tbox}{
\begin{tcolorbox}[width=\textwidth,
                  enhanced,
                  boxsep=2pt,
                  left=1pt,
                  right=1pt,
                  top=4pt,
                  boxrule=1pt,
                  arc=0pt,
                  colback=white,
                  colframe=black,
	              breakable,
	              float=t
                  ]
}{
\end{tcolorbox}
}
\newenvironment{graytbox}{
\vspace{0.2cm}
\begin{tcolorbox}[width=\textwidth,
                  enhanced,
                  frame hidden,
                  boxsep=6pt,
                  left=1pt,
                  right=1pt,
                  top=4pt,
                  boxrule=1pt,
                  arc=0pt,
                  colback=mylightgray,
                  colframe=black,
                  breakable
                  ]
}{
\end{tcolorbox}
}
\newcommand{\tboxhrule}[0]{\vspace{0.1cm} {\color{black} \hrule} \vspace{0.2cm}}
\newenvironment{titledtbox}[1]{\begin{tbox}#1 \tboxhrule}{\end{tbox}}
\newenvironment{tboxalg}[1]{\refstepcounter{myalgctr}\begin{titledtbox}{\textbf{Algorithm \themyalgctr.} #1}}{\end{titledtbox}}
\begin{document}
\begin{titlepage}
\maketitle
\thispagestyle{empty}

\begin{abstract}
\setlength{\parskip}{0.4em}
Identifying the connected components of a graph, apart from being a fundamental problem with countless applications, is a key primitive for many other algorithms. In this paper, we consider this problem in parallel settings. Particularly, we focus on the {\em Massively Parallel Computations} (\MPC{}) model, which is the standard theoretical model for modern parallel frameworks such as MapReduce, Hadoop, or Spark. We consider the {\em truly sublinear} regime of \MPC{} for graph problems where the space per machine is $n^\delta$ for some desirably small constant $\delta \in (0, 1)$.

We present an algorithm that for graphs with diameter $D$ in the wide range $[\log^{\epsilon} n, n]$, takes $O(\log D)$ rounds to identify the connected components and takes $O(\log \log n)$ rounds for all other graphs. The algorithm is randomized, succeeds with high probability\footnote{We use the term {\em with high probability} to refer to probability at least $1-n^{-c}$ for arbitrarily large constant $c>1$.}, does not require prior knowledge of $D$, and uses an optimal total space of $O(m)$. We complement this by showing a conditional lower-bound based on the widely believed \twocycle{} conjecture that $\Omega(\log D)$ rounds are indeed necessary in this setting.

Studying parallel connectivity algorithms received a resurgence of interest after the pioneering work of Andoni~\etal{}~[FOCS 2018] who presented an algorithm with $O(\log D \cdot \log \log n)$ round-complexity. Our algorithm improves this result for the whole range of values of $D$ and almost settles the problem due to the conditional lower-bound.

Additionally, we show that with minimal adjustments, our algorithm can also be implemented in a variant of the (CRCW) \PRAM{} in asymptotically the same number of rounds.\end{abstract}

\end{titlepage}

\clearpage

\section{Introduction}
Identifying the connected components of a graph is a fundamental problem that has been studied in a variety of settings (see e.g. \cite{DBLP:conf/soda/AhnGM12, DBLP:journals/tcs/FeigenbaumKMSZ05, DBLP:conf/esa/CrouchMS13, DBLP:journals/jal/ShiloachV82, DBLP:journals/dam/Vishkin84, icdepaper} and the references therein). This problem is also of great practical importance~\cite{vldbsurvey} with a wide range of applications, e.g. in clustering~\cite{icdepaper}. The main challenge is to compute connected components in graphs with over hundreds of billions or even trillions of nodes and edges~\cite{vldbsurvey, wsdm}. The related theory question is: 
\begin{center}
{\em What is the true complexity of finding connected components in massive graphs?}
\end{center}
We consider this problem in parallel settings which are a common way of handling massive graphs. Our main focus is specifically on the {\em Massively Parallel Computations} (\MPC{}) model  \cite{DBLP:journals/jacm/BeameKS17, DBLP:conf/soda/KarloffSV10, DBLP:conf/isaac/GoodrichSZ11}; however, we show that our techniques are general enough to be seamlessly implemented in other parallel models such as (CRCW) \PRAM{}. The \MPC{} model is arguably the most popular theoretical model for modern parallel frameworks such as MapReduce~\cite{DBLP:journals/cacm/DeanG08}, Hadoop~\cite{Hadoop}, or Spark~\cite{DBLP:conf/hotcloud/ZahariaCFSS10} and has received significant attention over the past few years (see Section~\ref{sec:relatedwork}). We consider the strictest regime of \MPC{} for graph problems where the space per machine is strongly sublinear in $n$.

\smparagraph{The \MPC{} model.} The input, which in our case is the edge-set of a graph $G(V, E)$ with $n$ vertices and $m$ edges, is initially distributed across $M$ machines. Each machine has a space of size $S = n^\delta$ words where constant $\delta \in (0, 1)$ can be made arbitrarily small. Furthermore, $M \cdot S = O(m)$ so that there is only enough total space to store the input. Computation proceeds in rounds. Within a round, each machine can perform arbitrary computation on its local data and send messages to each of the other machines. Messages are delivered at the beginning of the following round. An important restriction is that the total size of messages sent and received by each machine in a round should be $O(S)$. The main objective is to minimize the number of rounds that are executed.

\smparagraph{What we know.} Multiple algorithms for computing connected components in $O(\log n)$ \MPC{} rounds have been shown~\cite{DBLP:conf/soda/KarloffSV10, wsdm, icdepaper, cc-contractions}. On the negative side, a popular \twocycle{} conjecture~\cite{DBLP:conf/icml/YaroslavtsevV18, DBLP:conf/spaa/RoughgardenVW16, cc-contractions, DBLP:journals/corr/abs-1805-02974} implies that $\Omega(\log n)$ rounds are necessary. Namely, the conjecture states that in this regime of \MPC{}, distinguishing between a cycle on $n$ vertices and two cycles on $n/2$ vertices each requires $\Omega(\log n)$ rounds. However, the \twocycle{} conjecture and the matching upper bound are far from explaining the true complexity of the problem. First, the hard example used in the conjecture is very different from what most graphs look like. Second, the empirical performance of the existing algorithms (in terms of the number of rounds) is much lower than what the upper bound of $O(\log n)$ suggests~\cite{cc-beyond, cc-contractions, wsdm, icdepaper, DBLP:journals/tpds/LulliCDLR17}.

This disconnect between theory and practice has motivated the study of graph connectivity as a function of diameter $D$ of the graph. The reason is that the vast majority of real-world graphs, indeed have very low diameter~\cite{snapnets, DBLP:journals/tcs/CrescenziGHLM13}. This is reflected in multiple theoretical models designed to capture real-world graphs, which yield graphs with polylogarithmic diameter~\cite{DBLP:journals/combinatorica/BollobasR04, gu2013clustering, lu2001diameter, bollobas2003mathematical}.

\smparagraph{Our contribution.} Our main contribution is the following algorithm:

\begin{graytbox}
	\begin{theorem}[main result]\label{thm:main}
		There is a strongly sublinear \MPC{} algorithm with $O(m)$ total space that given a graph with diameter $D$, identifies its connected components in $O(\log D)$ rounds if $D \geq \log^{\epsilon} n$ for any constant $\epsilon >0$, and takes $O(\log\log_{m/n} n)$ rounds otherwise. The algorithm is randomized, succeeds with high probability, and does not require prior knowledge of $D$.
	\end{theorem}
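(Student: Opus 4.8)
The plan is to follow the standard "graph exponentiation / neighborhood doubling" paradigm, but augmented with a mechanism that keeps the total space linear and adaptively handles unknown diameter. At a high level, each vertex $v$ maintains a set of "known" vertices, initially its $1$-hop neighborhood. In each phase, a vertex learns the union of the known-sets of all vertices in its current known-set, thereby doubling the radius it sees: after $k$ phases it knows its entire $2^k$-hop ball. Once this ball stops growing, $v$ has discovered its whole connected component (or at least enough of it to identify the component via a leader). Since the ball radius reaches $D$ after $O(\log D)$ phases, this gives the $O(\log D)$ round bound; the $O(\log\log_{m/n} n)$ bound for small $D$ comes from the faster sublinear-memory connectivity subroutines (à la Andoni et al.\ and the line of work using $n^{1+\Omega(1)}$ total space within each component's local computation) once components are small enough to fit, together with the fact that $m/n$ slack speeds up the doubling base.

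The first concrete step is to make one doubling phase implementable in $O(1)$ \MPC{} rounds under strongly sublinear memory: a vertex's known-set can be as large as $n^\delta$, and gathering the union of $n^\delta$ such sets naively needs $n^{2\delta}$ space, which exceeds $S$. I would resolve this exactly as in the literature by (i) capping each known-set at $\Theta(n^\delta)$ and, when a vertex would exceed the cap, having it "collapse" onto a leader and stop expanding, and (ii) distributing the union computation across $\poly(n^\delta)$ auxiliary machines using sorting/aggregation primitives, charging the work so that no machine exceeds $O(S)$ and total space stays $O(m)$. The key invariant to maintain is that at all times the multiset of edges of the "contracted" graph (vertices = active leaders, edges = adjacencies between their balls) has size $O(m)$; I would prove this by a charging argument showing each original edge is represented a bounded number of times, and that when balls merge, the number of active vertices drops geometrically.

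Next I would handle the adaptivity and the two regimes. The algorithm does not know $D$, so I run the doubling process and, in each phase, test locally whether a vertex's ball has stopped growing; a vertex whose ball stabilizes "freezes." A global termination check (an $O(1)$-round OR over all machines) detects when every vertex is frozen. For the low-diameter regime $D < \log^\epsilon n$, after $O(\log\log n)$ phases every ball has radius $\geq \log^\epsilon n > D$, so we are done; more care is needed to get the sharper $O(\log\log_{m/n} n)$ bound, which I would obtain by using base $(m/n)^{\Theta(1)}$ rather than base $2$ for the doubling (each phase raises the known radius to the power of roughly $m/n$, since the extra global space $m = n \cdot (m/n)$ lets each vertex afford a larger known-set). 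For $D \geq \log^\epsilon n$ the $O(\log D)$ term dominates and absorbs the $O(\log\log_{m/n} n)$ startup cost.

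The main obstacle I anticipate is the linear-total-space constraint interacting with load balancing: naive neighborhood doubling blows up space quadratically, and the fix (capping and collapsing onto leaders) must be shown not to lose connectivity information or stall — i.e., one must argue that whenever many vertices collapse onto leaders the effective graph shrinks fast enough that the remaining doubling still finishes in $O(\log D)$ phases, and simultaneously that the edge multiset never exceeds $O(m)$ on any single machine after routing. Getting these two potentials (number of active vertices; representation count of each edge) to decrease in lockstep, while the per-phase \MPC{} implementation respects the $O(S)$ per-machine communication bound, is the technical heart of the argument; the rest (local stabilization test, global OR for termination, high-probability success of the hashing used for load balancing) is routine.
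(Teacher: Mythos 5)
Your proposal correctly identifies the "technical heart" of the problem — avoiding quadratic space blowup while guaranteeing geometric shrinkage — but leaves exactly that part unfilled, and the mechanism you do sketch (cap every known-set uniformly at $\Theta(n^\delta)$ and collapse onto leaders) is essentially the approach of Andoni et al., which the paper explicitly argues cannot achieve $O(\log D + \log\log n)$: when capped vertices stop expanding, nearby uncapped vertices stall behind them, and the diameter need not shrink at all during a leader-contraction phase, giving only the multiplicative $O(\log D \cdot \log\log n)$ bound (and the paper notes this is tight for that algorithm, not just an artifact of the analysis). So the "key invariant" and "charging argument" you defer to are precisely where the new idea has to live.

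The paper's actual resolution is qualitatively different in two ways that your sketch does not anticipate. First, budgets are \emph{per-vertex and adaptive}: a vertex starts with budget $(T/n)^{1/2}$ and, whenever it saturates, its budget is raised to the $1.25$ power and its "level" increments; different vertices simultaneously sit at different levels. This replaces your single global cap and is what prevents stalling. Second, proving the $O(\log D + \log\log n)$ bound requires a nontrivial amortization: the paper shows (Lemma~\ref{lem:2hoporlevelincrease}) that over any 4 iterations a pair of 2-hop vertices either gets directly connected or both levels rise, and (Observation~\ref{obs:levellargerthanneighbors}) that levels propagate to neighbors in one iteration; these are combined via a coin/potential argument on a maintained path $P_r$ where dropped vertices pass coins to surviving 2-hop neighbors and one proves $\Phi_r(w) \geq 1.25^{r-\ell(w)}$, forcing the path to shrink to constant length after $O(\log D + \log\log n)$ super-iterations. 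Your "two potentials in lockstep" intuition is pointing in the right direction, but a concrete argument of this shape (and the subtle parity issue in the coin-passing, fixed in the paper by passing to 2-hop rather than 1-hop path-neighbors) is the content of the proof, not a routine check. Finally, you also do not address getting from $O(m + n\,\mathrm{polylog}\,n)$ to $O(m)$ total space, which needs a separate preprocessing routine (Lemma~\ref{lem:shrink}) that shrinks the vertex count by a constant factor \emph{with high probability} (expectation-only, as in KSV, is not enough) before the main algorithm runs.
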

\end{graytbox}

The \twocycle{} conjecture mentioned above directly implies that $o(\log D)$ round algorithms do not exist in this setting for $D=\Theta(n)$. However, it does not rule out the possibility of achieving an $O(1)$ round algorithm if e.g. $D = O(\sqrt{n})$. We refute this possibility and show that indeed for any choice of $D \in [\log^{1+\Omega(1)}, n]$, there are family of graphs with diameter $D$ on which $\Omega(\log D)$ rounds are necessary in this regime of \MPC{}, if the \twocycle{} conjecture holds.

\begin{restatable}{theorem}{hard}\label{thm:hard}
	Fix some $D' \geq \log^{1+\rho} n$ for a desirably small constant $\rho \in (0, 1)$. Any \MPC{} algorithm with $n^{1-\Omega(1)}$ space per machine that w.h.p. identifies each connected component of any given $n$-vertex graph with diameter $D'$ requires $\Omega(\log D')$ rounds, unless the \twocycle{} conjecture is wrong.
\end{restatable}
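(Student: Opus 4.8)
The plan is to reduce the \twocycle{} problem on $N$ vertices to the problem of identifying connected components on an $n$-vertex graph of diameter exactly $D'$, losing only a constant factor in the round complexity, so that an $o(\log D')$-round connectivity algorithm would yield an $o(\log N)$-round algorithm for \twocycle{} and contradict the conjecture. The natural construction: take the hard \twocycle{} instance (one cycle on $N$ vertices versus two cycles on $N/2$ vertices each) and ``inflate'' each of its edges into a path of length $\ell$, where $\ell$ is chosen so that the resulting graph has diameter $\Theta(D')$. A single cycle on $N$ vertices has diameter $N/2$, so after subdividing every edge $\ell$ times the diameter becomes $\Theta(N\ell)$; to hit a target diameter $D'$ while using $n$ vertices we want $N\ell = \Theta(D')$ and $N\ell = \Theta(n)$ — that is, we should take $n = \Theta(D')$. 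But this is too wasteful: it only produces hard instances with $n = \Theta(D')$, whereas we want hard instances for every $n$ and every $D' \in [\log^{1+\rho} n, n]$. So instead I would use a disjoint union of $n/D'$ copies of the subdivided-cycle gadget, each copy being a $\twocycle$-instance on $N := \Theta(D' / \ell)$ vertices blown up by $\ell$, with $\ell$ a small polynomial in $\log n$ chosen so that per-copy vertex count is $\Theta(D')$ and the diameter of each copy (hence of the whole graph) is $\Theta(D')$. Running the connectivity algorithm on this union solves all the \twocycle{} sub-instances in parallel.

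The key steps, in order, are: (i) fix the gadget — given $N$ and a target path length $\ell$, define $\mathrm{Inflate}_\ell(H)$ for $H$ a cycle (or two cycles) by replacing each edge with an internally-vertex-disjoint path of $\ell$ edges; record that a vertex-labelling of the connected components of $\mathrm{Inflate}_\ell(H)$ immediately tells you whether $H$ is one cycle or two, and that the diameter of $\mathrm{Inflate}_\ell(H)$ is $\Theta(\ell \cdot N)$. (ii) Choose parameters: set the per-copy cycle length $N$ and inflation $\ell$ so that $\ell N = \Theta(D')$; since we need $D' \ge \log^{1+\rho} n$, pick $\ell = \log^{\rho} n$ (say), giving $N = \Theta(D'/\log^\rho n) = \Omega(\log n)$, which is large enough that the \twocycle{} lower bound of $\Omega(\log N) = \Omega(\log D')$ applies to each copy (here one uses $\log N = \Theta(\log D')$, valid since $D' \le n$). (iii) Tile $t := \lfloor n / (\text{vertices per copy}) \rfloor = \Theta(n/D')$ independent copies, each built from an independently-drawn hard \twocycle{} instance; the union has $n' \le n$ vertices (pad with isolated vertices if needed) and diameter exactly $\Theta(D')$. (iv) Argue the simulation: a connectivity algorithm with $S = n^{1-\Omega(1)}$ space per machine on the union runs with the same per-machine space on each sub-instance (the sub-instances are on $\Theta(D') \le n$ vertices, so $S$ is still $n'^{1-\Omega(1)}$-ish in the relevant parameter — need to check the space bound degrades gracefully, which it does since $S$ is polynomially smaller than $n \ge$ the sub-instance size); if it runs in $r = o(\log D')$ rounds w.h.p., then by a union bound over the $\mathrm{poly}(n)$ copies it correctly labels every copy's components w.h.p., and from the labels we read off each copy's \twocycle{} answer. (v) Conclude $r = \Omega(\log D')$, i.e. the claimed bound.

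The main obstacle I expect is the bookkeeping in step (iv): making precise that an \MPC{} algorithm solving connectivity on the tiled instance really does simulate, per copy, an \MPC{} algorithm solving \twocycle{} \emph{in the regime the conjecture speaks about} — i.e. with space per machine that is polynomially sublinear \emph{in the sub-instance's own vertex count} $\Theta(D')$, not just in $n$. If $D'$ is only slightly superlogarithmic (e.g. $D' = \log^{1+\rho} n$), then $S = n^{\delta}$ is vastly larger than $D'$, and one must be careful that the \twocycle{} conjecture is still meaningful at that scale; the standard resolution is that the conjecture is robust to the space parameter as long as it is $\mathrm{poly}(D')^{1-\Omega(1)}$ — but here the natural move is the reverse: the tiling packs $\Theta(n/D')$ copies precisely so that the \emph{total} input across all copies is $\Theta(n)$ and no single machine, constrained to $n^{\delta} = (\text{tile size})^{\delta'}$ words for a suitable $\delta'<1$ when $D'$ is polynomially related to $n$ — handles more than one copy's worth of data, forcing genuine per-copy hardness. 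A secondary technical point is ensuring the diameter is \emph{exactly} $D'$ rather than $\Theta(D')$ if the theorem statement is read strictly; this is handled by attaching a short pendant path of the appropriate length to one copy, which changes neither the connectivity structure nor the asymptotics, and (if one worries about parity) by allowing $\ell$ to vary by $\pm 1$ across edges of a single gadget path. The rest — the diameter computation for an inflated cycle, the union bound over copies, and reading the \twocycle{} answer off component labels — is routine.
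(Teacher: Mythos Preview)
Your reduction runs in the wrong direction, and the difficulty you flag in step~(iv) is fatal rather than bookkeeping. You embed a \twocycle{} instance on $N = \Theta(D'/\ell)$ vertices into an $n$-vertex connectivity instance and then run the hypothetical fast algorithm with $n^{1-\Omega(1)}$ space per machine. But the \twocycle{} conjecture only asserts hardness when the per-machine space is $N^{1-\Omega(1)}$; once $n^{1-c} \geq N$---which happens whenever $D' = n^{o(1)}$, in particular throughout the range $D' \in [\log^{1+\rho} n,\, n^{o(1)}]$---a single machine can hold an entire embedded sub-instance and the conjecture says nothing. Tiling many copies does not rescue this: the conjecture is about one instance, and you would need a direct-sum theorem (solving $t$ independent \twocycle{} instances is as hard per-instance as solving one) that is not available. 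Your own parenthetical ``when $D'$ is polynomially related to $n$'' identifies exactly the regime where the argument survives, and that regime is uninteresting: for $D' = n^{\Omega(1)}$ the claimed bound $\Omega(\log D') = \Omega(\log n)$ is just the conjecture restated.

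The paper's proof goes the opposite way: start from a \twocycle{} instance on $n$ vertices (so the conjecture applies at the native space scale $n^{1-\Omega(1)}$) and use the hypothetical fast algorithm as a \emph{shrinking} subroutine. Delete each edge independently with probability $\Theta(\log n / D')$; w.h.p.\ every surviving path has length at most $D'$, so the fast algorithm finds the components of the remaining graph in $o(\log D')$ rounds; contract each component to a vertex. One such step reduces the number of edges by a factor $\Theta(D'/\log n)$, so after $O\big(\tfrac{\log n}{\log D'}\big)$ iterations the graph fits on one machine. The total round count is $O\big(\tfrac{\log n}{\log D'}\big)\cdot o(\log D') = o(\log n)$, contradicting the conjecture. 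The hypothesis $D' \geq \log^{1+\rho} n$ is exactly what makes the per-step shrink factor $D'/\log n \geq \log^{\rho} n$ nontrivial, so that the number of iterations is $O(\log n/\log D')$.
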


We note that proving any unconditional super constant lower bound for any problem in $\P$, in this regime of \MPC{}, would imply $\NC^1 \subsetneq \P$ which seems out of the reach of current techniques \cite{DBLP:conf/spaa/RoughgardenVW16}.

\smparagraph{Extention to \PRAM{}.} As a side result, we provide an implementation of our connectivity algorithm in $O(\log D + \log \log_{m/n} n)$ depth in the multiprefix CRCW \PRAM{} model, a parallel computation model that permits concurrent reads and concurrent writes. This implementation of our algorithm performs $O((m + n)(\log D + \log \log_{m/n} n))$ work and is therefore \emph{nearly} work-efficient. The following theorem states our result. We defer further elaborations on this result to Appendix~\ref{apx:paralleconnimpl}.

\begin{restatable}{theorem}{workdepthpram} \label{cl:workdepthpram}
There is a \multipram{} algorithm that given a graph with diameter $D$, identifies its connected components in $O(\log D + \log \log_{m/n} n)$ depth and $O((m+n)(\log D + \log\log_{m/n} n))$ work.
The algorithm is randomized, succeeds with high probability and does not require prior knowledge of $D$.
\end{restatable}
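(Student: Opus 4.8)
The plan is to implement the \MPC{} algorithm of Theorem~\ref{thm:main} essentially verbatim on the \multipram{}, simulating each of its $\O{\log D + \log\log_{m/n} n}$ rounds in $\O{1}$ depth and $\O{m+n}$ work; the depth and work bounds of Theorem~\ref{cl:workdepthpram} then follow by summing over the rounds. The key observation is that, once the bookkeeping that \MPC{} hides inside ``a machine performs arbitrary computation on its $n^\delta$ words'' is spelled out, every round of the algorithm is a constant number of \emph{data-parallel primitives} over arrays of total length $\O{m+n}$: segmented reductions and prefix sums (to compute, for each vertex, the minimum label among its current neighbors, its current degree, its level $\level{v}$, its budget $\budget{v}$, the count $\highdeg{v}$, and the prefix offsets used to lay out the host copies $\hosts{i}{v}$), indirect addressing (to relabel both endpoints of every edge by the leader of its component and to route messages to host copies), independent random choices (leader sampling and the hash functions), and grouping of the edge array by a key in $[n]$.

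For each primitive I would invoke the standard \multipram{} toolkit. A multiprefix step — a segmented scan or reduce, and in particular a full prefix sum of an array of $n$ values — runs in $\O{1}$ depth and work linear in the array length; relabeling and message routing are pointer-following steps of $\O{1}$ depth; leader sampling is one parallel coin flip per vertex; and grouping the current edge multiset by the leader of one endpoint is a counting sort over the universe $[n]$, which on the \multipram{} is $\O{1}$ depth and $\O{m+n}$ work. The only primitive that does not reduce directly to counting sort is removing the parallel edges created by contraction and by neighborhood exponentiation, which I would handle with a hash-based semisort into a table of size $\O{m}$: on the \multipram{} this costs $\O{1}$ expected depth and $\O{m}$ work, and its bad-event probability is absorbed into the high-probability guarantee. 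Since the budget argument underlying Theorem~\ref{thm:main} already certifies that the contracted graph together with all auxiliary per-vertex data has size $\O{m+n}$ at the start of every round, each simulated round does $\O{m+n}$ work, giving the claimed totals. Not needing $D$ in advance is immediate: the algorithm halts once no vertex changed its label, which is a single OR-reduction of depth $\O{1}$; and correctness and the $1-n^{-c}$ success probability are inherited from Theorem~\ref{thm:main}, with the extra randomness used by the semisort folded into the same union bound over the $\O{\log D + \log\log_{m/n} n}$ rounds.

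The main obstacle is thus not any single estimate but the bookkeeping of the reduction itself: one must re-derive the algorithm of Theorem~\ref{thm:main} at the granularity of array operations and check that nothing it does — in particular the load-balancing step that spreads a high-degree vertex across its host copies, and the budget-respecting neighborhood exponentiation — secretly needs a genuine comparison sort or any routine of super-constant depth. The delicate point is that the exponentiation step has to be reorganized so that the work it generates is charged against the same budgets $\budget{v}$ used in the \MPC{} analysis, even though on the \PRAM{} the $2$-hop neighborhoods are assembled by concatenation rather than by a machine ``reading'' them; getting this charging right — and, relatedly, showing that the multiset of candidate edges produced before the semisort still has size $\O{m}$ in every round — is where essentially all of the care goes. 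Once that is in place the depth and work bounds are just the per-round costs summed over the rounds; the remaining details appear in Appendix~\ref{apx:paralleconnimpl}.
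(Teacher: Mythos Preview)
Your proposal takes essentially the same approach as the paper: simulate each of the $O(\log D + \log\log_{m/n} n)$ iterations of Algorithm~\ref{alg:connectivity} (preceded by Algorithm~\ref{alg:shrinkvertices}) in $O(1)$ depth and $O(m+n)$ work using the multiprefix primitive, then sum. A few points where you diverge from the paper are worth flagging.

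First, the hash-based semisort is unnecessary. In the \multipram{} the multiprefix operation itself gives duplicate removal in $O(1)$ depth and linear work deterministically (a plus-multiprefix keyed by the edge assigns $0$ to exactly the first occurrence of each key), so there is no need to introduce extra randomness or an ``expected depth'' argument here; the paper does it this way. Second, your references to ``host copies $\hosts{i}{v}$'' and a ``load-balancing step that spreads a high-degree vertex across its host copies'' do not correspond to anything in this algorithm; the implementation stores each vertex's neighborhood as a contiguous array and never replicates vertices. Third, and most importantly, your work accounting for \addtwohop{} is vague where the paper is precise: the multiset of candidate $2$-hop edges produced for a vertex $v$ in Case~2 has size up to $b(v)^2$, not $b(v)$, so ``charging against the budgets $b(v)$'' is not enough. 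The paper bounds the per-iteration work by $\sum_{v} b(v)^2$ and then invokes Lemma~\ref{lem:sumofbudgesquare}, which shows $\sum_{v \in V_r} b_r(v)^2 \leq T$; this is the lemma doing the real work in the bound you describe as ``where essentially all of the care goes.''
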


\smparagraph{Comparison with the state-of-the-art.} The round complexity of our algorithm improves over that of the state-of-the-art algorithm by Andoni {\em et al.}~\cite{andoniparallel} that takes $O(\log D \cdot \log\log_{m/n} n)$ rounds. Note that the algorithm of \cite{andoniparallel} matches the $\Omega(\log D)$ lower bound for a very specific case: if the graph is extremely dense, i.e., $m = n^{1+\Omega(1)}$. In practice, this is usually not the case~\cite{chung2010graph,snapnets, farkas2001spectra}. In fact, it is worth noting that the main motivation behind the \MPC{} model with sublinear in $n$ space per machine is the case of sparse graphs \cite{DBLP:conf/soda/KarloffSV10}. We also note that for the particularly important case when $D = \poly \log n$, our algorithm requires only $O(\log \log n)$ rounds. This improves quadratically over a bound of $O(\log^2 \log n)$ rounds, which follows from the result of \cite{andoniparallel}.

Our result also provides a number of other qualitative advantages. For instance it succeeds with high probability
as opposed to the constant success probability of \cite{andoniparallel}. Furthermore, the running time required for identifying each connected component depends on its own diameter only. The diameter $D$ in the result of \cite{andoniparallel} is crucially the largest diameter in the graph. 

\subsection{Further Related work}\label{sec:relatedwork}

The \MPC{} model has been extensively studied over the past few years especially for graph problems. See for instance \cite{DBLP:conf/soda/KarloffSV10, DBLP:conf/isaac/GoodrichSZ11, DBLP:conf/spaa/LattanziMSV11, DBLP:conf/pods/BeameKS13, DBLP:conf/stoc/AndoniNOY14, DBLP:conf/spaa/AhnG15, DBLP:conf/spaa/RoughgardenVW16, DBLP:conf/stoc/ImMS17, DBLP:conf/nips/BateniBDHKLM17, DBLP:conf/stoc/CzumajLMMOS18, DBLP:conf/podc/GhaffariGKMR18, DBLP:conf/spaa/BehnezhadDETY17, DBLP:journals/corr/abs-1802-10297, DBLP:conf/icalp/BateniBDHM18, assadiedcs, DBLP:journals/corr/abs-1807-06701, DBLP:journals/corr/abs-1807-06251, DBLP:journals/corr/abs-1807-05374, andoniparallel, DBLP:journals/corr/abs-1805-02974, DBLP:conf/soda/BoroujeniEGHS18} and the references therein.

More relevant to our work on graph connectivity, a recent result by Assadi \emph{et al.}~\cite{DBLP:journals/corr/abs-1805-02974} implies an $O(\log \frac{1}{\lambda} + \log \log n)$ round algorithm for graphs with $\widetilde{O}(n)$ edges that have {\em spectral gap} at least $\lambda$. By a well-known bound, $D = O(\frac{\log n}{\lambda})$. Therefore, our algorithm requires $O(\log D + \log \log n) = O(\log (\frac{\log n}{\lambda}) + \log \log n) = O(\log \frac{1}{\lambda} + \log \log n)$ rounds for graphs with spectral gap at least $\lambda$. As a result, the running time bound of our algorithm is never worse than the bound due to Assadi \emph{et al.} However, as shown in \cite{andoniparallel}, there are graphs with $\frac{1}{\lambda} \geq D \cdot n^{\Omega(1)}$ making our algorithm more general.

Finally, we remark that a preprint claiming a deterministic \MPC{} connectivity algorithm requiring only $O(\log D)$ rounds has been published recently~\cite{DBLP:journals/corr/abs-1808-06705}. However, the key claim of the paper is fundamentally incorrect. Specifically, the paper first shows that the algorithm requires $O(\log k)$ rounds for a graph that is a path of length $k$ and directly concludes that this implies round complexity of $O(\log D)$ (see Lemma 3 in~\cite{DBLP:journals/corr/abs-1808-06705}). This kind of reasoning is not valid. In particular, the HashToMin algorithm~\cite{icdepaper} works in $O(\log k)$ rounds for graphs consisting of disjoint paths of length at most $k$, but has been shown to require $\omega(\log D)$ rounds for certain graphs~\cite{andoniparallel}. Apart from the fact that the proof is incorrect, we believe that the counterexample of~\cite{andoniparallel} also shows that the algorithm of~\cite{DBLP:journals/corr/abs-1808-06705} on a family of graphs with diameter $O(\log n)$, has round complexity $\Omega(\log n) = \Omega(D)$.

\subsection{Paper Organization}
Section~\ref{sec:highlevel} gives a high-level overview of our techniques.
Sections~\ref{sec:algorithm} and \ref{sec:shrinkvertices} are devoted to our algorithm, its correctness and performance.
Then, in Section~\ref{sec:hardness} we give a new lower bound for the problem of solving connectivity in the \MPC{} model.
In Appendix~\ref{apx:implementation} we describe the implementation details of the algorithm in the \MPC{} model. We remark that the implementation follows from standard techniques, but we provide it for completeness.

\section{High-Level Overview of Techniques}\label{sec:highlevel}
Recall that we assume the regime of \MPC{} with strictly sublinear space of $n^\delta$ with $\delta$ being a constant in $(0, 1)$. This local space, roughly speaking, is usually not sufficient for computing any meaningful global property of the graph within a machine. As such, most algorithms in this regime proceed by performing \emph{local operations} such as contracting edges/vertices, adding edges, etc. Note that even the direct neighbors of a high-degree vertex may not fit in the memory of a single machine, however, using standard techniques most of basic local operations can be implemented in $O(1/\delta)=O(1)$ rounds of \MPC{}. The details are given in Appendix~\ref{apx:implementation}. For the purpose of this section, we do not get into technicalities of how this can be done.

We start with a brief overview of some of the relevant techniques and results, then proceed to describe the new ingredients of our algorithm and its analysis.

\smparagraph{Graph exponentiation.} Consider a simple and well-known algorithm that connects every vertex to vertices within its 2-hop (i.e., vertices of distance 2) by adding edges. It is not hard to see that the distance between any two vertices shrinks by a factor of $2$. By repeating this procedure, each connected component becomes a clique within $O(\log D)$ steps. The problem with this approach, however, is that the total space required can be up to $\Omega(n^2)$, which for sparse graphs is much larger than $O(m)$.

Andoni \etal{}~\cite{andoniparallel} manage to improve the total space to the optimal bound of $O(m)$, at the cost of increasing the round complexity to $O(\log D \cdot \log \log_{m/n} n)$. We briefly overview this result below.

\smparagraph{Overview of Andoni \etal{}'s algorithm.} Suppose that {\em every} vertex in the graph has degree at least $d \gg \log n$. Select each vertex as a {\em leader} independently with probability $\Theta(\log n/d)$. Then contract every non-leader vertex to a leader in its 1-hop (which w.h.p. exists). This shrinks the number of vertices from $n$ to $\widetilde{O}(n/d)$. As a result, the amount of space available per remaining vertex increases to $\widetilde{\Omega}(\frac{m}{n/d}) = \widetilde{\Omega}(\frac{n d}{n/d}) \approx d^2$. At this point, a variant of the aforementioned graph exponentiation technique can be used to increase vertex degrees to $d^2$ (but not more), which implies that another application of leader contraction decreases the number of vertices by a factor of $\Omega(d^2)$. Since the available space per remaining vertex increases doubly exponentially, $O(\log \log n)$ phases of leader contraction suffice to increase it to $n$ per remaining vertex. Moreover, each phase requires $O(\log D)$ iterations of graph exponentiation, thus the overall round complexity is $O(\log D \log \log n)$.

\subsection{Our Connectivity Algorithm: The Roadmap}
The main shortcoming of Andoni \etal{}'s algorithm is that within a phase, where the goal is to increase the degree of every vertex to $d$, those vertices that have already reached degree $d$ are {\em stalled} (i.e., do not connect to their 2-hops) until all other vertices reach this degree. Because of the stalled vertices, the only guaranteed outcome of the graph exponentiation operation is increasing vertex degrees.
In particular, the diameter of the graph may remain unchanged. This is precisely why their algorithm may require up to $O(\log D \cdot \log \log n)$ applications of graph exponentiation. We note that this is not a shortcoming of their analysis. Indeed, we remark that there are family of graphs on which Andoni \etal{}'s algorithm takes $\Theta(\log D \cdot \log \log n)$ rounds.

Instead of describing our algorithm, we focus in this section on some of the properties that we expect it to satisfy, and how they suffice to get our desired round complexity. This overview should be helpful when reading the description of the algorithm in Section~\ref{sec:algdescribe}.

Our algorithm assigns \emph{budgets} to vertices.
Intuitively, a budget controls how much space a vertex can use, i.e., how much it can increase its degree.
To bound the space complexity, we will bound the sum of all vertex budgets.
In our algorithm vertices may have different budgets (differently from the algorithm of Andoni \etal{}).
This allows us to prevent the vertices from getting stalled behind each other. Overall, we have $L = \Theta(\log \log n)$ possible budgets $\beta_0, \beta_1, \ldots, \beta_L$ where $\beta_0 = O(1)$, $\beta_L = \Omega(n)$, and $\beta_i = (\beta_{i-1})^{\budgetconstant}$. We say a vertex $v$ is at {\em level} $\ell(v) = i$, if its budget $b(v)$ equals $\beta_i$.
The algorithm executes a single loop until each connected component
becomes a clique. We call a single execution of this loop an~\emph{iteration} which can be implemented in $O(1)$ rounds of \MPC{}.

\vspace{0.2cm}
\noindent \textbf{Property 1} (see Lemma~\ref{lem:2hoporlevelincrease} for a formal statement)\textbf{.} For any two vertices $u$ and $v$ at distance exactly 2 at the beginning of an iteration of the algorithm, after the next 4 iterations, either their distance decreases to 1, or the level of both vertices increases by at least one.
\vspace{0.2cm}

We call every 4 iterations of the algorithm a {\em super-iteration}. Property 1 guarantees that if a vertex does not get connected to every vertex in its 2-hop within a super-iteration, its level must increase.\footnote{In this section, for simplicity and to convey the main intuitions, we ignore the changes to the graph's structure caused by vertices being merged together.} Recall, on the other hand, that the maximum level of any vertex is at most $O(\log \log n)$. As such, every vertex resists getting connected to those in its 2-hop for at most $O(\log \log n)$ super-iterations. However, somewhat counter-intuitively, this observation is (provably) not sufficient to guarantee an upper bound of $O(\log D + \log \log n)$ rounds. Our main tool in resolving this, is maintaining another property.

\vspace{0.2cm}
\noindent \textbf{Property 2} (see Observation~\ref{obs:levellargerthanneighbors} for a formal statement)\textbf{.} If a vertex $v$ is neighbor of a vertex $u$ with $\ell(u) > \ell(v)$, then by the end of the next iteration, the level of $v$ becomes at least $\ell(u)$.
\vspace{0.2cm}

The precise proof of sufficiency of Properties 1 and 2 is out of the scope of this section. Nonetheless, we provide a proof sketch with the hope to make the actual arguments easier to understand. See Lemma~\ref{lem:lplusloglog} for the formal statement and its proof.

\begin{proof}[Proof sketch of round complexity.] Fix two vertices $u$ and $v$ in one connected component of the original graph and let $P_1$ be the shortest path between them. As the vertices connect to their 2-hops and get closer to each other, we drop some of the vertices of $P_1$ while ensuring that the result is also a valid path from $u$ to $v$. More precisely, we maintain a path $P_i$ by the end of each super-iteration $i$ which is obtained by replacing some subpaths of $P_{i-1}$ of length at least two by single edges. We say that the interior vertices of the replaced subpaths are dropped.

Our goal is to show that for $R := O(\log D + \log \log n)$, path $P_R$ has $O(1)$ length, thus dropping the diameter of the whole graph to $O(1)$ which is trivially solved in $O(1)$ rounds by our algorithm. To show this, we require a potential-based argument. Suppose that we initially put one coin on every vertex of $P_1$, thus we have at most $D+1$ coins. Let $P_i = (u_1, \ldots, u_j, \ldots, u_k)$ be the path at the end of super-round $i$. As we construct $P_{i+1}$ from $P_i$, any vertex $u_j$ that is dropped from $P_i$, passes its coins evenly to vertices in $\{u_{j-2}, u_{j-1}, u_{j+1}, u_{j+2}\}$ that exist and survive to $P_{i+1}$ (if none of them survive the coins are discarded).\footnote{In an earlier version of this paper, we passed on the coins only to direct neighbors in the path. As pointed out by \cite{DBLP:journals/corr/abs-2003-00614} this may cause a parity-type problem for one of the last two vertices in the path. The authors in \cite{DBLP:journals/corr/abs-2003-00614} already propose another fix. But passing on the coins to the two vertices to the right and two vertices to the left, as we do in this version of the paper, also easily resolves this issue. We note that this change appears only in the analysis, and the algorithm and all the claimed bounds remain unchanged.\label{fnlabel}} Moreover, we construct $P_{i+1}$ from $P_i$ such that it satisfies the following property, which we call invariant 1: If the level of a vertex $u_j \in P_i$ within super-iteration $i+1$ does not increase, there is a vertex in $\{u_{j-2}, u_{j-1}, u_j, u_{j+1}, u_{j+2}\}$ that is dropped. This is guaranteed to be possible due to Property~1: Observe that at least one of $u_{j-2}$ and $u_{j+2}$ should belong to $P_i$ otherwise the path has length $\leq 2$ and is already small. Let us suppose w.l.o.g. that $u_{j-2}$ exists. Now if either of $u_{j-2}$ or $u_j$ is dropped from $P_i$ the condition is satisfied, otherwise by Property~1, $u_{j-2}$ and $u_j$ should be directly connected after super-iteration $i+1$ and  and it is thus safe to drop $u_{j-1}$ and ensure $P_{i+1}$ remains to be a path.

Finally, we use Property 2 to prove invariant 2: In any path $P_i$, every vertex of level $j$ has at least $(1.25)^{i - j}$ coins. That is, we have {\em more} coins on the vertices that have {\em lower} levels. Note that this is sufficient to prove the round complexity. For, otherwise, if $|P_R| > 2$, due to the fact that the level of every vertex is at most $O(\log \log n)$, there should remain a vertex in $P_R$ with at least
$$
(1.25)^{R-j} \geq (1.25)^{\Theta(\log D + \log\log n) - O(\log\log n)} \geq (1.25)^{4\log D} \gg D+1
$$
coins, while we had only $D+1$ coins in total.  Property 2 is useful in the proof of invariant 2 in the following sense: If a low-level vertex $w \in P_i$ survives to $P_{i+1}$ without increasing its level, its dropped 2-hop neighbor (which exists by invariant 1) cannot have a higher level than $w$ by Property~2 (since their distance is at most two and a super-iteration includes four iterations), and thus passes enough coins to $w$.
\end{proof}

\section{Main Algorithm: Connectivity with $O(m + n\poly\log n)$ Total Space}\label{sec:algorithm}

In this section, we describe an $O(\log D + \log\log_{T/n} n)$ round connectivity algorithm assuming that the total available space is $T \geq m + n \log^\alpha n$ where $\alpha$ is some desirably large constant. We later show how to improve the total space to the optimal bound of $O(m)$ in Section~\ref{sec:shrinkvertices}. We start with description of the algorithm in Section~\ref{sec:algdescribe} and proceed to its analysis in Sections~\ref{sec:alganalysis}, \ref{sec:round}, and \ref{sec:space}.

\begin{remark}
	For simplicity, we describe an algorithm that succeeds with probability $1-1/n^{\Omega(1)}$. One can boost the success probability to $1-1/n^c$ by changing some parameters in the algorithm or by simply running $O(c)$ independent copies of it in parallel.
\end{remark}

\subsection{The Algorithm}\label{sec:algdescribe}

\newcommand{\connectivity}[0]{\mathsf{FindConnectedComponents}}
\newcommand{\addtwohop}[0]{\ensuremath{\mathsf{Connect2Hops}}}
\newcommand{\relabelinter}[0]{\ensuremath{\mathsf{RelabelInterLevel}}}
\newcommand{\relabelintra}[0]{\ensuremath{\mathsf{RelabelIntraLevel}}}
\newcommand{\applyrelabelings}[0]{\ensuremath{\mathsf{ApplyRelabelings}}}

The algorithm consists of a number of \emph{iterations}, each of which calls three subroutines named $\addtwohop$, $\relabelinter$, and $\relabelintra$.\footnote{We note that the relabeling subroutines are close to the leader contraction operation we discussed in Section~\ref{sec:highlevel}. However, we use a different terminology to emphasize the difference in handling {\em chains}. See Figure~\ref{fig:chains}.} Each iteration will be implemented in $O(1)$ rounds of \MPC{} and we later show that $O(\log D + \log\log_{T/n} n)$ iterations are sufficient. We first formalize the overall structure of the algorithm as Algorithm~\ref{alg:connectivity}, then continue to describe the subroutines of each iteration one by one. 

\begin{tboxalg}{$\connectivity(G(V, E))$}\label{alg:connectivity}
	\begin{enumerate}[ref={\arabic*}, topsep=0pt,itemsep=0ex,partopsep=0ex,parsep=1ex, leftmargin=*]
		\item To any vertex $v$, we assign a {\em level} $\level{v} \gets 0$, a {\em budget} $\budget{v} \gets (\frac{T}{n})^{1/2}$, and a set $C(v) \gets \{ v \}$ which throughout the algorithm, denotes the set of vertices that $v$ {\em corresponds to}. Moreover, every vertex is initially marked as {\em active} and we set $\next(v) \gets v$ for every vertex $v$.\label{line:initialize}
		\item Repeat the following steps until each remaining connected components becomes a clique:
			\begin{enumerate}[ref={(\alph*)}, topsep=5pt,itemsep=0ex]
				\item $\addtwohop{}(G, b, \ell)$
				\item $\relabelinter{}(G, b, \ell, C, \next)$
				\item $\relabelintra{}(G, b, \ell, C)$
			\end{enumerate}
		\item For every remaining connected component $\mathcal{C}$ corresponds to one of the connected components of the original graph whose vertex set is $\cup_{v \in \mathcal{C}} C(v)$.
	\end{enumerate}
\end{tboxalg}
\begin{titledtbox}{$\addtwohop{}(G, b, \ell)$}
	For every \underline{active} vertex $v$:
	\begin{enumerate}[topsep=2pt,itemsep=0ex,partopsep=0ex,parsep=1ex]
		\item Define $H(v) \eqdef \{u \, | \, \exists w \text{ s.t. } w \in N(u) \cap N(v), \level{u} = \level{w} = \level{v}\}$.
		\item Let $d_v$ be the number of vertices currently connected to $v$ that have level at least $\level{v}$. Pick $\min\{ \budget{v} - d_v, |H(v)|\}$ arbitrary vertices in $H(v)$ and connect them to $v$.
	\end{enumerate}
\end{titledtbox}

\begin{titledtbox}{$\relabelinter(G, b, \ell, C, \next)$}
	\begin{enumerate}[ref={\arabic*}, topsep=0pt,itemsep=-0.5ex,partopsep=0ex,parsep=1ex, leftmargin=*]
		\item For every (active or inactive) vertex $v$:
		\begin{enumerate}[topsep=-1pt,itemsep=-1ex,partopsep=0ex,parsep=1ex]
			\item Let $h(v)$ be the neighbor of $v$ with the highest level with ties broken arbitrarily.
			\item If $\level{h(v)} > \level{v}$, mark $v$ as {\em inactive} and set $\next(v) \gets h(v)$.
		\end{enumerate}
		\item Replace every edge $\{u, v\}$ in the graph with $\{\next(u), \next(v)\}$.
		\item Remove duplicate edges and self-loops. \label{line:removeduplicateandselfloops}
		\item For every vertex $v$, set $I(v) \gets \cup_{u: \next(u)=v} C(u)$.
		\item For every vertex $v$, if $v$ is active, set $C(v) \gets C(v) \cup I(v)$ and if $v$ is inactive set $C(v) \gets I(v)$.\label{line:passc}
		\item If an inactive vertex has become isolated, remove it from the graph.
	\end{enumerate}
\end{titledtbox}
\begin{titledtbox}{$\relabelintra(G, b, \ell, C)$}
	\begin{enumerate}[ref={\arabic*}, topsep=0pt,itemsep=-0.5ex,partopsep=0ex,parsep=1ex, leftmargin=*]
		\item Mark an active vertex $v$ as ``saturated''  if it has more than $\budget{v}$ active  neighbors that have the same level as $v$.\label{line:saturated1} 
		\item If an active vertex $v$ has a neighbor $u$ with $\ell(u) = \ell(v)$ that is marked as saturated, $v$ is also marked as saturated.\label{line:saturated2}
		\item Mark every saturated vertex $v$ as a ``leader'' independently with probability $\min\{\frac{3\log n}{\budget{v}}, 1\}$.
		\item \label{line:levelincrease} For every leader vertex $v$, set $\level{v} \gets \level{v}+1$ and $\budget{v} \gets \budget{v}^{\budgetconstant}$.
		\item Every non-leader saturated vertex $v$ that sees a leader vertex $u$ of the same level (i.e., $\level{u}=\level{v}$) in its 2-hop (i.e., $\dist(v, u) \leq 2$), chooses one as its leader arbitrarily.
		\item Every vertex is contracted to its leader. That is, for any vertex $v$ with leader $u$, every edge $\{v, w\}$ will be replaced with an edge $\{u, w\}$ and all vertices in set $C(v)$ will be removed and added to set $C(u)$. We then remove vertex $v$ from the graph.
		\item Remove duplicate edges or self-loops and remove saturated/leader flags from the vertices.
	\end{enumerate}
\end{titledtbox}

Within the $\addtwohop$ subroutine, every active vertex $v$ attempts to connect itself to a subset of the vertices in its 2-hop. If there are more candidates than the budget of $v$ allows, we discard some of them arbitrarily. To formalize this, we use $N(u)$ to denote the neighbors of a vertex $u$.

Next, in the $\relabelinter$ subroutine, every vertex $v$ that sees a vertex $u$ of a higher level in its neighborhood, is ``relabeled'' to that vertex. That is, any occurrence of $v$ in the edges is replaced with $u$. As a technical point, it might happen that we end up with a chain $v_1 \to v_2 \to v_3 \to \ldots$ of relabelings where vertex $v_1$ has to be relabeled to $v_2$, $v_2$ has to be relabeled to $v_3$, and so on. In each iteration of the algorithm, we \emph{only apply the direct relabeling of every vertex}, that is $v_1$ ends up with label $v_2$,  $v_2$ ends up with label $v_3$, etc. An example of this is illustrated in Figure~\ref{fig:chains}.

\begin{figure}[t]
  \centering
  \includegraphics[scale=0.75]{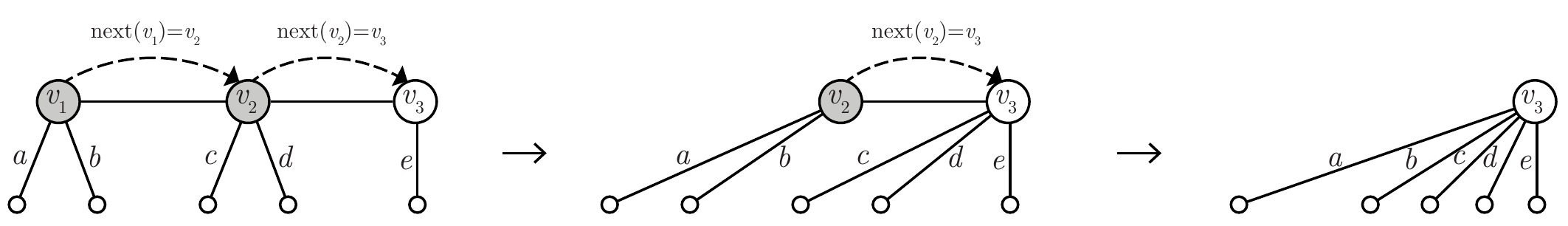}
  \vspace{-0.2cm}
  \caption{Algorithm~\ref{alg:connectivity} does not traverse ``relabeling chains''. In the first iteration, vertex $v_1$ is relabeled to $v_2$ and $v_2$ is relabeled to $v_3$.
After two iterations, both $v_1$ and $v_2$ are contracted to $v_3$. Note that the edge $\{v_2, v_3\}$ of iteration 1 will become a self-loop after applying the relabeling $v_2 \to v_3$ and thus will be removed. However, the edge $\{v_2, v_3\}$ that still remains in the second iteration is the result of applying relabelings $v_1 \to v_2$ and $v_2 \to v_3$ on edge $\{v_1, v_2\}$.}
  \label{fig:chains}
\end{figure}

Finally, the last subroutine $\relabelintra$, is where we increase the budgets/levels.

\subsection{Analysis of Algorithm~\ref{alg:connectivity} -- Correctness}\label{sec:alganalysis}

\paragraph{Correctness.} We first show that the algorithm indeed computes the connected components of the given graph.
The following lemma follows directly from the fact that Algorithm~\ref{alg:connectivity} does not split or merge connected components.

\begin{lemma}
	Let $\mathcal{S}_1, \ldots, \mathcal{S}_k$ be the connected components of $G$ at the end of Algorithm~\ref{alg:connectivity}.
Then, the family of sets $\{\cup_{v \in \mathcal{S}_i} C(v) \mid i \in [k]\}$ is equal to the family of vertex sets of connected components of the original graph.
\end{lemma}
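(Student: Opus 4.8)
The plan is to show the two inclusions implicit in the claimed equality by tracking the single structural invariant that governs the whole algorithm: at every point in the execution, the sets $\{C(v) : v \in V(G_{\text{current}})\}$ form a partition of $V(G_{\text{original}})$, and two original vertices $x, y$ lie in the same part $C(v)$ (or in parts $C(v), C(w)$ with $v, w$ in the same current component) if and only if they are connected in $G_{\text{original}}$. First I would set up the invariant precisely as a conjunction: (I1) the family $\{C(v)\}_{v \in V(G)}$ is a partition of the original vertex set; (I2) for every current edge $\{v,w\}$ there exist $x \in C(v)$, $y \in C(w)$ adjacent in $G_{\text{original}}$ — more usefully, the quotient map sending each original vertex to the current vertex whose $C$-set contains it is a well-defined graph homomorphism onto $G_{\text{current}}$ that is surjective on vertices; and (I3) conversely, if $x, y$ are in the same original component then the current vertices carrying them are in the same current component. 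Granting (I1)–(I3) as a maintained invariant, the lemma is immediate: at termination $C(v)$-sets of a component $\mathcal{S}_i$ union to a set that is (by I2/I3) exactly one original component, and every original component is hit exactly once (by I1 and surjectivity), so $\{\cup_{v \in \mathcal{S}_i} C(v)\}$ is precisely the family of original components.

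Next I would verify the invariant by induction on iterations, checking it is preserved by each of the three subroutines. The base case is line~\ref{line:initialize}: $C(v) = \{v\}$ gives the trivial partition, the identity homomorphism, and $G = G_{\text{original}}$, so all three hold. For $\addtwohop$: this only \emph{adds} edges $\{v, u\}$ where $u \in H(v)$, meaning $u$ and $v$ share a common neighbor $w$; since by (I2) the homomorphism already witnesses paths, adding such an edge cannot create an edge between currently-disconnected vertices, and it creates no new original-vertex adjacency obligation beyond what a length-two path already certifies — so (I1) is untouched and (I2), (I3) are preserved (in fact the new edge is ``redundant'' at the level of connectivity). For $\relabelinter$: here the map $v \mapsto \next(v)$ is a partial contraction; step~4 sets $I(v) = \cup_{u : \next(u) = v} C(u)$ and step~\ref{line:passc} reassigns $C(v)$ so that the new $C$-sets are exactly the unions of old $C$-sets over $\next$-fibers — this is a coarsening of a partition, hence still a partition (I1); step~2 replaces each edge $\{u,v\}$ by $\{\next(u), \next(v)\}$, which is exactly pushing the homomorphism forward along $\next$, so surjectivity and the homomorphism property (I2) carry over, and removing self-loops/duplicates (steps~3,~6) only deletes edges whose endpoints got identified, which does not disconnect anything (I3). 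The only subtlety is the ``chains'' issue flagged around Figure~\ref{fig:chains}: when $\next(v_1) = v_2$ but also $\next(v_2) = v_3$, only a single direct relabeling is applied per iteration, so $v_2$ survives this iteration carrying $I(v_2) \supseteq C(v_1) \cup C(v_2)$; one must check that the edge $\{v_1,v_2\} \mapsto \{v_2, v_3\}$ correctly keeps $v_2$ and $v_3$ adjacent so that the contraction of $v_2$ into $v_3$ in a later iteration still preserves (I3) — this is exactly what the figure's caption explains, and I would cite it. For $\relabelintra$: the leader contraction in step~6 merges $C(v)$ into $C(u)$ along each ``$v$ contracted to leader $u$'' relation, again a coarsening (I1), and replaces edges $\{v,w\}$ by $\{u,w\}$, again a pushforward of the homomorphism (I2), with self-loop/duplicate removal harmless for connectivity (I3); the level/budget updates in steps~3–5 do not touch $C$ or the edge set's connectivity structure at all.

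The termination clause of Algorithm~\ref{alg:connectivity} (``repeat until each remaining connected component becomes a clique'') guarantees the loop halts, and on a graph that is a disjoint union of cliques the reading in step~3 is exactly the one the lemma describes; so once the invariant is established the conclusion follows without further work. I expect the \textbf{main obstacle} to be the bookkeeping in the $\relabelinter$ step — specifically, articulating the homomorphism-pushforward argument cleanly in the presence of untraversed relabeling chains, and confirming that across consecutive iterations the ``leftover'' edges (like the persistent $\{v_2, v_3\}$ in Figure~\ref{fig:chains}) are exactly the ones needed so that no original-component connectivity is ever lost or spuriously gained. Everything else (the clique-termination reading, the base case, the $\addtwohop$ and $\relabelintra$ checks) is routine once the invariant is stated in the right form. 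I would therefore front-load the effort into a careful lemma-statement of the invariant and a precise description of the quotient homomorphism, after which each subroutine check is a one- or two-line verification.
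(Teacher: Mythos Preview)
Your proposal is correct and follows essentially the same approach as the paper: induction on iterations, with the base case $C(v)=\{v\}$, followed by a case-check that each of $\addtwohop$, $\relabelinter$, and $\relabelintra$ preserves the connectivity structure in both directions (no distinct original components get merged, no single original component gets split), together with the observation that the $C$-sets are correctly passed along when vertices are relabeled or contracted. Your invariant framework (I1)--(I3) is a more explicit packaging of what the paper argues informally; the only minor wobble is your first phrasing of (I2), since after $\addtwohop$ a current edge $\{v,w\}$ need not be witnessed by an \emph{adjacent} pair in the original graph (only by a pair at distance $\le 2$)---but you immediately pivot to the quotient-map/homomorphism framing and in any case check both directions separately in each subroutine, so nothing is actually missing.
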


\begin{proof}
We use induction to show that the claim is true at the end of each iteration $r$
 of Algorithm~\ref{alg:connectivity}.
Before we start the algorithm, i.e., when $r=0$, for every vertex $v$ we have $C(v) = \{v\}$. Therefore, clearly the base case holds. For the rest of the proof, it suffices to show that the three steps of $\addtwohop$, $\relabelinter$, and $\relabelintra$ maintain this property.
	
	Within the $\addtwohop$ subroutine, we only add edges to the graph. The only way that this operation may hurt our desired property, is if the added edges connect two different connected components of the previous iteration. However, every added edge is between two vertices of distance at most 2 (and thus in the same component) implying that this cannot happen.
	
	For the $\relabelinter$ subroutine, we first have to argue that the relabelings do not change the connectivity structure of the graph. It is clear that two disconnected components cannot become connected since each vertex is relabeled to another vertex of the same connected component. Moreover, we have to argue that one connected component does not become disconnected. For this, consider a path between two vertices $u$ and $v$ of the same component. After relabeling vertices, there is still a walk between the corresponding vertices to $u$ and $v$, thus they remain connected. Finally, observe that once a vertex $v$ is relabeled to some vertex $u$, in Line~\ref{line:passc} of the $\relabelinter$ subroutine, we add every vertex of $C(v)$ to $C(u)$. This ensures that for every component $\mathcal{S}$, the set $\cup_{v \in \mathcal{S}} C(v)$ does not lose any vertex and thus remains unchanged.
	
	Similarly, in the $\relabelintra$ step, the vertices only get contracted to the leaders in their 2-hop and once removed from the graph, a vertex $v$  passes every element in $C(v)$ to $C(u)$ of another vertex $u$ in its component, thus the property is maintained, concluding the proof.
\end{proof}

\subsection{Analysis of Algorithm~\ref{alg:connectivity} -- Round Complexity}\label{sec:round}

In order to pave the way for future discussions, we start with some definitions. We use $G_r=(V_r, E_r)$ to denote the resulting graph by the end of iteration $r$ of Algorithm~\ref{alg:connectivity}. Therefore, we have $V = V_0 \supseteq V_1 \supseteq V_2 \supseteq \ldots$ as we do not add any vertices to the graph. Moreover, for any vertex $v \in V$ and any iteration $r \geq 0$, we define $\next_r(v)$ to be the vertex $w \in V_r$ such that $v \in C(w)$ by the end of iteration $r$. That is, $\next_r(v)$ is the vertex that {\em corresponds} to $v$ by the end of iteration $r$. 

\begin{observation}
	Let $v \in V_r$ be an active vertex. Then for any $r' \leq r$, we have $\next_{r'}(v) = v$.
\end{observation}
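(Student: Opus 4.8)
The plan is to prove, by induction on the iteration count $r'$, the stronger invariant that at the end of iteration $r'$ every \emph{active} vertex $v$ satisfies $v \in C(v)$ and $v \notin C(w)$ for all $w \ne v$. The Observation then follows immediately: $\next_{r'}(v)$ is by definition the unique $w \in V_{r'}$ with $v \in C(w)$, so the invariant gives $\next_{r'}(v) = v$; and since $V_0 \supseteq V_1 \supseteq \cdots$ we have $v \in V_{r'}$ for all $r' \le r$, while the active set only shrinks over iterations, so a vertex active at the end of iteration $r$ was active at the end of every earlier iteration (and at initialization).

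First I would record two easy structural facts. (i) The set of active vertices is non-increasing across iterations: only $\relabelinter$ changes a vertex's active/inactive status, and only from active to inactive; $\addtwohop$ does not touch it, and $\relabelintra$ only clears the ``saturated''/``leader'' flags and deletes contracted vertices. (ii) As long as $v$ is active, $\next(v) = v$: this pointer is initialized to $v$ and reassigned only when $v$ is marked inactive in $\relabelinter$. In particular, if $v$ is active at the end of iteration $r'$ then it was active throughout iteration $r'$, hence $\next(v) = v$ during that entire iteration, and $v$ is not among the vertices removed during iteration $r'$ (neither as an isolated inactive vertex in $\relabelinter$ nor as a contracted non-leader in $\relabelintra$), since a removed vertex is not in $V_{r'}$.

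For the inductive step I would walk through the three subroutines of iteration $r'$, assuming the invariant at the end of iteration $r'-1$ (the base case being the initialization, where $C(v)=\{v\}$ for every active $v$). The subroutine $\addtwohop$ leaves all $C$-sets untouched. For $\relabelinter$: since $\next(v)=v$ for our active $v$, we get $C(v) \subseteq I(v)$, so after the update $C(v) \gets C(v) \cup I(v)$ we still have $v \in C(v)$; and if $v \in I(w) = \bigcup_{u : \next(u)=w} C(u)$ for some $w$, then $v \in C(u)$ for some $u$, which by the inductive hypothesis forces $u = v$, hence $\next(v) = w$, hence $w = v$ — so no $C$-set other than $C(v)$ acquires $v$, and by hypothesis the old $C$-sets of other active vertices did not contain $v$ either. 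For $\relabelintra$: the only way some $C(u)$ with $u \ne v$ could gain $v$ is if $v \in C(x)$ for a contracted vertex $x$, but by the state just established this forces $x = v$, contradicting fact (ii) that $v$ is not contracted away in iteration $r'$; meanwhile $C(v)$ can only grow, so it still contains $v$. This closes the induction.

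The only place that needs genuine care is the bookkeeping in the $\relabelinter$ case, where $C$-sets are redistributed along the (truncated) relabeling chains via the $I(\cdot)$ operation; the key leverage there is precisely the inductive ``$v$ occurs only in $C(v)$'' half of the invariant, combined with fact (ii) that active vertices point to themselves. Everything else is routine.
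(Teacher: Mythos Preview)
Your proof is correct. The paper actually states this Observation without any proof at all, treating it as immediate from the algorithm's description: a vertex only stops ``corresponding to itself'' when it is relabeled (in $\relabelinter$, where it is simultaneously marked inactive) or contracted away (in $\relabelintra$, where it leaves $V_r$ entirely), so an active vertex in $V_r$ has undergone neither and hence $\next_{r'}(v)=v$ throughout.

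Your route via the $C$-set invariant is a faithful formalization of exactly this intuition, just carried out at the level of the definition of $\next_{r'}$ rather than at the level of the informal ``corresponds to'' language. The two structural facts you isolate (activity only decreases; the internal pointer $\next(v)$ stays $v$ while $v$ is active) are precisely the mechanism the paper is relying on implicitly. So there is no real divergence in approach --- you have simply written out the bookkeeping the paper skips.
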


For any iteration $r \geq 0$ and any vertex $v \in V$, we respectively use $\ell_r(v)$, $b_r(v)$ and $C_r(v)$ to denote the value of $\ell(\next_r(v))$, $b(\next_r(v))$ and $C(\next_r(v))$ by the end of iteration $r$. Furthermore, for any two vertices $v, u \in V$, we use $\dist_{r}(u, v)$ to denote the length of the shortest path between $\next_r(u)$ and $\next_r(v)$ in graph $G_r$.

The following claim implies that the corresponding level of a vertex is non-decreasing over time.

\begin{claim}\label{cl:levelnondecreasing}
	For any vertex $v \in V$ and any $r \geq r'$, we have $\ell_{r}(v) \geq \ell_{r'}(v)$.
\end{claim}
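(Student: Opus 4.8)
The plan is to prove Claim~\ref{cl:levelnondecreasing} by induction on iterations, tracking how the quantity $\ell_r(v) = \ell(\next_r(v))$ can change from one iteration to the next. The key observation is that the only subroutine that modifies the level of any vertex is $\relabelintra$ (in Line~\ref{line:levelincrease}), and there it only ever \emph{increments} a level; the subroutines $\addtwohop$ and $\relabelinter$ do not touch levels at all. So the only way $\ell_r(v)$ could decrease relative to $\ell_{r-1}(v)$ is if the correspondent of $v$ changes, i.e.\ $\next_r(v) \neq \next_{r-1}(v)$, and the new correspondent happens to have a strictly smaller level than the old one. Thus the entire content of the claim is: whenever a vertex gets relabeled/contracted to a new correspondent during an iteration, the new correspondent has level at least that of the old one.

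First I would reduce to a single iteration: it suffices to show $\ell_r(v) \geq \ell_{r-1}(v)$ for every $v$ and every $r \geq 1$, and then chain these inequalities. Fix $v$ and let $w = \next_{r-1}(v)$, so $\ell_{r-1}(v) = \ell(w)$ at the end of iteration $r-1$. If $w$ survives iteration $r$ as an active vertex and is not contracted away, then $\next_r(v) = w$ and its level only possibly increased (via Line~\ref{line:levelincrease}), so $\ell_r(v) \geq \ell_{r-1}(v)$. The interesting cases are the two ways $w$'s correspondent can move:

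(i) In $\relabelinter$, vertex $w$ is relabeled to $h(w)$ precisely when $\level{h(w)} > \level{w}$, i.e.\ strictly higher level. So the new correspondent of $v$ becomes $h(w)$ (or whatever $h(w)$ is in turn relabeled to, but by the chain structure the label applied in this iteration is the direct one $h(w)$), and $\level{h(w)} > \level{w} = \ell_{r-1}(v)$, giving the desired inequality — in fact a strict increase. (ii) In $\relabelintra$, vertex $w$ is contracted into a leader $u$ chosen in its $2$-hop with $\level{u} = \level{w}$; the leader then had its level bumped up by one in Line~\ref{line:levelincrease}. Either way, the level of $w$'s new correspondent $u$ is at least $\level{w}$. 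One must be slightly careful about ordering within the iteration: $\relabelinter$ runs before $\relabelintra$, so after the former the level of $v$'s correspondent is $\geq \ell_{r-1}(v)$, and after the latter it can only increase further by the same argument applied with this intermediate correspondent playing the role of $w$.

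I would formalize this by introducing the intermediate correspondent of $v$ after $\relabelinter$ within iteration $r$, call it $w'$, and showing (a) $\level{w'} \geq \ell_{r-1}(v)$ using the $\relabelinter$ relabeling rule, and (b) $\ell_r(v) = \level{\next_r(v)} \geq \level{w'}$ using that in $\relabelintra$ contraction is to a same-level leader whose level is then incremented, while non-contracted/non-relabeled vertices keep or increase their level. The main obstacle, and the only place that needs genuine care rather than bookkeeping, is handling the relabeling chains correctly: when $w$ is relabeled to $h(w)$ but $h(w)$ is \emph{itself} relabeled in the same iteration, I must make sure I am comparing against the right vertex. The resolution is exactly the ``direct relabeling only'' semantics described around Figure~\ref{fig:chains}: within one iteration $\next$ is updated only one hop, so $w' = h(w)$ and the inequality $\level{h(w)} > \level{w}$ is immediate from the rule that triggered the relabeling; the fact that $h(w)$ may move again is a concern only for the \emph{next} iteration, which is already covered by the induction. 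A secondary minor point is the possibility that $\next_{r-1}(v)$ was inactive; but inactive vertices are never relabeled again (they only get absorbed when isolated and removed), so their level is frozen and the claim holds trivially for them.
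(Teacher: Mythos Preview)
Your approach is essentially the same as the paper's: induction on $r$, reducing to a single iteration, and case analysis on whether the correspondent of $v$ changes (and if so, via which subroutine), noting that in each case the new correspondent's level is at least the old one. Your exposition is in fact more detailed than the paper's (the intermediate correspondent $w'$ and the discussion of chains are made explicit), but the logical structure is identical.

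One minor factual slip that does not actually break anything: in your last paragraph you assert that ``inactive vertices are never relabeled again.'' This is not true---$\relabelinter{}$ explicitly iterates over \emph{every (active or inactive)} vertex, so an inactive $w$ with a strictly higher-level neighbor will again set $\next(w) \gets h(w)$. Fortunately your case (i) already handles this uniformly (the condition $\ell(h(w)) > \ell(w)$ is what triggers the relabeling, independent of the active flag), so the argument goes through unchanged once you drop that incorrect aside.
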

\begin{proof}
	We use induction on $r$. For the base case with $r = r'$, we clearly have $\ell_r(v) = \ell_{r'}(v)$. Suppose, by the induction hypothesis, that $\ell_{r-1}(v) \geq \ell_{r'}(v)$. If in iteration $r$ of the algorithm, vertex $\next_{r-1}(v)$ is not relabeled, i.e., if we have $\next_r(v) = \next_{r-1}(v)$, then $\ell_r(v) = \ell_{r-1}(v)$ by definition and the fact that the level of a particular vertex cannot decrease in one iteration. Therefore, by the induction hypothesis, we have $\ell_r(v) = \ell_{r-1}(v) \geq \ell_{r'}(v)$. On the other hand, if vertex $v$ is relabeled in iteration $r$, i.e., if $\next_r(v) \not = \next_{r-1}(v)$, then it suffices to show that it is relabeled to a vertex whose level is higher. This is clear from description of the algorithm. A vertex that gets relabeled within the $\relabelinter$ subroutine, does so if and only if the new vertex has a higher level. Similarly, in within the $\relabelintra$ subroutine of Algorithm~\ref{alg:connectivity}, any vertex $v$ that is contracted to another vertex does so if it is a marked saturated vertex of the same level, whose level increases by the end of the iteration.
\end{proof}

The next claim shows, in a similar way, that the distance between the corresponding vertices of two vertices $u$ and $v$ is non-increasing over time.

\begin{claim}\label{cl:distnonincreasing}
	For any two vertices $v, u \in V$ and any $r \geq r'$, we have $\dist_{r}(u, v) \leq \dist_{r'}(u, v)$.
\end{claim}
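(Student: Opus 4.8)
The plan is to follow exactly the inductive scheme used for Claim~\ref{cl:levelnondecreasing}. We induct on $r$. The base case $r=r'$ is trivial, as $\dist_r(u,v)=\dist_{r'}(u,v)$. For the inductive step, assuming $\dist_{r-1}(u,v)\le\dist_{r'}(u,v)$, it suffices to prove the one-iteration bound $\dist_r(u,v)\le\dist_{r-1}(u,v)$, since then $\dist_r(u,v)\le\dist_{r-1}(u,v)\le\dist_{r'}(u,v)$.

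To establish the one-iteration bound, fix a shortest path $w_0,w_1,\dots,w_k$ in $G_{r-1}$ with $w_0=\next_{r-1}(u)$, $w_k=\next_{r-1}(v)$ and $k=\dist_{r-1}(u,v)$, and follow it through the three subroutines of iteration $r$. The subroutine $\addtwohop$ only inserts edges (and discards parallel edges and self-loops), so this path still exists afterwards and no set $C(\cdot)$ changes. In $\relabelinter$, each vertex $x$ is relabeled to $\next(x)$ — a vertex of level at least $\level{x}$ in the same component, equal to $x$ whenever $x$ is not relabeled in this iteration — and every edge $\{x,y\}$ becomes $\{\next(x),\next(y)\}$, with self-loops deleted. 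Hence the sequence $\next(w_0),\dots,\next(w_k)$, after collapsing equal consecutive entries, is a walk of length at most $k$ in the resulting graph; moreover, the update of the sets $C(\cdot)$ in $\relabelinter$ (Lines~4--\ref{line:passc}) moves $u$ into $C(\next(w_0))$ and $v$ into $C(\next(w_k))$, so $\next(w_0)$ and $\next(w_k)$ are precisely the vertices that $u$ and $v$ correspond to after this subroutine. The subroutine $\relabelintra$ behaves the same way: each vertex $x$ is replaced by its leader $L(x)$ (again $L(x)=x$ unless $x$ is contracted), every edge $\{x,y\}$ becomes $\{L(x),L(y)\}$, and the merging of the $C(\cdot)$ sets places $u$ into $C(L(\next(w_0)))=C(\next_r(u))$ and $v$ into $C(\next_r(v))$. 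Composing the two maps, the image of $w_0,\dots,w_k$ under $L\circ\next$ is a walk of length at most $k$ in $G_r$ joining $\next_r(u)$ and $\next_r(v)$, so $\dist_r(u,v)\le k=\dist_{r-1}(u,v)$, as needed.

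The only point that needs care — and the main, though minor, obstacle — is verifying that the relabeling map of $\relabelinter$ and the leader map of $\relabelintra$ really carry $\next_{r-1}(u)$ to the vertex $u$ corresponds to after each step, and ultimately to $\next_r(u)$; this is the same analysis of the $C(\cdot)$ updates already carried out in the correctness proof of Algorithm~\ref{alg:connectivity}, together with the fact that each vertex lies in exactly one set $C(\cdot)$ at all times. Everything else is the elementary observation that inserting edges, identifying the two endpoints of an existing edge, and deleting parallel edges or self-loops can only decrease distances.
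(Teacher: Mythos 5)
Your proof is correct and follows essentially the same approach as the paper's: reduce to the one-iteration case by induction on $r$, fix a shortest path in $G_{r-1}$, observe that $\addtwohop$ only adds edges, and that the relabeling/contraction steps map the path to a walk of no greater length whose endpoints are exactly $\next_r(u)$ and $\next_r(v)$. Your version is a bit more explicit — tracking the $C(\cdot)$ updates to verify the endpoints, and noting that a vertex on the path may be relabeled to a vertex off the path (so the resulting object is a walk rather than literally a shrunken subpath), which the paper's phrasing glosses over — but the underlying argument is the same.
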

\begin{proof}
	Similar to the proof of Claim~\ref{cl:levelnondecreasing}, we can show this by induction on $r$ and, thus, the problem reduces to showing that in one iteration the corresponding distance between two vertices cannot increase. To show this, fix a shortest path $p$ between two vertices $v$ and $u$ at any iteration. Within the next iteration, the $\addtwohop$ subroutine does not affect this path as it only adds some edges to the graph. Moreover, the only effect of the relabeling steps on this path is that it may shrink it as one vertex of the path can be relabeled to one of its neighbors in the path. However, relabeling can in no way destroy or increase the length of this path. Thus, the lemma follows.
\end{proof}

Our next observation follows directly from the description of the algorithm.

\begin{observation}\label{obs:levellargerthanneighbors}
	For any $r \geq 0$ and any vertices $u, v \in V$ with $\dist_r(u, v) \leq 1$, we have $\ell_{r+1}(u) \geq \ell_r(v)$ and $\ell_{r+1}(v) \geq \ell_r(u)$.
\end{observation}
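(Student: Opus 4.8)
Observation~\ref{obs:levellargerthanneighbors} claims: for any $r \geq 0$ and vertices $u, v \in V$ with $\dist_r(u,v) \leq 1$, we have $\ell_{r+1}(u) \geq \ell_r(v)$ and $\ell_{r+1}(v) \geq \ell_r(u)$.
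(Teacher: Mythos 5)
Your proposal contains no proof at all---it merely restates the claim of Observation~\ref{obs:levellargerthanneighbors} without giving any argument for why it holds. This is a genuine gap: the entire reasoning is missing.

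What you need to supply is the mechanism by which levels propagate across edges in a single iteration. Concretely: if $\dist_r(u,v) = 0$ the claim is trivial (then $\next_r(u) = \next_r(v)$ and levels are non-decreasing by Claim~\ref{cl:levelnondecreasing}). If $\dist_r(u,v) = 1$, then $\next_r(u)$ and $\next_r(v)$ are adjacent in $G_r$. Suppose w.l.o.g.\ $\ell_r(v) > \ell_r(u)$ (the case $\ell_r(v) \leq \ell_r(u)$ follows immediately from monotonicity of levels). Since $\addtwohop$ in iteration $r+1$ only adds edges, the edge between $\next_r(u)$ and $\next_r(v)$ survives into the $\relabelinter$ step. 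There, $\next_r(u)$ observes that its neighbor $h(\next_r(u))$ of maximum level satisfies $\level{h(\next_r(u))} \geq \ell_r(v) > \ell_r(u)$, so $\next_r(u)$ is marked inactive and relabeled to $h(\next_r(u))$, giving $\ell_{r+1}(u) \geq \ell_r(v)$ as required. The paper's own proof is exactly this observation---that any vertex seeing a higher-level neighbor is relabeled to its highest-level neighbor in $\relabelinter$---stated in one sentence. Your proposal needs to include this argument rather than just echoing the statement.
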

\begin{proof}
	This comes from the fact that any vertex who sees a neighbor of a higher level, is relabeled to its neighbor with the highest level in subroutine $\relabelinter$ of Algorithm~\ref{alg:connectivity}.
\end{proof}

\begin{claim}\label{cl:saturated}
	With high probability for any iteration $r$ and any vertex $v \in V_r$, if $v$ becomes saturated in the next iteration $r+1$, then there is at least one leader of the same level in its 2-hop, thus $\ell_{r+1}(v) \geq \ell_r(v)+1$.
\end{claim}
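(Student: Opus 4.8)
The plan is to walk through the $\relabelintra$ subroutine of iteration $r+1$ and show that, conditioned on $v$ being saturated there, with high probability a leader whose level just before the Line~4 increment equals $\ell_r(v)$ is created within distance $2$ of $v$. Given this, the bound $\ell_{r+1}(v)\ge \ell_r(v)+1$ is immediate: $v$ is either itself such a leader, in which case Line~4 raises its level and it is not contracted away, or it is a non-leader saturated vertex, in which case Lines~5--6 contract it onto such a leader.

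First some bookkeeping. Saturation (Lines~1--2 of $\relabelintra$) is defined only for active vertices, so if $v$ is saturated in iteration $r+1$ it is active throughout that iteration; in particular $v$ is not relabeled in $\relabelinter$, hence $v\in V_r$ is active at the end of iteration $r$, and by the observation that an active vertex $v\in V_r$ has $\next_{r'}(v)=v$ for all $r'\le r$ we get $\next_r(v)=v$. Consequently the level label carried by $v$ at the start of $\relabelintra$ in iteration $r+1$ is exactly $\ell_r(v)$ and its budget is $b(v)=\beta_{\ell_r(v)}$, which depends only on the level. Moreover every budget is at least $\beta_0=(T/n)^{1/2}\ge\log^{\alpha/2}n$, which exceeds $3\log n$ for the large constant $\alpha$, so the leader probability $p:=\min\{3\log n/b(v),1\}$ equals $3\log n/b(v)<1$.

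Next I locate a source of saturation near $v$. Reading Line~2 as a single propagation step (a vertex is saturated iff it satisfies the Line~1 condition or has a same-level neighbor satisfying it), there is a vertex $w$ with $\ell(w)=\ell_r(v)$ and $\dist(v,w)\le 1$ — take $w=v$ if $v$ itself satisfies Line~1 — possessing a set $N^\star$ of strictly more than $b(w)=b(v)$ active neighbors, all of level $\ell_r(v)$. Each $u\in N^\star$ is an active vertex with a same-level neighbor $w$ that is marked saturated, hence $u$ is itself marked saturated (Line~2), so $u$ is independently designated a leader with probability $p$ (Line~3), and its budget is $b(v)$. Therefore $\Pr[\text{no vertex of }N^\star\text{ becomes a leader}]\le(1-p)^{|N^\star|}\le(1-p)^{b(v)}\le e^{-p\,b(v)}=e^{-3\log n}\le n^{-3}$. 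A union bound over the at most $n$ vertices of $V_r$ (and, to obtain the claim simultaneously for every iteration, over the polynomially many iterations, enlarging the constant $3$ in the leader probability as allowed by the Remark) yields the high-probability quantifier.

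On the complementary event, pick a leader $u^\star\in N^\star$; it has level $\ell_r(v)$ before the Line~4 increment and $\dist(v,u^\star)\le\dist(v,w)+\dist(w,u^\star)\le 2$ in the graph on which $\relabelintra$ runs, which is precisely where Line~5 measures the $2$-hop; this is the first half of the claim. If $v$ is a leader, Line~4 sets $\ell(v)\gets\ell_r(v)+1$ and, since leaders are not contracted in Line~6, $\next_{r+1}(v)=v$, giving $\ell_{r+1}(v)=\ell_r(v)+1$. If $v$ is a non-leader, it is a non-leader saturated vertex that sees, in its $2$-hop, a leader whose pre-increment level equals its own, so by Line~5 it selects some such leader $u'$ and by Line~6 is contracted onto $u'$; as $u'$ is a leader it is not contracted away, so $\next_{r+1}(v)=u'$ and $\ell_{r+1}(v)=\ell(u')=\ell_r(v)+1$. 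The step requiring the most care is precisely the reading of the two saturation rules — that Line~2 is a single propagation and not a transitive closure, which is what keeps $w$ within distance $1$ of $v$ — together with the companion convention that the ``same level'' test in Line~5 refers to levels before the Line~4 increment; once these are fixed, the only probabilistic content is the one-line estimate above plus a union bound, and the rest is unwinding the definitions.
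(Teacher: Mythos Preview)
Your proof is correct and follows essentially the same approach as the paper: locate at least $b(v)$ same-level saturated vertices within the $2$-hop of $v$, observe each becomes a leader independently with probability $3\log n/b(v)$, bound the probability that none is chosen by $(1-3\log n/b(v))^{b(v)}\le n^{-3}$, and take a union bound. Your write-up is in fact more careful than the paper's in several places---you make explicit why $\next_r(v)=v$ and why the level at the start of $\relabelintra$ is $\ell_r(v)$, you spell out that Line~2 is a single propagation step (not a transitive closure), and you complete the argument with the case split on whether $v$ itself is a leader---but the mathematical content is the same.
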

\begin{proof}
	If $v$ is saturated, then by definition, it has at least $b(v)$ vertices in its inclusive 2-hop  (i.e., the set $v \cup N(v) \cup N(N(v))$) that have the same level as that of $v$ and are also saturated. To see this, note that if $v$ is marked as saturated in Line~\ref{line:saturated1} of $\relabelintra$, then it has at least $b(v)$ other active direct neighbors with level at least $b(v)$ all of which will be marked as saturated in Line~\ref{line:saturated2}. Furthermore, if a vertex $v$ is marked as saturated in Line~\ref{line:saturated2}, then it has a saturated neighbor which has $b(v)$ direct saturated neighbors as just described. Thus $v$'s 2-hop will include $b(v)$ saturated vertices as desired.
	
	It suffices to show that one of these $b(v)$ saturated vertices will be marked as a leader with high probability. Recall that we mark each vertex independently with probability $\frac{3\log n}{b(v)}$, thus 
	\begin{equation*}
		\Pr\Big[\ell_{r+1}(v) = \ell_r(v) \Big] \leq \Big(1-\frac{3\log n}{b(v)}\Big)^{b(v)} \leq \exp(-3\log n) \leq 1/n^{3}.
	\end{equation*}
	By a union bound over all vertices, and over the total number of iterations of the algorithm which is clearly less than $n^2$, we get that with probability at least $1-1/n$ every vertex that gets saturated sees a marked vertex in its 2-hop and its corresponding level will thus be increased in the next iteration.
\end{proof}
	
The next lemma highlights a key property of the algorithm and will be our main tool in analyzing the round complexity. Intuitively, it shows that with high probability, after every 4 iterations of Algorithm~\ref{alg:connectivity}, every vertex $v$ is either connected to its 2-hop, or its corresponding level increases by at least 1.
	
	\begin{lemma}\label{lem:2hoporlevelincrease}
		Let $u, v \in V$ be two vertices with $\dist_r(u, v) = 2$ for some iteration $r$. If $\dist_{r+4}(u, v) \geq 2$, then $\ell_{r+4}(v) \geq \ell_r(v) + 1$ and $\ell_{r+4}(u) \geq \ell_r(u) + 1$. This holds for all vertices $u$ and $v$ and over all iterations of the algorithm with high probability.
	\end{lemma}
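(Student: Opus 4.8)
The plan is to argue by contradiction. Fix $u,v$ with $\dist_r(u,v)=2$, assume $\dist_{r+4}(u,v)\ge 2$, and (since the two endpoints play symmetric roles, so it suffices to treat $v$) suppose toward a contradiction that $\ell_{r+4}(v)=\ell_r(v)=:L$; by Claim~\ref{cl:levelnondecreasing} this is equivalent to $\ell_{r'}(v)=L$ for every $r'\in\{r,\dots,r+4\}$. A vertex only changes its representative when that representative's level strictly increases — relabelings in $\relabelinter$ go to a strictly higher level, and contractions in $\relabelintra$ go to a leader whose level has just been incremented — so $v$ keeps one fixed representative $v^\star$ across these iterations. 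I would first reduce to the case that $v^\star$ is active throughout: an inactive representative relabels whenever it has a higher-level neighbor, so if it never does (as forced by $\ell(v)$ staying at $L$) it is surrounded by vertices of level $\le L$, and one checks that in that situation the neighbors of $v^\star$ of level $<L$ are pushed up to (or merged at) level $L$ in a way that either eventually makes $\next(u)$ adjacent to $v^\star$ or removes $v^\star$ — either way contradicting $\dist_{r+4}(u,v)\ge 2$.

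Next I would propagate level information along the length-two path. By Claim~\ref{cl:distnonincreasing} we have $\dist_{r'}(u,v)=2$ for all $r'\in\{r,\dots,r+4\}$, so at iteration $r$ there is a common neighbor $w$; tracking the original vertex $\tilde w$ underlying it through the relabelings — edges incident to the active vertex $v^\star$ are only rerouted by applying $\next$, never deleted unless that would place $\next(u)$ within distance $1$ of $v^\star$ (which is excluded) — one gets that $\next_{r+1}(\tilde w)$ is a common neighbor of $\next_{r+1}(u)$ and $v^\star$ at level exactly $L$: it is $\ge L$ by Observation~\ref{obs:levellargerthanneighbors}, and if it were $>L$ then $v^\star$, being adjacent to it, would relabel within one more iteration and push $\ell(v)$ above $L$. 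Applying Observation~\ref{obs:levellargerthanneighbors} once more (this time to raise $\next(u)$ to level $L$) and repeating, one obtains that through the middle iterations $\next(u)$, $v^\star$, and a common neighbor all sit at level exactly $L$ and remain a path of length two.

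With the path equalized at level $L$, the endgame is $\addtwohop$: because the common neighbor has level $L$, the current representative of $u$ lies in $H(v^\star)$ and $v^\star$ lies in $H(\next(u))$. If in any of the relevant iterations $v^\star$ connects to $\next(u)$ (or vice versa) in $\addtwohop$ then $\dist(u,v)=1$, a contradiction; otherwise the $\budget{\cdot}$ cap in $\addtwohop$ must have been the obstruction, which forces $v^\star$ (and $\next(u)$) to accumulate a full quota of about $\budget{v^\star}=\beta_L$ neighbors of level $L$. Since all of these share $v^\star$ as a common level-$L$ neighbor, they lie pairwise in one another's $H$-sets, and within the remaining iterations of this super-iteration the resulting dense level-$L$ cluster drives some member of it to exceed its budget and become saturated — hence so does $v^\star$, via Line~\ref{line:saturated2} of $\relabelintra$ — whereupon Claim~\ref{cl:saturated} gives $\ell_{r+4}(v)\ge L+1$, the final contradiction. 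A union bound over all pairs $u,v$ and all iterations (of which there are at most $n^2$) carries the high-probability guarantee from Claim~\ref{cl:saturated}. I expect the main obstacle to be exactly this last step, and it is the reason four iterations rather than one are needed: one iteration goes into equalizing the common neighbor's level, one into making $v^\star$ and $\next(u)$ mutually visible in the $H$-sets, and the remaining ones into carefully accounting for which vertex of the level-$L$ cluster actually crosses the strict ``more than $\budget{\cdot}$ active neighbors'' threshold — all while tracking representatives through relabelings and contractions.
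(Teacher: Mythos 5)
Your setup tracks the paper's proof closely: the contradiction hypothesis that $\ell(v)$ stays fixed at $L$ across the four iterations, the propagation of level information along the length-two path using Observation~\ref{obs:levellargerthanneighbors} to equalize $u$, the common neighbor, and $v$ at level $L$ (the paper packages this as Observations~\ref{obs:equalatround2} and~\ref{obs:contradictionequality}, derived from the chain $\ell_{r+4}(u)\geq\ell_{r+3}(w)\geq\ell_{r+2}(v)\geq\ell_{r+1}(w)\geq\ell_r(u)$ together with level monotonicity), and the observation that the $\budget{\cdot}$ cap in $\addtwohop$ is the only obstruction to $v^\star$ and $\next(u)$ becoming adjacent. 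Framing the argument around a fixed representative $v^\star$ and showing it stays active is a clean way to package bookkeeping that the paper leaves implicit in its $\next_r(\cdot)$ notation; the content there is the same.

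The genuine gap is in your endgame. Once the budget cap is hit you assert that the $\beta_L$ level-$L$ neighbors of $v^\star$ ``lie pairwise in one another's $H$-sets'' and that this ``dense level-$L$ cluster drives some member of it to exceed its budget and become saturated.'' This step is both unnecessary and unsupported: $\addtwohop$ never lets a vertex exceed its budget through its own connections, and nothing in your sketch shows why some cluster member would cross the strict $>\budget{\cdot}$ same-level active-neighbor threshold within the one or two iterations that remain. The paper does not reason about the cluster at all. It inspects the budget-filling neighbors of $\next_{r+2}(u)$ directly and runs a three-way case analysis: if some such neighbor has level strictly above $L$, Observation~\ref{obs:levellargerthanneighbors} lifts $u$'s level, contradiction; if some such neighbor is inactive, it lies in a relabeling chain, so its level rises in the next $\relabelinter$ and $u$'s level rises one iteration later, contradiction; otherwise all $\budget{\next_{r+2}(u)}$ of them are active at level exactly $L$, so $\next_{r+2}(u)$ becomes saturated via Line~\ref{line:saturated1} and Claim~\ref{cl:saturated} forces a level increase, contradiction. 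Your proposal never rules out the inactive case at all, addresses the higher-level case only for the common neighbor rather than for the budget-filling vertices, and replaces the direct saturation argument with a speculative cluster-densification claim that would itself need several more iterations to have any hope of working. Without those two exclusions, saturation simply does not follow, and the final step of the proof is missing.
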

	
	\begin{proof}
		By Claim~\ref{cl:distnonincreasing}, we have $\dist_{r+4}(u, v) \leq  \dist_r(u, v) = 2$. As such, to prove the lemma, it suffices to obtain a contradiction by assuming that $\dist_{r+4}(u, v) = 2$ and (w.l.o.g.) $\ell_{r+4}(u) = \ell_{r}(u)$.
		
		Recall that the lemma assumes that $\dist_r(u, v)=2$. Therefore, there must exist a vertex $w$ with $\dist_r(u, w)=\dist_r(w, v)=1$. By an application of Observation~\ref{obs:levellargerthanneighbors}, we have $$\ell_{r+4}(u) \geq \ell_{r+3}(w) \geq \ell_{r+2}(v) \geq \ell_{r+1}(w) \geq \ell_r(u).$$
		Combining this with our assumption that $\ell_{r+4}(u) = \ell_r(u)$, we get
		\begin{equation}\label{eq:03813}
			\ell_{r+4}(u) = \ell_{r+3}(w) = \ell_{r+2}(v) = \ell_{r+1}(w) = \ell_r(u).
		\end{equation}
		Moreover, by Claim~\ref{cl:levelnondecreasing} which states the levels are non-decreasing over time, we have 
		\begin{equation}\label{eq:37249}
			\ell_{r+4}(u) \geq \ell_{r+2}(u) \geq \ell_r(u) \qquad \text{and} \qquad \ell_{r+3}(w) \geq \ell_{r+2}(w) \geq \ell_{r+1}(w).
		\end{equation}
		Combination of (\ref{eq:03813}) and (\ref{eq:37249}) directly implies the following two useful inequalities.
		
		\begin{observation}\label{obs:equalatround2}
			 $\ell_{r+2}(u) = \ell_{r+2}(w) = \ell_{r+2}(v)$.
		\end{observation}
		\begin{proof}
			Inequality $\ell_{r+4}(u) \geq \ell_{r+2}(u) \geq \ell_r(u)$ of (\ref{eq:37249}) combined with equality $\ell_{r+4}(u) = \ell_r(u)$ of (\ref{eq:03813}) implies $\ell_{r+2}(u) = \ell_r(u)$. This combined with $\ell_{r}(u) = \ell_{r+2}(v)$ of (\ref{eq:03813}), gives $\ell_{r+2}(u) = \ell_{r+2}(v)$. 
			
			Inequality $\ell_{r+3}(w) \geq \ell_{r+2}(w) \geq \ell_{r+1}(w)$ of (\ref{eq:37249}) combined with equality $\ell_{r+3}(w)=\ell_{r+1}(w)$ of (\ref{eq:03813}) implies $\ell_{r+2}(w) = \ell_{r+1}(w)$. Combined with $\ell_{r+1}(w) = \ell_{r+2}(v)$ of (\ref{eq:03813}), it gives $\ell_{r+2}(w) = \ell_{r+2}(v)$. 
		\end{proof}
		
		\begin{observation}\label{obs:contradictionequality}
			$\ell_{r+4}(u) = \ell_{r+2}(u)$.
		\end{observation}
		\begin{proof}
		Inequality $\ell_{r+4}(u) = \ell_r(u)$ due to (\ref{eq:03813}) combined with inequality $\ell_{r+4}(u) \geq \ell_{r+2}(u) \geq \ell_r(u)$ of (\ref{eq:37249}) implies $\ell_{r+4}(u) = \ell_{r+2}(u) = \ell_r(u)$.
		\end{proof}
		
		Observation~\ref{obs:equalatround2} implies that the corresponding levels of all three vertices $u$, $w$ and $v$ should be the same at the end of iteration $r+2$. Thus, within the $\addtwohop$ subroutine of iteration $r+3$, we have $\next_{r+2}(v) \in H(\next_{r+2}(u))$; now either we connect $\next_{r+2}(u)$ and $\next_{r+2}(v)$ which reduces their distance to 1 contradicting our assumption that $\dist_{r+4}(u, v) = 2$, or otherwise vertex $\next_{r+2}(u)$ spends its budget to get connected to at least $b_{r+2}(u)$ other vertices of level at least $\ell_{r+2}(u)$. Let $N$ be the set of these neighbors of $\next_{r+2}(u)$. There are three scenarios and each leads to a contradiction:
		\begin{itemize}
			\item If for a vertex $ x \in N$, $\ell_{r+2}(x) > \ell_{r+2}(u)$, then by Observation~\ref{obs:levellargerthanneighbors}, the level of the corresponding vertex of $u$ increases in the next iteration and we have $\ell_{r+4}(u) \geq \ell_{r+3}(u) > \ell_{r+2}(u)$ which contradicts equality $\ell_{r+4}(u) = \ell_{r+2}(u)$ of Observation~\ref{obs:contradictionequality}.
			\item If a vertex $x \in N$ is inactive, then $x$ is in a chain by definition of inactive vertices. Every vertex in a chain has a vertex of higher level next to it, thus $\ell_{r+3}(x) > \ell_{r+2}(x)$ by Observation~\ref{obs:levellargerthanneighbors}. Furthermore, since $x \in N$, we know $\ell_{r+2}(x) \geq \ell_{r+2}(u)$. This means that $\next_{r+3}(u)$ has a neighbor of strictly higher level, thus by Observation~\ref{obs:levellargerthanneighbors}, we have to have $\ell_{r+4}(u) > \ell_{r+2}(u)$ which contradicts equality $\ell_{r+4}(u) = \ell_{r+2}(u)$ of Observation~\ref{obs:contradictionequality}.
			\item  If the two cases above do not hold, then after applying $\addtwohop$ in iteration $r+2$, $\next_{r+2}(u)$ has at least $b_{r+2}(u)$ active neighbors of level exactly $\ell_{r+2}(u)$. Furthermore, vertex $\next_{r+2}(u)$ itself has to be active, or otherwise its corresponding level has to increase in the next iteration which is a contradiction. This means by definition that $\next_{r+2}(u)$ is saturated during iteration 3. By Claim~\ref{cl:saturated}, with high probability the corresponding level of every saturated vertex increases by at least one in the next iteration, and thus we get $\ell_{r+3}(u) > \ell_{r+2}(u)$ which, again, would imply $\ell_{r+4}(u) \not= \ell_{r+2}(u)$ contradicting Observation~\ref{obs:contradictionequality}.
		\end{itemize}
		To wrap up, we showed that if the distance between the corresponding vertices to $u$ and $v$ after the next 4 iterations is not decreased to at most 1, then the corresponding level of $u$ and $v$ has to go up by one with high probability.
	\end{proof}
	
As discussed before, Lemma~\ref{lem:2hoporlevelincrease} implies that after every $O(1)$ consecutive iterations of Algorithm~\ref{alg:connectivity}, each vertex either is (roughly speaking) connected to the vertices in its 2-hop or sees a level increase. It is easy to show that if {\em every} vertex is connected to the vertices in its 2-hop, the diameter of the graph is reduced by a constant factor, and thus after $O(\log D)$ iterations every connected component becomes a clique. Notice, however, that Lemma~\ref{lem:2hoporlevelincrease} does not guarantee this, as for some vertices, we may only have a level increase instead of connecting them to their 2-hop. Let $L$ be an upper bound on the level of the vertices throughout the algorithm. (We later show in Lemma~\ref{lem:levelsnomorethanloglog} that $L = O(\log \log_{T/n}n)$.) Since the maximum possible level is $L$, each vertex does not connect 2-hops for at most $L$ iterations. Therefore, if for instance, within each of the first $L$ iterations of the algorithm, the corresponding level of {\em every} vertex increases, we cannot have any level-increases afterwards. Therefore within the next $O(\log D)$ iterations, each vertex connects 2-hops and every connected component becomes a clique. Overall, this takes $O(L + \log D)$ iterations. In reality, however, the level increases do not necessarily occur in bulk within the first $L$ iterations of the algorithm. In fact, Lemma~\ref{lem:2hoporlevelincrease} alone is not enough to show a guarantee of $O(L + \log D)$. To get around this problem, we need to use another crucial property of the algorithm highlighted in Observation~\ref{obs:levellargerthanneighbors}. A proof of sketch of how we combine these two properties to get our desired bound was already given in Section~\ref{sec:highlevel}. The following lemma formalizes this.

\begin{lemma}\label{lem:lplusloglog}
	Let $L$ be an upper bound on the number of times that the corresponding level of a vertex may increase throughout the algorithm. Only $O(L + \log D)$ iterations of the for loop in Algorithm~\ref{alg:connectivity} suffices to make sure that with high probability, every remaining connected component becomes a clique.
\end{lemma}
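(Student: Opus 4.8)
The plan is to make rigorous the coin/potential argument sketched in Section~\ref{sec:highlevel}. Fix two vertices $u, v$ in the same connected component of $G_0$ and a shortest path $P_1$ between them in $G_0$, so $|P_1| \le D + 1$. I would organize the iterations into \emph{super-iterations} of $4$ consecutive iterations each, and by induction on the super-iteration index $i$ maintain a path $P_i$ between $\next_{4i}(u)$ and $\next_{4i}(v)$ in the graph $G_{4i}$, together with an assignment of \emph{coins} to the vertices of $P_i$ whose total never exceeds $D+1$. The construction of $P_{i+1}$ from $P_i$ is the heart of the argument: for each vertex $u_j$ of $P_i = (u_1, \dots, u_k)$ I decide whether to keep it; a dropped vertex distributes its coins evenly among whichever of $u_{j-2}, u_{j-1}, u_{j+1}, u_{j+2}$ survive to $P_{i+1}$. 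I would then state and prove the two invariants. Invariant~1: if the corresponding level of $u_j$ does not increase during super-iteration $i+1$, then some vertex of $\{u_{j-2}, u_{j-1}, u_j, u_{j+1}, u_{j+2}\}$ is dropped. This follows from Lemma~\ref{lem:2hoporlevelincrease}: pick w.l.o.g.\ a neighbor at distance $2$, say $u_{j-2}$ (if neither $u_{j-2}$ nor $u_{j+2}$ is on $P_i$ the path has length $\le 2$ and we are essentially done); if $u_{j-2}$ or $u_j$ is itself dropped we are done, otherwise $\dist_{4(i+1)}(u_{j-2}, u_j) = 1$ by the lemma, so we may drop $u_{j-1}$ and splice in the new edge while keeping $P_{i+1}$ a valid path.

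Next I would establish Invariant~2: in $P_i$, every vertex of corresponding level $j$ carries at least $1.25^{\,i-j}$ coins. The base case $i=1$ is immediate since every vertex starts at level $0$ and holds one coin, and $1.25^{1-0} \ge 1$ after we note a small slack (or, more carefully, index super-iterations from $0$ — I would set the constants so the arithmetic works out). For the inductive step, take a vertex $w = u_j$ surviving to $P_{i+1}$ at level $\ell'$. If its corresponding level strictly increased during super-iteration $i+1$, then $\ell' \ge \ell+1$ where $\ell$ was its old level, and $1.25^{(i+1)-\ell'} \le 1.25^{i-\ell}$, so the coins it already had suffice. If its level did not increase, Invariant~1 supplies a dropped vertex $x$ in its distance-$2$ window; by Observation~\ref{obs:levellargerthanneighbors} applied across the (at most two) intervening iterations — actually a super-iteration is four iterations, more than enough — the corresponding level of $x$ at the start of super-iteration $i+1$ is at most that of $w$ at the end, hence at most $\ell' = \ell$; so by Invariant~2 for $P_i$, $x$ held at least $1.25^{i-\ell}$ coins; $x$ splits these among at most four survivors, but crucially $w$ already holds at least $1.25^{i-\ell}$ of its own (Invariant~2 for $P_i$, possibly after observing $w$'s level cannot have dropped). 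The clean way is: $w$ gets all of $x$'s coins when $w$ is the \emph{only} survivor adjacent to $x$; when there are several survivors I need a counting/charging argument ensuring the bookkeeping closes — this is the delicate part. I would handle it by strengthening Invariant~2 slightly or by choosing the dropped vertex $x$ to route its coins preferentially to the lowest-level survivor, exactly as hinted in the footnote, so that a level-$\ell'$ survivor receives at least a $1.25^{-1}$ fraction relative to what it needs.

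Finally, I would close the argument. After $R = O(L + \log D)$ iterations, i.e.\ $R/4$ super-iterations, suppose for contradiction that $|P_{R/4}| > 2$, so it contains an interior vertex $w$ at some level $j \le L$. By Invariant~2, $w$ carries at least $1.25^{R/4 - j} \ge 1.25^{R/4 - L}$ coins. Choosing the hidden constant in $R$ large enough that $R/4 - L \ge 4\log_{1.25} D$ — possible since $R = c(L + \log D)$ for a suitable constant $c$ — gives $1.25^{R/4 - L} \ge D^4 \gg D + 1$, contradicting the fact that the total number of coins never exceeds $D+1$. Hence $|P_{R/4}| \le 2$, meaning $\dist_R(u, v) \le 2$ for every pair $u, v$ in a common component; one more super-iteration (an application of the ``if every vertex is connected to its 2-hop, the diameter halves'' observation, or simply noting distance $\le 2$ shrinks to $\le 1$) makes each component a clique, and the algorithm terminates. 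All the level-increase events invoked above come from Claim~\ref{cl:saturated} and Lemma~\ref{lem:2hoporlevelincrease}, each holding w.h.p.\ over all $\le n^2$ iterations and all vertices, so a union bound gives the overall w.h.p.\ guarantee. The main obstacle I anticipate is making the coin-redistribution bookkeeping in Invariant~2 airtight — specifically ensuring that splitting a dropped vertex's coins among up to four survivors still leaves each low-level survivor with its required $1.25^{i-\ell}$ quota — which is precisely the parity-type subtlety the footnote flags; I would resolve it by the preferential-routing rule and, if needed, by carrying a small constant-factor surplus in the invariant.
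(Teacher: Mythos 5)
Your proposal follows essentially the same route as the paper: group iterations into super-iterations of four, maintain a path $P_i$ with a coin (potential) assignment, prove Invariant~1 via Lemma~\ref{lem:2hoporlevelincrease} and Invariant~2 via Observation~\ref{obs:levellargerthanneighbors} together with Claim~\ref{cl:levelnondecreasing}, and derive a contradiction from the total coin budget of $D+1$. Two points deserve comment.

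First, the ``delicate part'' you flag in Invariant~2 is in fact not delicate, and no preferential routing or inflated invariant is needed. The constant $\budgetconstant$ is chosen precisely so that the arithmetic closes under the naive even split: a surviving vertex $w$ keeps \emph{all} of its own coins (so it already holds $\geq 1.25^{\,i-\ell}$ by the inductive hypothesis) and, when its level does not increase, receives an additional quarter share from some dropped $x$ in its window whose level is $\leq \ell$ and hence also holds $\geq 1.25^{\,i-\ell}$; thus $w$ ends with at least $\left(1+\tfrac14\right)1.25^{\,i-\ell} = 1.25^{\,(i+1)-\ell}$. The mistake in your worry is treating $w$ as if it must recover its whole quota from $x$'s redistributed coins; it only needs a $\tfrac14$ top-up on top of what it already has.

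Second, your closing step contains a genuine gap. Once $|P_R| \leq 3$ you correctly conclude every remaining component has diameter at most~$2$, but that does \emph{not} mean ``one more super-iteration'' produces cliques. A vertex at distance~$2$ from some other vertex may fail to connect to it because its budget is too small; Lemma~\ref{lem:2hoporlevelincrease} then only guarantees a level increase, and this can repeat until budgets grow large enough, which takes up to $O(L)$ further iterations (the max level propagates across the diameter-$O(1)$ component in $O(1)$ iterations, and level increases cascade). The asymptotic bound is unharmed since $O(L+\log D)+O(L)=O(L+\log D)$, but the single-super-iteration claim, as stated, is false and needs to be replaced by this $O(L)$-iteration cleanup phase.
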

\begin{proof}
	It will be convenient for the analysis to call every 4 consecutive iterations of the for-loop in Algorithm~\ref{alg:connectivity}   a {\em super-iteration}. That is, for any $i\geq 1$, we define the $i$th super-iteration to be the combination of performing iterations $4i-3, 4i-2, \ldots, 4i$ of Algorithm~\ref{alg:connectivity}.

	Fix two arbitrary vertices $u$ and $v$ in a connected component of the original graph $G$. It suffices to show that after running the algorithm for $R := O(L+\log D)$ super-iterations, the corresponding vertices to $u$ and $v$, are at distance at most 1. To show this, we maintain a path between $u$ and $v$ and update it over time. We use $P_r$ to denote the maintained path by the end of super-iteration $r$, i.e., the path is updated every four iterations. The initial path, $P_0$, is any arbitrary shortest path between $u$ and $v$ in the original graph $G$; notice that $P_0$ has at most $D+1$ vertices, as the diameter of $G$ is $D$. As we move forward, $u$ and $v$ may be relabeled; nonetheless, the path $P_r$ will be a path from vertex $\next_{4r}(u)$ (which is the corresponding vertex to $u$ by the end of iteration $4r$ or equivalently super-iteration $r$) to vertex $\next_{4r}(v)$. Crucially, the path $P_r$ is not necessarily the shortest path between $\next_{4r}(v)$ and $\next_{4r}(u)$ in $G_r$. The reason is that the naive shortest paths may ``radically'' change from one iteration to another. Instead, we carefully construct $P_r$ to ensure that it passes only through the  corresponding vertices of the vertices in $P_{r-1}$, which also inductively indicates that every vertex in $P_r$ is in set $\{\next_{4r}(w) \, | \, w \in P_0 \}$.

To use these gradual updates, for every  $r$, we define a potential function $\Phi_r : V(P_r) \to \mathbb{N}$ that maps every vertex of path $P_r$ to a positive integer. The definition of function $\Phi_r$ and construction of path $P_r$ are recursively based on $\Phi_{r-1}$ and $P_{r-1}$. As for the base case, we have $\Phi_0(v) = 1$ for every vertex $v \in P_0$. For the rest of the iterations, we follow the following steps.

To construct $P_r$ from $P_{r-1}$, we first apply the relabelings of iterations $4r-3, \ldots, 4r$, on the vertices in $P_{r-1}$. That is, the sequence $P_{r-1} = (w_1, \ldots, w_s)$ becomes $Q = (q_1, \ldots, q_s)$ where $q_i = \next_{4r}(w_i)$. Note that multiple vertices in $P_{r-1}$ may have been relabeled to the same vertex throughout these four iterations, and thus the elements in $Q$ are not necessarily unique. Next, we use an $s$-bit {\em mask} vector $K \in \{0, 1\}^s$ to denote a subsequence\footnote{A {\em subsequence} is a derived from another sequence by deleting some or no elements of it without changing the order of the remaining elements.} of $Q$ that corresponds to the vertices in $P_r$. That is, $P_r$ contains the $i$th element of $Q$ if and only if $K_i=1$. To guarantee that $P_r$ is indeed a path and that it has some other useful properties, our mask vector $K$ should satisfy the following properties:
	\begin{enumerate}[label={(P\arabic*)}, ref=P\arabic*]
		\item \label{prop:1} $K_1 = K_s = 1$.
		\item \label{prop:2} If for some $i, j \in [s]$ with $i \not = j$, we have $q_i = q_j$, then at most one of $K_i$ and $K_j$ is 1.
		\item \label{prop:3} If for some $1 \leq i < j \leq s$ with $K_i = K_j = 1$, there is no $k$ with $i < k < j$ for which $K_k = 1$, then $q_i$ and $q_j$ should have a direct edge in graph $G_{4r}$.
		\item \label{prop:4} If for some $i \in [s]$, we have $\ell_{4r}(q_i) = \ell_{4(r-1)}(w_i)$ (i.e., the level of the corresponding vertex to $w_i$ is not increased) and $K_i = 1$, then at least one of $K_{i-2}$, $K_{i-1}$, $K_{i+1}$ or $K_{i+1}$ should be 0.\footnote{$K$ is an $s$-bit vector, but assume for preciseness of definition that $K_{-1} = K_0 = K_{s+1} = K_{s+2} = 1$.}
	\end{enumerate}
Property~\ref{prop:1} guarantees that the path of the next iteration remains to be between the corresponding vertices to $u$ and $v$. Property~\ref{prop:2} ensures that we do not revisit any vertex in $P_r$ which is necessary if we want $P_r$ to be a path. Property~\ref{prop:3} ensures that every two consecutive vertices in $P_r$ are neighbors in $G_{4r}$, which again, is necessary if we want $P_r$ to denote a path in $G_{4r}$. Finally, Property~\ref{prop:4} guarantees that if the corresponding level of a vertex $w_i \in V_{4(r-1)}$ does not increase in iterations $4r-3, \ldots, 4r$, and that $q_i$ (which is the corresponding vertex to $w_i$ after these four iterations) is included in path $P_r$, there is a vertex in $\{w_{i-2}, w_{i-1}, w_{i+1}, w_{i+2}\}$ whose corresponding vertex at the next iteration is not included in $P_r$. Note that we have to be careful that by satisfying Property~\ref{prop:4}, we do not violate Property~\ref{prop:3}. In other words, we have to make sure that once we drop the (2-hop) neighboring vertices of $w_i$ in $P_{r-1}$ from $P_r$, $P_r$ remains to be a connected path. However, this can be guaranteed by Lemma~\ref{lem:2hoporlevelincrease} which says if the corresponding level of a vertex does not increase in 4 iterations, its distance to the vertices in its 2-hop decreases to at most 1 (see also Section~\ref{sec:highlevel}). Overall, we get the following result.

\begin{claim}
	If $q_1 \not = q_s$ and if $|P_{r-1}| > 3$, then with high probability there exists a mask vector $K$ satisfying Properties~\ref{prop:1}-\ref{prop:4}.
\end{claim}

Construction of function $\Phi_r$ is also based on the mask vector $K$ that we construct $P_r$ with. Recall that $\Phi_r$ is a function from the vertices in $P_r$ to $\mathbb{N}$. Therefore, in order to describe $\Phi_r$, it suffices to define the value of $\Phi_r$ on vertex $q_i$ iff $K_i = 1$. Assuming that $K_i = 1$, define $l_i$ to be the smallest number in $[1, i]$ such that $\sum_{j \in l_i}^{i-1} K_j \leq 1$. In a similar way, define $r_i$ to be the largest number in $[i, s]$ where $\sum_{j = i+1}^{r_i} K_j \leq 1$. Having these, we define $\Phi_r(q_i)$ in the following way:
\begin{equation}\label{eq:defphi}
	\Phi_r(q_i) := \Phi_{r-1}(w_i) + \frac{1}{4}\sum_{j=l_i}^{i-1} \Phi_{r-1}(w_j)  + \frac{1}{4}\sum_{j=i+1}^{r_i} \Phi_{r-1}(w_j).
\end{equation}

The next two claims are the main properties of function $\Phi_r$ that we use in proving Lemma~\ref{lem:lplusloglog}.

\begin{claim}\label{cl:potentialincreasing}
	For any $r \geq 0$ and any vertex $w \in P_r$ with level $\ell = \ell_{4r}(w)$, we have $\Phi_r(w) \geq (1.25)^{r - \ell}$.
\end{claim}
\begin{proof}
	We use induction on $r$. For the base case with $r=0$, we have $\Phi_0(w) = 1$ and since it is before the first iteration, we have $\ell = 0$. Thus, we have $\Phi_0(w) \geq (1.25)^{0 - 0} = 1$. The induction hypothesis guarantees for every vertex $w'$ of path $P_{r-1}$ with level $\ell'$, that $\Phi_{r-1}(w') \geq (1.25)^{r-1-\ell'}$. We show that this carries over to the vertices of $P_r$ as well.
	
	We would like to prove that for every vertex $w \in P_r$, we have $\Phi_r(q) \geq (1.25)^{r-\ell_{4r}(q)}$. We know by construction of $P_r$ from $P_{r-1}$ that vertex $w$ of $P_r$ is the corresponding vertex of some vertex $w'_i \in P_{r-1}$ with $K_i = 1$ where $K$ denotes the mask vector that we use to construct $P_r$ from $P_{r-1}$, i.e., $w = \next_{4(r-1)}^4(w'_i)$. By the induction hypothesis, we have 
	\begin{equation}\label{eq:73624}
	\Phi_{r-1}(w'_i) \geq (1.25)^{r-1-\ell_{4(r-1)}(w'_i)}.	
	\end{equation}
	Therefore, if during super-iteration $r$, the corresponding level of $w_i$ increases, i.e., if we have $\ell_{4r}(w) \geq \ell_{4(r-1)}(w'_i) + 1$, then we have
	\begin{equation*}
		\Phi_r(w) \geq \Phi_{r-1}(w'_i) \stackrel{\text{By (\ref{eq:73624})}}{\geq} (1.25)^{r - 1 - \ell_{4(r-1)}(w'_i)} \geq (1.25)^{r - \ell_{4r}(w)},
	\end{equation*}
	where the first inequality comes from the fact that $\Phi_r(w_i) > \Phi_{r-1}(w'_i)$ which itself is directly followed by (\ref{eq:defphi}). This means that if the corresponding level of $w'_i$ remains unchanged within super-iteration $r$, we have our desired bound on $\Phi_r(w)$. The only scenario that is left is if the corresponding level of $w'_i$ remains unchanged, i.e., $\ell_{4r}(w) = \ell_{4(r-1)}(w'_i)$.
	
	If the corresponding level of $w'_i$ remains unchanged during super-iteration $r$, then by Property~\ref{prop:4} of the mask vector $K$, at least one of $\{K_{i-2}, K_{i-1}, K_{i+1}, K_{i+2}\}$ is 0. Suppose without loss of generality that $K_{i-2} = 0$. First, observe that we have to have
		\begin{equation}\label{eq:81234}
			\ell_{4(r-1)}(w'_{i-2}) \leq \ell_{4(r-1)}(w'_i).
		\end{equation}
	The reason is that if level of $w'_{i-2}$, which has distance at most 2 from $w'_{i}$ in graph $G_{4(r-1)}$, has a higher level than $w'_i$, then by Observation~\ref{obs:levellargerthanneighbors}, the corresponding level of $w'_i$ after at most 2 iterations (still within a super-iteration) should increase to at least $\ell_{4(r-1)}(w'_{i-2})$ which would contradict the assumption that the corresponding level of $w'_i$ remains unchanged for 4 iterations. This means that by the induction hypothesis, now on vertex $w'_{i-2}$ of path $P_{r-1}$, we have
	\begin{equation}\label{eq:62128}
		\Phi_{r-1}(w'_{i-2}) \geq (1.25)^{r-1-\ell_{4(r-1)}(w'_{i-2})} \stackrel{\text{By (\ref{eq:81234})}}{\geq} (1.25)^{r-1-\ell_{4(r-1)}(w'_i)}.
	\end{equation}
	Now, recall that we assumed $K_{i-2} = 0$. This means, by construction of $\Phi_r$ using (\ref{eq:defphi}), that we have to have
	\begin{equation}\label{eq:71239}
		\Phi_r(w) \geq \Phi_{r-1}(w'_i) + \frac{1}{4} \Phi_{r-1}(w'_{i-2}).
	\end{equation}
	Therefore, we have
	\begin{align*}
 		\Phi_r(w) &\geq \Phi_{r-1}(w'_i) + \frac{1}{4} \Phi_{r-1}(w'_{i-2}) && \text{By (\ref{eq:71239}).}\\
 			&\geq \Big((1.25)^{r-1-\ell_{4(r-1)}(w'_i)}\Big) + \frac{1}{4}\Big((1.25)^{r-1-\ell_{4(r-1)}(w'_i)}\Big) && \text{By induction hypothesis and (\ref{eq:62128}).}\\
 			&= 1.25 \Big((1.25)^{r-1-\ell_{4(r-1)}(w'_i)}\Big) = (1.25)^{r-\ell_{4(r-1)}(w'_i)}\\
 			&\geq (1.25)^{r-\ell_{4r}(w)}. && \text{Since $\ell_{4r}(w) \geq \ell_{4(r-1)}(w'_i)$.}
	\end{align*}
	Concluding the proof of Claim~\ref{cl:potentialincreasing}. 
\end{proof}

\begin{claim}\label{cl:potentialsumconstant}
	For any $r\geq 0$ with $|P_r| > 3$, we have $\sum_{w \in P_r}\Phi_r(w) \leq D+1$.
\end{claim}
\begin{proof}
	The inequality $\sum_{v\in P_0} \Phi_r(v) \leq D + 1$ is followed by the fact that $P_0$, which is a shortest path between $u$ and $v$ in the original graph has at most $D+1$ vertices and that $\Phi_0(w) = 1$ for any vertex $w \in P_0$. Moreover, one can easily show that for any $r > 0$, we have $\sum_{w \in P_r}\Phi_r(w) \leq \sum_{w \in P_{r-1}}\Phi_{r-1}(w)$ directly by the definition of $\Phi_r$ from $\Phi_{r-1}$ and Property~\ref{prop:1} of the mask vectors used. Combining these two facts via a simple induction on $r$ proves the claim.
\end{proof}

We are now ready to prove Lemma~\ref{lem:lplusloglog}. Run the algorithm for $R := L + 4\log D$ super-iterations. If path $P_R$ has at most 3 vertices, we are done since our goal is to show that the distance between the corresponding vertices of $u$ and $v$ in graph $G_R$ is at most 2 --- which itself would imply that every connected component in $G_R$ has diameter at most 2. In fact, we show that this should always be the case.\footnote{More precisely, the ``always'' here is conditional on the assumption that our high probability events hold. This is not a problem since otherwise we say the algorithm fails and this happens with probability at most $1/n$.} Suppose for the sake of contradiction that we can continue to super-iteration $R$ in constructing $P_R$  and $\Phi_R$ and still have $|P_R| > 3$. Let $u_R := \next_{4R}(u)$ be the corresponding vertex to vertex $u$ by the end of super-iteration $R$. Property~\ref{prop:1} of our mask vectors in constructing paths $P_1, \ldots, P_R$ ensures that path $P_R$ should start with vertex $u_R$. By Claim~\ref{cl:potentialincreasing}, we have 
\begin{equation}\label{eq:19294}
	\Phi_R(u_R) \geq (1.25)^{R - \ell_R(u_R)} \geq (1.25)^{R - L},
\end{equation}
where the latter inequality comes from the assumption that $L$ is an upper bound on the level of every vertex. Now, since $R = L + 4\log D$, we have 
\begin{equation}\label{eq:91239}
R - L \geq 4 \log D.
\end{equation}
Combining (\ref{eq:19294}) with (\ref{eq:91239}) we get
\begin{equation*}
	\Phi_R(u_R) \geq (1.25)^{4\log D} > D + 1.
\end{equation*}
However, this contradicts with Claim~\ref{cl:potentialsumconstant} which guarantees $\Phi_R(u_R)$ should be less than $D+1$. Therefore, our initial assumption that $R$ can be as large as $L + 2\log D$ cannot hold; meaning that in $O(L + \log D)$ iterations, the remaining graph will be a collection of connected components of diameter $O(1)$.

Once the diameter of every remaining connected component gets below $O(1)$, it is easy to confirm that in the next $O(L)$ iterations of the algorithm, the diameter reduces to 1 (i.e., every connected component becomes a clique). To see this, note that since the diameter is $O(1)$, the maximum level within each component propagates to all the vertices in $O(1)$ iterations. If this budget is not enough for a vertex to connect 2-hop, its level increases by Lemma~\ref{lem:2hoporlevelincrease}. This level, again, propagates to all other vertices. Eventually, after the next $O(L)$ iterations, the vertices will reach the maximum possible level and thus have enough budget to get connected to every remaining vertex in the component.

Overall, it takes $O(L + \log D)$ iterations until the diameter of every remaining connected component becomes $O(1)$ and after that, at most $O(L)$ other iterations for them to become cliques.
\end{proof}

To continue, we give the following upper bound on the levels.

\begin{lemma}\label{lem:levelsnomorethanloglog}
	For any vertex $v$, the value of $\ell(v)$ never exceeds $O(\log\log_{T/n} n)$.
\end{lemma}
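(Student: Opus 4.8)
The plan is to bound the level of any vertex by tracking the budget associated with each level and showing that the total space $T$ is exhausted well before the level can reach $\omega(\log\log_{T/n} n)$. Recall that a vertex at level $i$ has budget $\beta_i = \beta_0^{(1.25)^i}$ where $\beta_0 = (T/n)^{1/2}$, so the budgets grow doubly-exponentially in the level. The key point is that for a vertex to have its level bumped from $i$ to $i+1$ in $\relabelintra$, it must have become \emph{saturated}, which (as already argued in the proof of Claim~\ref{cl:saturated}) means roughly $\beta_i$ vertices of level exactly $i$ were concentrated in a small neighborhood. I will turn this into a charging argument: each level-increase consumes a distinct "block" of roughly $\beta_i$ vertices (or edges), and since the total number of vertices/edges available is bounded in terms of $T$, the level cannot climb too high.

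First I would set up the accounting. Let me count, for each level $i$, the number $n_i$ of vertices that ever reach level $i$. When a vertex is promoted to level $i+1$ it becomes a \emph{leader}; by Claim~\ref{cl:saturated} and the saturation condition, the promotion of a leader at level $i$ to level $i+1$ is witnessed by $\Omega(\beta_i)$ vertices of level $i$ in its (inclusive) $2$-hop, and distinct leaders of the same level that survive the leader-selection step are far enough apart (the leader-selection only keeps one leader per $2$-hop among same-level saturated vertices, and non-leaders contract to leaders within their $2$-hop) that these witness sets can be made disjoint — or, more robustly, one can observe that after the contraction each level-$(i+1)$ vertex \emph{absorbs} the $C(\cdot)$-sets of $\Omega(\beta_i)$ former level-$i$ vertices, so the number of original vertices it represents grows by a factor $\Omega(\beta_i)$ at each promotion. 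Hence a vertex at level $i$ represents at least $\prod_{j<i}\Omega(\beta_j) \ge \beta_{i-1}^{\Omega(1)}$ original vertices. Since no vertex can represent more than $n$ original vertices, we get $\beta_{i-1} \le n^{O(1)}$.

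Next I would solve this inequality for $i$. Taking logarithms, $\log \beta_{i-1} = (1.25)^{i-1}\log\beta_0 \le O(\log n)$, so $(1.25)^{i-1} \le O(\log n / \log \beta_0) = O(\log n / \log (T/n))$, which gives $i \le \log_{1.25}\!\big(O(\log_{T/n} n)\big) + O(1) = O(\log\log_{T/n} n)$, as claimed. (One should double-check the corner case $T/n = O(1)$, i.e. $\beta_0 = O(1)$; there the claimed bound is $O(\log\log n)$ and the same computation with $\log\beta_0 = \Theta(1)$ gives $(1.25)^{i} = O(\log n)$, hence $i = O(\log\log n)$.) Also, the budget only ever increases for leader vertices and a relabeled vertex inherits the (higher) level/budget of the vertex it merges into, so $\ell_r(v)$ as defined for original vertices $v$ is likewise bounded by this quantity.

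The main obstacle I anticipate is making the "disjoint witness sets" claim fully rigorous in the presence of relabelings, contractions, and the fact that saturation can propagate through Line~\ref{line:saturated2} of $\relabelintra$ (so a saturated vertex need not itself have $\beta_i$ same-level neighbors — it may only be \emph{adjacent} to one that does). The cleanest way around this, which I would adopt, is to charge against the growth of the represented-set sizes $|C(\cdot)|$ rather than against raw neighborhoods: every leader at level $i$ that is promoted swallows, via the contraction step, the $C(\cdot)$-sets of a batch of $\Omega(\beta_i)$ vertices that were at level $i$ and adjacent (within $2$ hops) to it, and these batches are disjoint across distinct surviving leaders because each non-leader is contracted to exactly one leader. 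This sidesteps any double-counting and reduces everything to the clean statement "$|C(v)| \le n$ always", which is immediate since the $C(v)$'s partition a subset of $V$. The remainder is the routine logarithm manipulation above.
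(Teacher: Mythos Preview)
Your proposal takes a substantially more involved route than the paper, and the core charging step has a gap as stated. The paper's proof is essentially a one-line observation you overlooked: the level increases only in Line~\ref{line:levelincrease} of $\relabelintra$, where simultaneously $b(v)\gets b(v)^{\budgetconstant}$, so a vertex at level $\ell$ has budget exactly $\beta_0^{(\budgetconstant)^{\ell}}$ with $\beta_0=(T/n)^{1/2}$. The punchline is that once $b(v)\ge n$ the vertex can \emph{never again be marked saturated} (it cannot have more than $n$ active same-level neighbors, since the whole graph has at most $n$ vertices), so its level stops increasing. Setting $\beta_0^{(\budgetconstant)^{\ell}}\le n$ and solving gives $\ell\le\log_{\budgetconstant}\log_{\beta_0} n=O(\log\log_{T/n} n)$. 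No charging, no witness sets, no $|C(\cdot)|$ growth is needed.

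As for your approach, the step ``every leader at level $i$ that is promoted swallows \ldots\ $\Omega(\beta_i)$ vertices'' is not correct per leader. Several leaders can be sampled inside the same saturated cluster (indeed, among $\beta_i$ saturated vertices the expected number of leaders is $\Theta(\log n)$), and the non-leader saturated vertices then choose a leader in their $2$-hop \emph{arbitrarily}; nothing prevents one leader from absorbing almost all of them while another leader absorbs none. So a particular level-$(i{+}1)$ vertex need not represent $\prod_{j<i}\Omega(\beta_j)$ original vertices, and the per-vertex bound $|C(v)|\le n$ does not directly yield the desired inequality. The \emph{aggregate} version of your idea (bounding the total number $n_i$ of vertices that ever reach level $i$) can be made to work---it is essentially what the paper proves later as Claim~\ref{cl:nrecurse} for the space analysis---but that requires a separate concentration argument and is considerably more effort than the paper's direct proof of this lemma.
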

\begin{proof}
	Observe that the only place throughout Algorithm~\ref{alg:connectivity} that we increase the level of a vertex is in Line~\ref{line:levelincrease} of the $\relabelintra$ procedure. Within this line, the budget of the vertex is also increased from $b(v)$ to $b(v)^{\budgetconstant}$. Now, given that the initial budget of every vertex is $\beta_0 = (T/n)^{1/2}$, throughout the algorithm, we have $b(v) = \beta_0^{\budgetconstant^{\ell(v)}}$. On the other hand, observe that if a vertex reaches a budget of $n$, it will not be marked as saturated, and thus, we do not update its level/budget anymore. Therefore, we have
	$b(v) = \beta_0^{(\budgetconstant)^{\ell(v)}} \leq n$ which means 
	$\ell(v) \leq \log_{\budgetconstant}\log_{\beta_0} n = O(\log\log_{T/n} n)$.
\end{proof}

Combining the two lemmas above, we can prove the following bound on the round complexity.

\begin{lemma}\label{lem:rounds}
	With high probability the number of rounds executed by Algorithm~\ref{alg:connectivity} is $O(\log D + \log\log_{T/n} n)$.
\end{lemma}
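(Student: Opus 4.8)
The plan is to combine the two preceding lemmas directly. By Lemma~\ref{lem:levelsnomorethanloglog}, the level of any vertex never exceeds $O(\log\log_{T/n} n)$. Since Line~\ref{line:levelincrease} of $\relabelintra$ is the only place a level is modified and it always strictly increases a level by exactly one, the corresponding level of any vertex (in the sense of $\ell_r(\cdot)$, which is non-decreasing over time by Claim~\ref{cl:levelnondecreasing}) can increase at most $O(\log\log_{T/n} n)$ times throughout the entire execution. Hence we may take $L = O(\log\log_{T/n} n)$ as the quantity in the hypothesis of Lemma~\ref{lem:lplusloglog}.

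Next I would invoke Lemma~\ref{lem:lplusloglog} with this value of $L$: it states that $O(L + \log D)$ iterations of the for-loop suffice, with high probability, for every remaining connected component to become a clique. Substituting $L = O(\log\log_{T/n} n)$ gives that $O(\log D + \log\log_{T/n} n)$ iterations suffice w.h.p. Finally, since each iteration of the for-loop is implemented in $O(1)$ rounds of \MPC{} (as asserted in Section~\ref{sec:algdescribe} and detailed in Appendix~\ref{apx:implementation}), and once every component is a clique the algorithm terminates, the total number of \MPC{} rounds is $O(\log D + \log\log_{T/n} n)$ with high probability.

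The only subtlety — and the one place one must be slightly careful — is the bookkeeping on the high-probability events: Lemma~\ref{lem:lplusloglog} itself is stated to hold with high probability (it relies on Lemma~\ref{lem:2hoporlevelincrease} and Claim~\ref{cl:saturated}, both of which are w.h.p. statements over all iterations), and the union bound there is already taken over the (trivially at most $n^2$) iterations of the algorithm and over all vertices. Since Lemma~\ref{lem:levelsnomorethanloglog} is a deterministic statement, no further union bound is needed, and the composite failure probability remains $1/n^{\Omega(1)}$. I do not expect any genuine obstacle here: this lemma is purely a matter of chaining together Lemmas~\ref{lem:lplusloglog} and~\ref{lem:levelsnomorethanloglog} and recalling the $O(1)$-round-per-iteration implementation claim.
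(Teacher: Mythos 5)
Your proof is correct and matches the paper's own argument exactly: both combine Lemma~\ref{lem:lplusloglog} (with $L$ an upper bound on levels) with Lemma~\ref{lem:levelsnomorethanloglog} (giving $L = O(\log\log_{T/n} n)$) and appeal to the $O(1)$-round-per-iteration implementation. Your extra remarks on the probability bookkeeping are sound and simply make explicit what the paper leaves implicit.
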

\begin{proof}
	By Lemma~\ref{lem:lplusloglog}, it takes only $O(L + \log D)$ iterations for Algorithm~\ref{alg:connectivity} to halt where $L$ is an upper bound on the level of the vertices. Lemma~\ref{lem:levelsnomorethanloglog} shows that $L = O(\log\log_{T/n} n)$. Therefore, the round complexity of Algorithm~\ref{alg:connectivity} is $O(\log D + \log \log_{T/n} n)$.
\end{proof}

\subsection{Analysis of Algorithm~\ref{alg:connectivity} -- Implementation Details \& Space}\label{sec:space}

\begin{lemma}\label{lem:totalspace}
	The total space used by Algorithm~\ref{alg:connectivity} is $O(T)$.
\end{lemma}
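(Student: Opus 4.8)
The plan is to split the space used at the end of iteration $r$ into three parts: the current edge multiset $E_r$; the per-vertex bookkeeping (the labels $\level{\cdot}$, the budgets $\budget{\cdot}$, and the $\next(\cdot)$ pointers); and the correspondence sets $C(\cdot)$. The bookkeeping is $O(1)$ machine words per vertex (budgets never exceed $n$), hence $O(|V_r|)=O(n)$. For the sets $C(\cdot)$, one checks by inspecting line~\ref{line:passc} of $\relabelinter$ and the contraction step of $\relabelintra$ that $\{C(v):v\in V_r\}$ is always a partition of the original vertex set $V$ (the two operations only move elements between sets), so $\sum_{v\in V_r}|C(v)|=n$. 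Since $T\ge m+n\log^{\alpha}n\ge n$, everything but $E_r$ costs $O(n)=O(T)$, and it remains to show $|E_r|=O(T)$ with high probability for every $r$.

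To bound $|E_r|$, I would split its edges into surviving \emph{original} edges — at most $m\le T$ of them — and edges \emph{inserted} by some invocation of $\addtwohop$. When such an edge is inserted for a vertex $g$ whose level is then $\ell$, I charge it permanently to the pair $(g,\ell)$; this charge never moves, even after the edge's endpoints get relabeled or $g$ is merged into another vertex. The key claim is that at every moment at most $\budget{g}=\beta_\ell$ edges charged to $(g,\ell)$ are present. Indeed, each edge charged to $(g,\ell)$ leads from $g$ to a vertex that had level $\ell$ at insertion time and therefore has level $\ge\ell$ ever after (levels are non-decreasing, Claim~\ref{cl:levelnondecreasing}), so the number $c$ of present edges charged to $(g,\ell)$ is at most the number $d_g$ of $g$'s neighbors of level $\ge\ell$; a straightforward induction over the calls to $\addtwohop$ made while $g$ stays at level $\ell$ — using $c\le d_g$ and that each call adds at most $(\beta_\ell-d_g)^+$ edges, all charged to $(g,\ell)$ — shows $c\le\beta_\ell$ throughout, and once $g$ leaves level $\ell$ no further edge is charged to $(g,\ell)$ while relabelings and contractions only delete edges. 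Consequently the number of inserted edges in $E_r$ is at most $\sum_{\ell}\beta_\ell\,J_\ell$, where $J_\ell$ is the number of vertices that ever attain level $\ell$.

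Next I would bound $J_\ell$ by a level-by-level recursion. A vertex attains level $\ell\ge1$ only by being promoted in line~\ref{line:levelincrease}, i.e.\ only by being a level-$(\ell-1)$ leader. By the argument in the proof of Claim~\ref{cl:saturated}, whenever at least one vertex is saturated at level $\ell-1$ there are at least $\beta_{\ell-1}$ of them; since each is marked a leader independently with probability $3\log n/\beta_{\ell-1}<1$ (strict, because $\beta_{\ell-1}\ge\beta_0\ge\log^{\alpha/2}n>3\log n$), a Chernoff bound shows that in every iteration the number of level-$(\ell-1)$ leaders is $O\!\big(\tfrac{\log n}{\beta_{\ell-1}}\big)$ times the number of saturated level-$(\ell-1)$ vertices, with high probability after a union bound over the $\poly(n)$ (level, iteration) pairs. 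Because a vertex is saturated at a fixed level in at most one iteration — it is promoted or contracted away immediately — summing over iterations gives $J_\ell=O\!\big(\tfrac{\log n}{\beta_{\ell-1}}\big)J_{\ell-1}$ with $J_0=n$. Since every budget satisfies $\beta_\ell\ge\beta_0=(T/n)^{1/2}\ge\log^{\alpha/2}n$ and $\alpha$ is a large constant, unrolling yields $\beta_\ell J_\ell\le n\beta_\ell\prod_{j<\ell}O\!\big(\tfrac{\log n}{\beta_j}\big)$, a sum whose consecutive terms decrease by a factor $O(\log n)/\beta_{\ell-1}^{3/4}\ll1$ (using $\beta_\ell=\beta_{\ell-1}^{\budgetconstant}$), so it is dominated by its $\ell=0$ term $n\beta_0=\sqrt{nT}$. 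Hence the inserted edges number $O(\sqrt{nT})=O(T)$ (as $n\le T$), and together with the $m\le T$ original edges and the $O(n)$ bookkeeping, the total space is $O(T)$.

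The main obstacle is this last counting argument together with the subtlety flagged in the charging step: because a single vertex can absorb arbitrarily many others during a contraction while its own budget grows only polynomially, the naive "each vertex uses $\budget{v}$ space" accounting fails, and one must instead charge inserted edges to the level at insertion time and control the population of high-level vertices via the Chernoff concentration of the leader counts — which in turn rests on the structural fact from Claim~\ref{cl:saturated} that saturated vertices occur in batches of size at least $\beta_\ell$, and on the fact (also high-probability, via Claim~\ref{cl:saturated}) that every saturated vertex actually leaves its level within the same iteration.
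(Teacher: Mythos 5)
Your proof is correct, and it is sound where it matters most: the decomposition of the space into original edges, inserted edges, bookkeeping, and the $C(\cdot)$ sets; the invariant that a vertex $g$ at level $\ell$ can carry at most $\min\{d_g,\beta_\ell\}$ of the edges you charge to $(g,\ell)$; and the geometric decay $J_\ell = O(\log n/\beta_{\ell-1})\,J_{\ell-1}$, which rests on exactly the batch-saturation fact behind Claim~\ref{cl:saturated} together with the ``promoted or contracted away in the same iteration'' observation.

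The paper takes a different accounting route to the same key sum. It defines a single potential $y_r = |E_r| + \sum_{v\in V_r} s_r(v)$, where $s_r(v)$ is the \emph{remaining} budget $\max\{0,b_r(v)-d_r(v)\}$ (and $0$ if inactive). It shows that $\addtwohop$ only moves mass between the two terms (each new edge is paid for by a unit decrease in the remaining budget), that $\relabelinter$ cannot increase $y$ (an edge deletion can raise one endpoint's remaining budget but is offset by the $-1$ to $|E|$, and the paper argues a deleted duplicate can raise the remaining budget of at most one surviving endpoint), and that $\relabelintra$ increases $y$ by at most $\beta_{\ell(v)+1}$ per promotion, which over the whole run is bounded by $\sum_i \beta_i n_i$, the same quantity you call $\sum_\ell \beta_\ell J_\ell$. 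The paper then establishes essentially your recursion as Claim~\ref{cl:nrecurse}. So the key lemma and the Chernoff machinery are shared; the difference is that you charge inserted edges directly to $(\mathrm{vertex},\mathrm{level})$ pairs and bound each pair's live charge by $\beta_\ell$, whereas the paper amortizes via a potential that absorbs both the edge count and the slack in budgets. Your direct charging is arguably a bit cleaner because it sidesteps the care the paper needs around duplicate-edge removal and inactive vertices when arguing that $\relabelinter$ cannot increase the potential; the paper's potential-function formulation, on the other hand, also yields the companion statement $\sum_v b_r(v)^2 \le T$ (Lemma~\ref{lem:sumofbudgesquare}) more or less for free, which the implementation section relies on.

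One small stylistic point: you conclude the inserted-edge bound is $O(\sqrt{nT})$ and then use $n\le T$ to say this is $O(T)$; the paper phrases the same final step as $(T/n)^{1/2}\cdot(T/n)^{1/2}\cdot n = T$. Both are fine, and both crucially require $(T/n)^{1/2}\ge \log^{\alpha/2}n \gg \log n$ to kill the $O(\log n)$ factors in the recursion, which you correctly invoke.
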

\begin{proof}
To bound the total space used by Algorithm~\ref{alg:connectivity}, we have to bound the number of edges that may exist in the graph. More specifically, we have to show that within the \addtwohop{} subroutine, we do not add too many edges to the graph. Recall that we control this with the budgets. It is not hard to argue that sum of budgets of remaining vertices in each round of the algorithm does not exceed $T$. However, there is a subtle problem that prevents this property to be sufficient for bounding the number of edges in the graph. The reason is that throughout the algorithm, the degree of a vertex may be much larger than its budget. For instance in the first iteration, a vertex may have a degree of up to $\Omega(n)$ while the budgets are much smaller.

For the analysis, we require a few definitions. For every iteration $r$ and any vertex $v \in V_r$, we define $d_r(v)$ to be the number of neighbors of $v$ in $G_r$ with level at least $\ell_r(v)$. Moreover, we define the {\em remaining budget} $s_r(v)$ of $v$ to be $\max\{0, b_r(v) - d_r(v)\}$ if $v$ is active and 0 otherwise. To clarify the definition, note that within the \addtwohop{} subroutine, each vertex $v$ connects to at most $s_r(v)$ new vertices. We further define
$$
y_r := |E_r| + \sum_{v \in V_r} s_r
$$
to be the {\em potential space} by the end of iteration $r$. It is clear by definition that $y_r$ is an upper bound on the total number of edges in the graph after iteration $r$. Therefore it suffices to show that $y_r = O(T)$ for any $r$. The base case follows immediately:
\begin{observation}
	$y_0 = O(T)$.
\end{observation}
\begin{proof}
	We have $y_0 = |E_0| + \sum_{v \in V_0}s_r \leq m + n \cdot (\frac{T}{n})^{1/2} < m + T \leq O(T)$ where the last inequality comes from the fact that $T = \Omega(m)$.
\end{proof}
In what follows, we argue that for any $r$, we have $y_r \leq y_0 + O(T) = O(T) + O(T) = O(T)$ as desired. To do this, we consider the effect of each of the three subroutines of Algorithm~\ref{alg:connectivity} in any iteration $r+1$ on the value of $y_{r+1}$ compared to $y_r$. We first show that the two procedures $\addtwohop$ and $\relabelinter$ cannot increase the potential space. We then give an upper bound of $O(T)$ on the increase in the potential space due to procedure $\relabelintra$ over the course of the algorithm (i.e., not just one round). 

\smparagraph{$\addtwohop$ procedure.} In the \addtwohop{} procedure, each vertex $v$ connects itself to at most $s_r(v)$ other vertices of level at least $\ell(v)$ in its 2-hop as described above. These added edges, will then decrease the remaining budget of $v$ by definition. Therefore, for any edge that is added to the graph, the remaining budget of at least one vertex is decreased by 1. Thus, the total potential space cannot increase.

\smparagraph{$\relabelinter$ procedure.} Next, within the $\relabelinter$ procedure, we do not add any edges to the graph. Therefore, the only way that we may increase the potential space is by increasing the remaining budget of the vertices. If a vertex gets relabeled to a higher level neighbor, the algorithm marks it as inactive; this by definition decreases its remaining budget to 0. As such, it only suffices to consider the remaining budget of the vertices that are not relabeled; take one such vertex $v$. Recall that the remaining budget of $v$ depends on the level of the neighbors of $v$ as well. The crucial property here is that whenever a vertex is relabeled to a neighbor, its corresponding level is increased. This implies that the change in the corresponding level of $v$'s neighbors cannot increase the remaining budget of $v$.

There is still one way that $v$'s remaining budget may increase: if an edge $\{v, u\}$ with $\ell_r(u) \geq \ell_r(v)$ is removed from the graph. Recall that an edge may be removed from the graph within  Line~\ref{line:removeduplicateandselfloops} of $\relabelinter$ where we remove duplicate edges or self-loops. Note that if removal of an edge increases the remaining budget of \underline{one} of its endpoints only, then the potential space $y_{r+1}$ does not change as the increase in $\sum_{v} s_{r+1}(v)$ is canceled out by the decrease in $|E_{r+1}|$. However, we have to argue that removal of an edge cannot increase the remaining budget of its both end-points. To see this, observe that the graph, before the $\relabelinter$ procedure cannot have any duplicate edges or self-loops (as we must have removed them before) and all these edges have been created within this iteration. Take an edge $\{u, v\}$ and suppose that there are multiple duplicates of it. All, but at most one, of duplicates of $\{u, v\}$ are the result of the relabelings. Call these the {\em relabeled} edges and suppose due to symmetry that any removed edge is relabeled. Consider an edge $e'$ that is relabeled to $\{u, v\}$ and is then removed. At least one of endpoints of $e'$ must be some vertex $w$ which is relabeled to either $u$ or $v$, say $u$ w.l.o.g. An equivalent procedure is to remove $e'$ before $w$ is relabeled to $u$ and the outcome would be the same. Since $u \not\in e'$, there is no way that removing $e'$ would change the remaining budget of $u$. On the other hand, since $w$ is relabeled and does not survive to $V_{r+1}$ it does not have any effect on $s_{r+1}$. This means that removing any duplicate edge increases sum of remaining budgets by at most 1 thus the potential space cannot increase.

\smparagraph{\relabelintra{} procedure.} We showed that subroutines $\addtwohop$ and $\relabelinter$ cannot increase the potential space of the previous round. Here, we consider the effect of the last subroutine $\relabelintra$. Similar to $\relabelinter$, we do not add any edges to the graph. Therefore, we only have to analyze the remaining budgets after this procedure.

First take a vertex $v$ that is not marked as saturated. The remaining budget of $v$ may increase if some of its edges are removed because of duplicates which are caused by contracting saturated vertices to their leaders. However, precisely for the same argument that we had for the $\relabelinter$ procedure, removal of an edge can only increase the remaining budget of at most one of its end-points thus this does not increase the potential space.

Next, if a vertex $v$ is marked as saturated but is not marked as a leader, by Claim~\ref{cl:saturated} it is, w.h.p., going to get contracted to a leader and removed from the graph. Therefore, the only case for which the remaining budget of a vertex may increase is for saturated vertices that are marked as leaders. We assume the worst case. That is, we assume that if a vertex $v$ is saturated and is marked as a leader within iteration $r+1$, then the potential space is increased by its new budget $b_{r+1}(v)$ (note, by definition, that remaining budget can never be larger than budget). Instead of analyzing the effect of this increase within one iteration, we show that the total sum of such increases over all iterations of the algorithm is bounded by $O(T)$.

Let us use $\beta_i$ to denote the budget of vertices with level $i$ and use $n_i$ to denote the number of vertices that have been selected as a leader over the course of algorithm for at least $i$ times. In other words, $n_i$ denotes the total number of vertices that reach a level of at least $i$ throughout the algorithm. We can bound sum of increases in potential space due to the \relabelintra{} procedure over all iterations of the algorithm by:
\begin{equation}\label{eq:increaserb}
	\sum_{i=1}^{\infty} \beta_i \cdot n_i.
\end{equation}
Thus it suffices to bound this quantity by $O(T)$.

\begin{claim}\label{cl:brecurse}
	For any $i \geq 1$, we have $\beta_i = (\beta_{i-1})^{\budgetconstant}$ and have $\beta_0 = (T/n)^{1/2}$.
\end{claim}
\begin{proof}
	We have $\beta_0 = (T/n)^{1/2}$ for Line~\ref{line:initialize} of Algorithm~\ref{alg:connectivity}. Furthermore, we have $\beta_i = (\beta_{i-1})^{\budgetconstant}$  due to Line~\ref{line:levelincrease} of \relabelintra{} which is the only place we increase the level of a vertex and at the same time increase its budget from $b$ to $b^{\budgetconstant}$.
\end{proof}
We also have the following bound on $n_i$:

\begin{claim}\label{cl:nrecurse}
	For any $i \geq 1$ we have $n_i < n_{i-1} \cdot (\beta_{i-1})^{-0.25}$.
\end{claim}

Before describing the proof of Claim~\ref{cl:nrecurse}, let us first see we can get an upper bound of $O(T)$ for the value of (\ref{eq:increaserb}). For any $i \geq 1$, we have
\begin{equation}\label{eq:12834710239847}
\beta_i \cdot n_i 
\stackrel{\text{Claim~\ref{cl:brecurse}}}{=} 
(\beta_{i-1})^{\budgetconstant} \cdot n_i 
\stackrel{\text{Claim~\ref{cl:nrecurse}}}{<}
(\beta_{i-1})^{\budgetconstant} \cdot (\beta_{i-1})^{-0.25} \cdot n_{i-1}  = \beta_{i-1} \cdot n_{i-1}.
\end{equation}
On the other hand, recall by Lemma~\ref{lem:levelsnomorethanloglog} that the maximum possible level for a vertex is $L = O(\log \log_{T/n} n)$, meaning that for any $i > L$ we have $n_i = 0$; thus:
$$
	\sum_{i=1}^{\infty} \beta_i \cdot n_i = 	\sum_{i=1}^{L} \beta_i \cdot n_i
	\stackrel{\text{(\ref{eq:12834710239847})}}{<}
	L (\beta_0 \cdot n_0) \leq O(\log\log n) \cdot (T/n)^{1/2} \cdot n,
$$
where the last inequality comes from the fact that $\beta_0 = (T/n)^{1/2}$ due to Claim~\ref{cl:brecurse} and $n_0 = n$ by definition. Moreover, recall that $T \geq m+n\log^\alpha n$ for some large enough constant $\alpha$, therefore $(T/n)^{1/2} \geq \log^{\alpha/2} n \gg O(\log \log n)$. This means that
$$O(\log\log n) \cdot (T/n)^{1/2} \cdot n \ll (T/n)^{1/2} \cdot (T/n)^{1/2} \cdot n = T.$$
Therefore, the total increase over the potential space over the course of the algorithm is at most $T$, meaning that indeed for any $r$, $y_r = O(T)$ and thus in any iteration we have at most $O(T)$ edges. It is only left to prove Claim~\ref{cl:nrecurse}.

\begin{proof}[Proof of Claim~\ref{cl:nrecurse}]
	To prove the claim, we show that for every vertex of level $i-1$ that gets saturated and is marked as a leader, there are $(\beta_{i-1})^{0.5}$ other unique vertices of level $i-1$ that are not marked as a leader and are removed from the graph. This is clearly sufficient to show $n_i \leq n_{i-1} \cdot (\beta_{i-1})^{-0.5} \ll n_{i-1} \cdot (\beta_{i-1})^{-0.25}$.

	Consider some arbitrary iteration of the algorithm, and denote the set of saturated vertices and leaders with budget $\beta_{i-1}$ by $S$ and $L$ respectively. Since each saturated vertex of budget $\beta_{i-1}$ is chosen to be a leader independently with probability $\Theta(\frac{\log n}{\beta_{i-1}})$, we have $\E[|L|] = \Theta(\frac{\log n}{\beta_{i-1}} |S|)$. On the other hand, note that if $S \not = \emptyset$, we have $|S| \geq \beta_{i-1}$ since a vertex of budget $\beta_{i-1}$ is marked as saturated in Line~\ref{line:saturated1} of $\relabelintra$ if it has at least $\beta_{i-1}$ active neighbors with budget $\beta_{i-1}$, all of which will also get marked as saturated in Line~\ref{line:saturated2} and thus join $S$. Therefore, $|S| \geq \beta_{i-1}$, meaning that $
\E[|L|] = \Theta(\frac{\log n}{\beta_{i-1}}|S|) = \Omega(\log n).
$
Thus, by a standard Chernoff bound argument, we get $|L| = \Theta(\frac{\log n}{\beta_{i-1}}|S|)$ with high probability. On the other hand, recall that by Claim~\ref{cl:saturated}, every non-leader saturated vertex will be contracted to a leader in its 2-hop. That is, all vertices in $S \setminus L$ will be removed from the graph. This, averaged over the vertices in $L$, we get 
$$
\frac{|S \setminus L|}{|L|} \geq \frac{|S| - |L|}{|L|} \geq \frac{|S|}{|L|} - 1 \geq \frac{|S|}{\Theta(\frac{\log n}{\beta_{i-1}}|S|)} - 1 \geq \Theta\left(\frac{\beta_{i-1}}{\log n}\right)
$$
unique vertices that are removed from the graph per leader. Thus, it suffices to show that $\frac{\beta_{i-1}}{\log n} \gg (\beta_{i-1})^{0.5}$. For this, observe from Claim~\ref{cl:brecurse} and $T \geq m + n\log^\alpha n$ that $\beta_{i-1} \geq (T/n)^{1/2} \geq \log^{\alpha/2} n$ where $\alpha$ is some sufficiently large constant. It suffices to set $\alpha > 4$, say $\alpha = 5$, to get $\beta_{i-1} \gg \log^{2} n$ and thus $\log n \ll (\beta_{i-1})^{0.5}$. This indeed means
$
\Theta(\frac{\beta_{i-1}}{\log n}) \gg \frac{\beta_{i-1}}{(\beta_{i-1})^{0.5}} = (\beta_{i-1})^{0.5}
$
as desired.
\end{proof}

We already showed how proving Claim~\ref{cl:nrecurse} gives an upper bound of $O(T)$ on the potential space of all iterations, which by definition, is also an upper bound on the number of edges in the graph, concluding the proof of Lemma~\ref{lem:totalspace}.
\end{proof}

The next lemma is important for implementing the algorithm.

\begin{lemma}\label{lem:sumofbudgesquare}
	For any $r$, we have $\sum_{v \in V_r} (b_r(v))^2 \leq T$.
\end{lemma}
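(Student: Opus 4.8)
The plan is to exhibit a monovariant: I would show that the quantity $\Psi_r := \sum_{v \in V_r} (b_r(v))^2$ never increases from one iteration to the next (on the high-probability event that the earlier claims hold), and that $\Psi_0 \leq T$; together these immediately give $\Psi_r \leq T$ for every $r$. The base case is direct: every vertex starts at level $0$ with budget $\beta_0 = (T/n)^{1/2}$, so $\Psi_0 = n\cdot\beta_0^2 = n\cdot(T/n) = T$. Note this is exactly tight, so there is no slack to spare — the whole argument hinges on $\Psi$ never growing.

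Next I would walk through the three subroutines of a single iteration and argue each leaves $\Psi$ non-increasing. The $\addtwohop$ subroutine changes neither the vertex set nor any level or budget, so $\Psi$ is unchanged. The $\relabelinter$ subroutine never raises a vertex's level or budget and never adds a vertex (it only relabels vertices, marks some inactive, and deletes isolated ones), so $\Psi$ can only decrease. The only subroutine that could raise $\Psi$ is $\relabelintra$: it promotes leader vertices of some level $i-1$ to level $i$, raising their budget from $\beta_{i-1}$ to $\beta_i = \beta_{i-1}^{\budgetconstant}$, while (w.h.p., by Claim~\ref{cl:saturated}) contracting away every non-leader saturated vertex of level $i-1$. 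Fix an iteration and a level $i-1$, and let $S_{i-1}$ and $L_{i-1}\subseteq S_{i-1}$ be the saturated and leader vertices of level $i-1$ (recall the budget is determined by the level, as in the proof of Lemma~\ref{lem:levelsnomorethanloglog}). The non-saturated level-$(i-1)$ vertices are untouched, the leaders move to level $i$, and the rest of $S_{i-1}$ vanishes; since these vertex sets are disjoint over $i$, it suffices to check, for each $i$, that the level-$(i-1)$ contribution does not grow, i.e.
\[
|L_{i-1}|\cdot\beta_i^2 \;\leq\; |S_{i-1}|\cdot\beta_{i-1}^2 .
\]

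This is where the actual content lies, but it is essentially already done: since $\beta_i^2 = \beta_{i-1}^{2\budgetconstant} = \beta_{i-1}^{2.5}$, the displayed inequality is equivalent to $|L_{i-1}|\,(\beta_{i-1})^{1/2}\leq|S_{i-1}|$, and the proof of Claim~\ref{cl:nrecurse} establishes precisely this — it shows that w.h.p. each level-$(i-1)$ leader causes at least $(\beta_{i-1})^{1/2}$ (indeed $\Theta(\beta_{i-1}/\log n)$) distinct non-leader saturated level-$(i-1)$ vertices to be contracted away, so $|S_{i-1}|\geq|S_{i-1}\setminus L_{i-1}|\geq|L_{i-1}|\,(\beta_{i-1})^{1/2}$. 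Summing over all levels $i$ then shows $\relabelintra$ does not increase $\Psi$ on the relevant high-probability event. Finally I would union-bound the failure probabilities over all $O(\log D + \log\log_{T/n} n)$ iterations, exactly as in Claim~\ref{cl:saturated}, to conclude $\Psi_r \leq \Psi_0 = T$ for all $r$ w.h.p. The one place requiring care is this per-level accounting in $\relabelintra$: one must make sure the leader fraction $|L_{i-1}|/|S_{i-1}| = \Theta(\log n/\beta_{i-1})$ is small enough to beat the budget-squared growth factor $\beta_i^2/\beta_{i-1}^2 = (\beta_{i-1})^{1/2}$, which reduces to $\beta_{i-1}\gg\log^2 n$ — guaranteed by the standing assumption $T\geq m + n\log^\alpha n$ with $\alpha$ a sufficiently large constant, just as in the proof of Claim~\ref{cl:nrecurse}.
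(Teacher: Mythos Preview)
Your proposal is correct and follows essentially the same approach as the paper: both establish $\Psi_0 = T$, observe that only $\relabelintra$ can raise $\Psi$, and then balance the budget-squared gain of each new leader (which is $\beta_{i-1}^{2.5}$) against the budget-squared loss from the $\geq(\beta_{i-1})^{0.5}$ distinct non-leader saturated vertices of the same level that are contracted away (appealing to the proof of Claim~\ref{cl:nrecurse}). The only cosmetic difference is that the paper does this accounting per leader while you aggregate per level.
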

\begin{proof}
	We use induction on $r$. For the base case with $r=0$, we have $$\sum_{v \in V_0}(b_0(v))^2 = \sum_{v \in V_0}((T/n)^{1/2})^2 = T.$$ 
	Suppose by the induction hypothesis that $\sum_{v \in V_{r-1}}(b_{r-1}(v))^2 \leq T$, we prove that $\sum_{v \in V_r}(b_{r}(v))^2 \leq T.$ For this, it suffices to show that $\sum_{v \in V_r}(b_{r}(v))^2 \leq \sum_{v \in V_{r-1}}(b_{r-1}(v))^2$. Recall that we only increase the budgets in the $\relabelintra$ procedure, thus we only have to consider the effect of this procedure. Take a vertex $v \in V_r$ and with $b_{r}(v) > b_{r-1}(v)$ (otherwise the sum remains unchanged clearly). Note that $v$ must have been marked as a leader in iteration $r$ and thus $b_{r}(v) = b_{r-1}(v)^{\budgetconstant}$. Recall from the proof of Claim~\ref{cl:nrecurse} above that there are at least $b_{r-1}(v)^{0.5}$ unique vertices for $v$ with budget $b_{r-1}(v)$ that get removed from the graph in iteration $r$. Denote the set of these vertices by $U$. Removing these vertices decreases sum of budgets' square by 
	\begin{equation}\label{eq:8234234}
	\sum_{u \in U} b_{r-1}(u)^2 = |U| (b_{r-1}(v))^2 \geq b_{r-1}(v)^{2.5}.
	\end{equation}
	On the other hand, increasing the budget of $v$ from $b_{r-1}(v)$ to $b_{r-1}(v)^{1.25}$ increases the sum of budgets' square by $(b_{r-1}(v)^{1.25})^2 = b_{r-1}(v)^{2.5}$ which is not more than the decrease due to (\ref{eq:8234234}). Thus, we have $\sum_{v \in V_r}(b_{r}(v))^2 \leq \sum_{v \in V_{r-1}}(b_{r-1}(v))^2$ as desired.
\end{proof}

Finally, we have to argue that each iteration of Algorithm~\ref{alg:connectivity} can be implemented in $O(1)$ rounds of \MPC{} using $O(n^\delta)$ space per machine and with $O(T)$ total space. Since the proof is straightforward by known primitives, we defer it to Appendix~\ref{apx:implementation}.

\section{Improving Total Space to $O(m)$}\label{sec:shrinkvertices}

In the previous section, we showed how it is possible to find connected components of an input graph in $O(\log D + \log\log_{T/n} n)$ rounds so long as $T \geq m + n\log^\alpha n$. In this section, we improve the total space to $O(m)$. The key to the prove is an algorithm that shrinks the number of vertices by a constant factor with high probability. More formally:

\begin{lemma}\label{lem:shrink}
	There exists an \MPC{} algorithm using $O(n^\delta)$ space per machine and $O(m)$ total space that with high probability, converts any graph $G(V, E)$ with $n$ vertices and $m$ edges to a graph $G'(V', E')$ and outputs a function $f: V \to V'$ such that:
	\begin{enumerate}[ref={\arabic*},itemsep=-0.5ex]
		\item $|V'| \leq \gamma n$ for some absolute constant $\gamma < 1$.
		\item $|E'| \leq |E|$.
		\item For any two vertices $u$ and $v$ in $V$, vertices $f(u)$ and $f(v)$ in $V'$ are in the same component of $G'$ if and only if $u$ and $v$ are in the same component of $G$.
	\end{enumerate}
\end{lemma}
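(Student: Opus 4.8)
The plan is to realize the shrinking with a single round of \emph{randomized contraction}: we build a random map $f\colon V\to V$ with $f(v)\in\{v\}\cup N(v)$ such that whenever $f(v)\neq v$ the vertex $f(v)$ is itself a fixed point of $f$, and then set $V'=f(V)$ and $E'=\{\{f(u),f(v)\}:\{u,v\}\in E,\ f(u)\neq f(v)\}$. Any such contraction makes the edge bound $|E'|\le|E|$ and the connectivity requirement automatic: no edge is ever added, and merging a vertex into a neighbor neither splits nor joins connected components, so the component partition of $G'$ is exactly that of $G$ read through $f$. Thus the whole task is to ensure that a constant fraction of $V$ becomes non-fixed-points with high probability. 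We assume without loss of generality that $G$ has no isolated vertices --- they form singleton components and can be stripped off in a preprocessing step, and the assumption is in any case necessary for the bound $|V'|\le\gamma n$ to be achievable at all --- so that $n\le 2m$, which makes the $O(m)$ bookkeeping below legitimate. Fix the threshold $d\eqdef n^{\delta/2}$ and call $v$ \emph{big} if $\deg(v)\ge d$ and \emph{small} otherwise; degrees and this classification cost $O(1)$ \MPC{} rounds. Since the two classes partition $V$, at least one has size $\ge n/2$, which splits the argument into two cases.

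\textbf{Case 1: at least $n/2$ vertices are small.} We use a variant of random mate in which big vertices are \emph{unconditional centers}. Each small vertex independently becomes a \emph{center} with probability $\tfrac12$ and a \emph{leaf} otherwise, and every big vertex is declared a center. A leaf that has at least one center in its neighborhood is merged into such a center (the one of smallest identifier); leaves with no center neighbor, centers, and big vertices are all fixed points of $f$, so $f$ has the required form and no vertex moves twice. For a fixed small --- hence non-isolated --- vertex $v$: conditioned on $v$ being a leaf (probability $\tfrac12$), it has a center neighbor with probability $1$ if some neighbor is big and with probability at least $1-2^{-\deg(v)}\ge\tfrac12$ otherwise, so $v$ is merged with probability $\ge\tfrac14$. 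Writing $X$ for the number of merged vertices, $\E[X]\ge\tfrac14\cdot\tfrac n2=\tfrac n8$. To turn this into a high-probability statement, view $X$ as a function of the at most $n$ independent leaf/center coins of the small vertices: flipping the coin of $w$ changes the merged-status only of $w$ and of its neighbors, so the bounded-difference parameter of coordinate $w$ is at most $1+\deg(w)\le 1+d$, and $\sum_w c_w^2\le n(1+d)^2=O(n^{1+\delta})$. McDiarmid's inequality then gives
\[
\Pr\!\big[\,X\le\tfrac{n}{16}\,\big]\;\le\;\exp\!\Big(-\Omega\big(n^{2}/n^{1+\delta}\big)\Big)\;=\;\exp\!\big(-\Omega(n^{1-\delta})\big)\;\le\;n^{-c}
\]
for every constant $c$ and all large enough $n$ (when $n=O(1)$ the whole graph fits on one machine). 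Hence w.h.p.\ $|V'|\le\tfrac{15}{16}n$.

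\textbf{Case 2: more than $n/2$ vertices are big.} Then $2m=\sum_v\deg(v)>\tfrac n2\,d$, and we contract big vertices into a random sample. Put each vertex into a set $R$ independently with probability $q\eqdef\frac{c'\log n}{d}$ for a large constant $c'$; then $\E|R|=O(n\log n/n^{\delta/2})=o(n)$, so $|R|\le n/6$ w.h.p.\ by a Chernoff bound. Each big $v\notin R$ has at least $d$ neighbors, so the probability that $v$ has no neighbor in $R$ is at most $(1-q)^d\le n^{-c'}$; a union bound shows that w.h.p.\ every big vertex has a neighbor in $R$. Let $f$ merge every big vertex outside $R$ into one of its neighbors in $R$ and fix all vertices of $R$ together with all small vertices; again no vertex moves twice. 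The number of merged vertices is at least $(\#\text{ big})-|R|>\tfrac n2-\tfrac n6=\tfrac n3$ w.h.p., so $|V'|\le\tfrac23 n$.

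Taking $\gamma\eqdef\tfrac{15}{16}$ covers both cases, and $f$ is exactly the contraction map produced above. Each ingredient --- computing degrees, flipping the coins, propagating coin values and $R$-membership flags along the edges, picking one merge target per non-fixed vertex, relabeling edge endpoints, and discarding duplicate edges and self-loops --- is a local or aggregation operation implementable in $O(1/\delta)=O(1)$ \MPC{} rounds with $O(n^\delta)$ space per machine and $O(m)$ total space; for big vertices, whose adjacency lists may not fit on one machine, we use the standard $O(1/\delta)$-round primitive for aggregating over groups of size $\omega(n^\delta)$ (Appendix~\ref{apx:implementation}), and the only state beyond the $\le m$ edges is $O(n)=O(m)$ words. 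I expect the genuine obstacle to be the high-probability guarantee in Case~1: a constant-factor reduction \emph{in expectation} is essentially free for any contraction rule, but forcing it to hold \emph{with high probability on every input} --- in particular on large stars, where plain random mate does not concentrate at all --- is precisely what dictates both the rule that big vertices are always centers and the choice $d=n^{\delta/2}$, which keeps the McDiarmid differences, and hence the failure probability, polynomially small.
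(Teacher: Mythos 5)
Your proof is correct, but it takes a genuinely different route from the paper's. The paper constructs a rooted forest by having each vertex point to its minimum-id neighbor, prunes to a collection of disjoint edges (via a token-counting argument bounding the attrition when cascading removals kill long tails), then randomly subsamples edges and merges the isolated ones; concentration is obtained by marking every third edge of the resulting paths so that the marked indicators are mutually independent and a plain Chernoff bound applies. You instead do a single round of random mate, with the crucial twist of splitting by a degree threshold $d=n^{\delta/2}$: big vertices are forced centers (killing the non-concentration of random mate on stars), small vertices flip coins, and when big vertices dominate you switch to a sampled-representatives scheme. Your concentration comes from McDiarmid with Lipschitz constants capped by $d$, which is where the threshold choice earns its keep. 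The paper's route buys a fully elementary Chernoff argument at the price of a more intricate combinatorial preprocessing (forest, token argument, long-tail stars); yours buys a shorter combinatorial core at the price of a bounded-differences inequality and a case split. Two small remarks: your phrase ``polynomially small'' undersells the bound --- $\exp(-\Omega(n^{1-\delta}))$ is superpolynomially small --- and, as in the paper, the ``no isolated vertices'' reduction should be stated as stripping them and returning them as singleton components separately, so that $f$ is still well-defined on all of $V$ even though $V'$ omits them.
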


We emphasize that Lemma~\ref{lem:shrink} shrinks the number of vertices by a constant factor {\em with high probability}. This is crucial for our analysis. An algorithm that shrinks the number of vertices by a constant factor in expectation was already known  \cite{DBLP:conf/soda/KarloffSV10} but cannot be used for our purpose.

Let us first show how Lemma~\ref{lem:shrink} can be used to improve total space to $O(m)$ proving Theorem~\ref{thm:main}.

\begin{proof}[Proof of Theorem~\ref{thm:main}]
	First, observe that if $m \geq n \log^\alpha n$ or if $T \geq m + n\log^\alpha n$, then the algorithm of Section~\ref{sec:algorithm} already satisfiees the requirements of Theorem~\ref{thm:main}. Assuming that this is not the case, we first run the algorithm of Lemma~\ref{lem:shrink} for $(\alpha \log_{1/\gamma} \log n)$ iterations. Let $G'(V', E')$ be the final graph and $f$ be the function mapping the vertices of the original graph to those of $G'$. We have
	\begin{equation*}
	|V'| \leq n \cdot \gamma^{\alpha \log_{1/\gamma} \log n} = n \cdot \log^{-\alpha} n.
	\end{equation*}
	Now, we can run the algorithm of Section~\ref{sec:algorithm} on graph $G'$ to identify its connected components. The total space required for this is $$O(|E'| + |V'| \cdot \log^{\alpha} |V'|) = O\Big(m + (n \cdot \log^{-\alpha} n) \cdot \log^\alpha n\Big) = O(m+n) = O(m).$$
	We can then use function $f$ to identify connected components of the original graph in $O(1)$ rounds.
	
	Also, observe that the running time required is $O(\log \log n) + O(\log D + \log\log_{T/n} n)$. Given that $m \leq n\log^\alpha n$ and $T \leq m + n\log^\alpha n$ (as discussed above), we have $T/n = O(\poly\log n)$, thus $\log \log_{T/n} n = \Omega(\log \frac{\log n}{\log \log n}) = \Omega(\log \log n)$; meaning that $O(\log \log n) + O(\log D + \log\log_{T/n} n) = O(\log D + \log \log_{T/n} n)$  and thus the running time also remains asymptotically unchanged.
\end{proof}

We now turn to prove Lemma~\ref{lem:shrink}.

\begin{proof}[Proof of Lemma~\ref{lem:shrink}]
	In order to prove this lemma, we show that the following procedure reduces the number of vertices of the graph by a constant factor, with high probability. This procedure only merges some neighboring vertices and hence maintains the connected components. In this procedure, without loss of generality, we assume that there is no isolated vertices. One can simply label and remove all isolated vertices at the beginning.
	It is easy to implement this procedure in $\frac 1 {\delta}$ rounds using $O(n^{\delta})$ space per machine and a total space of $O(m)$ (see Appendix~\ref{apx:implementation} for implementation details).

\begin{tboxalg}{Shrinks the number of vertices by a constant factor in $O(1)$ rounds w.h.p.}\label{alg:shrinkvertices}
	\begin{enumerate}[ref={\arabic*}, topsep=0pt,itemsep=-0.5ex,partopsep=0ex,parsep=1ex, leftmargin=*]
		\item For each vertex $v$, draw a directed edge from $v$ to its neighbor with the minimum id. \label{line1}
		\item If for two vertices $u$ and $v$, we drew two directed edges $(u,v)$ and $(v,u)$, we remove one arbitrary.\label{line2}
		\item If a vertex has more than one incoming edge, we remove its outgoing edge.\label{line3}
		\item If a vertex $v$ has more than one incoming edge, we merge it with all its neighbors pointing to $v$ and remove the incoming directed edges of the neighbors of $v$.\label{line4}
		\item We remove each edge with probability $2/3$.\label{line5}
		\item We merge each directed isolated edge.\label{line6}
	\end{enumerate}
\end{tboxalg}

Next we show that this procedure reduces the number of vertices by a constant factor. For readability, we do not optimize this constant. 
Note that in Line \ref{line1} we are adding $n$ edges. It is easy to see that there is no cycle of length larger than $2$ in the directed graph constructed in Line \ref{line1}. Line \ref{line2} removes at most half of the edges. Moreover, it removes all cycles of length $2$. Thus by the end of Line \ref{line2} we have a rooted forest with at least $n/2$ edges. 

After Line \ref{line3} every vertex with indegree more than $1$ has no outgoing edges. Recall that each vertex has at most one outgoing edge. Thus, after Line \ref{line3} we have a collection of rooted trees where only the root may have degree more than $2$. We call such trees \emph{long tail stars}. Note that if we remove the outgoing edge of a vertex $v$ there are two incoming edges pointing to $v$ (which uniquely correspond to $v$). Although the process of Line \ref{line3} may cascade and remove the incoming edges of $v$, the following simple double counting argument bounds the number of removed edges. Note that this argument is just to bound the number of the edges and we do not require to run it, in order to execute our algorithm.

We put a token on each directed edge of the forest (before running Line \ref{line3}). Next we are going to move the tokens such that (a) we never have more than two tokens on each edge, and (b) at the end we move all tokens to the edges that survive after Line \ref{line3}. This says that at least half of the edges (i.e., at least $n/4$ edges) survive Line \ref{line3}.

We traverse over each rooted tree from the root to the leaves. At each step, if the outgoing edge of a vertex $v$ is removed, by induction hypothesis there are at most two tokens on the edge. Also, $v$ has at least two incoming edges. We move each of the tokens on the outgoing edge of $v$ to one of its incoming edges. Note that this is the only time we move a token to the incoming edges of $v$ and hence we do not have more than two tokens on each edge as desired.

If we merge a vertex $v$ with $r$ incoming edges in Line \ref{line4}, we remove at most $2r$ directed edges ($r$ incoming edges of $v$ and at most one incoming edge per each neighbor of $v$. On the other hand, we decrease the number of vertices by $r$. Thus, if this stage removes more than $n/8$ edges the number of vertices drops to at most $n-\frac n {16} = \frac{15}{16}n$, as desired. To complete the proof, we assume that at most $n/8$ edges are removed in Line \ref{line4} and show that in this case Lines \ref{line5} and \ref{line6} decrease the number of vertices by a constant factor.

Note that Line \ref{line4} removes the root of all long tailed stars. Thus after Line \ref{line4} we have a collection of directed edges. The probability that an edge passed to Line \ref{line5} becomes an isolated edge after sampling is at least $\frac 2 3 \cdot \frac 1 3 \cdot \frac 2 3 = \frac 4 {27}$. If we mark every third edge (starting from an end of each path), the chance that each marked edge becomes an isolated edge after sampling is independent of other marked edges. There are $\frac 1 3 \cdot \frac n 8 = \frac n {24}$ marked edges. Let $X$ be a random variable that indicates the number of marked edges that are isolated after sampling. Note that $\E[X]\geq \frac{4}{27}\frac{n}{24}=\frac n {162}$. By applying a simple Chernoff bound we have
\begin{align*}
\Pr\Big[X \leq  0.5 \frac n {162}\Big] \leq \exp \Big(-\frac{0.5^2 \frac n {162}}{2}\Big) 
=  \exp \Big(-\frac{n}{1296}\Big).
\end{align*}
Therefore, with high probability we merge at least $\frac{n}{324}$ edges in Line \ref{line6} as desired. 
\end{proof}

\section{Lower Bound}\label{sec:hardness}
In this section we show a conditional lower bound on the round complexity of finding connected components in the \MPC{} model.
We use the following conjecture to show our hardness result.
\begin{conjecture}[\twocycle{} conjecture~\cite{DBLP:conf/icml/YaroslavtsevV18, DBLP:conf/spaa/RoughgardenVW16, cc-contractions, DBLP:journals/corr/abs-1805-02974}]\label{conj:twocycle}
	Any \MPC{} algorithm that uses $n^{1-\Omega(1)}$ space per machine requires $\Omega(\log n)$ rounds to distinguish one cycle of size $n$ from two cycles of size $n/2$ with high probability.
\end{conjecture}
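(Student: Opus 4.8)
This is the well-known \twocycle{} conjecture, so the honest starting point of any plan is that it is \emph{open}: as the excerpt itself notes, any super-constant unconditional \MPC{} lower bound in this regime for a problem in $\P$ would imply $\NC^1 \subsetneq \P$, which is beyond current techniques. The realistic plan is therefore twofold. First, I would prove the strongest unconditional special cases --- an $\Omega(\log n)$ bound for algorithms whose inter-machine communication pattern is data-oblivious, and for the natural class of \emph{component-stable} algorithms (where a machine's output concerning a connected component cannot depend on the rest of the graph). Second, I would give a clean conditional version: deduce the statement from a standard hardness assumption, e.g.\ hardness of $s$--$t$ connectivity / pointer-chasing for sublinear-space streaming or for bounded-depth circuits of the relevant size, via a simulation of one \MPC{} round by a shallow circuit.

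For the restricted-class version the plan is a fooling-set / indistinguishability argument. Fix two input distributions: $\mathcal{D}_1$, a uniformly random labelling of a single $n$-cycle, and $\mathcal{D}_2$, a uniformly random labelling of a disjoint pair of $(n/2)$-cycles. The first observation is that these are \emph{locally identical}: for any fixed radius $r$ the labelled $r$-hop neighbourhood of a vertex has the same distribution under both, and more strongly any set of $o(n)$ edges has the same joint distribution of incidences. The core step is a \emph{propagation lemma}: tracking, for each machine and each round $t$, the set $A_t$ of original vertices that machine ``holds a contiguous arc through,'' show that under the space budget $S = n^{1-\Omega(1)}$ and total space $O(m)=O(n)$, each $A_t$ is w.h.p.\ an arc of the cycle whose length at most doubles from round to round. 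This is where the $O(n)$ total-space constraint does the real work: there is only room to maintain $O(1)$ ``long-range'' pointers per vertex, so a round can at best compose existing pointers (graph-exponentiation-style doubling) rather than populating genuinely new long arcs. Given the lemma, after $t < \frac{1}{2}\log n$ rounds no machine holds an arc long enough to ``see the cycle close up,'' so by local indistinguishability the output distribution is identical under $\mathcal{D}_1$ and $\mathcal{D}_2$, contradicting w.h.p.\ correctness; hence $t = \Omega(\log n)$.

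The main obstacle --- and the reason this is a conjecture, not a theorem --- is precisely the propagation lemma in its \emph{unrestricted} form. For an arbitrary algorithm the naive dependency bound is useless: one received word can in principle encode a function of the sender's whole state, so the set of input edges on which a machine's state depends can blow up by a factor of $n^{\Omega(1)}$ per round, and one cannot conclude anything about arcs. Ruling this out requires showing that such dependence cannot be converted into \emph{useful} information about the single global wraparound bit --- an information-theoretic statement that current \MPC{} lower-bound technology (conditional reductions, round elimination against restricted classes, the $\NC^1$-vs-$\P$ barrier) does not reach. Accordingly, the deliverable of this plan is the restricted-model and conditional statements; the general case is, as far as is known, equivalent in difficulty to a major circuit-complexity separation, and the rest of the paper (as in this excerpt) simply assumes Conjecture~\ref{conj:twocycle} rather than proving it.
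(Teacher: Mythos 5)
You correctly identify that the statement is a \emph{conjecture}, not a theorem: the paper does not prove it, but states it as a hardness assumption (with citations to prior work that also states or uses it) and uses it to derive the conditional lower bound in Theorem~\ref{thm:hard}. Your assessment of its status — that an unconditional proof would resolve $\NC^1$ vs.\ $\P$ and is beyond current techniques, and that only restricted-model or conditional versions are within reach — matches the paper's own framing in the introduction and in Section~\ref{sec:hardness}, so there is no proof in the paper to compare your sketch against.

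One small caution on the extra content: your ``propagation lemma'' sketch leans on the $O(m)=O(n)$ total-space bound to argue that a machine can only maintain $O(1)$ long-range pointers per vertex and hence arcs at most double per round. Be careful not to present that as close to a proof even in the data-oblivious or component-stable special cases — as you yourself note two sentences later, a single received word can encode an arbitrary function of the sender's state, and the conjecture as stated does not restrict to pointer-doubling-style algorithms. The conjecture also allows $n^{1-\Omega(1)}$ space per machine (not just $n^{\delta}$ for small $\delta$) and arbitrary polynomial total space in some formulations, so any unconditional argument you do push through should be explicit about exactly which space regime it covers. None of this affects the correctness of your main point, which is that the paper assumes rather than proves Conjecture~\ref{conj:twocycle}.
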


\newcommand{\alg}[0]{\ensuremath{\textsc{alg}}}

The conjecture above implies that the round complexity of our algorithm is tight for graphs with diameter $\Omega(n)$. However, it leaves the possibility of having faster algorithms for graphs with smaller diameter. For instance, one may still wonder whether for the case of graphs with $D = n^{1-\Omega(1)}$, an $O(1)$ connectivity algorithm exists or not. In what follows, we refute this possibility and show that the round complexity of our algorithm is indeed conditionally tight as long as $D = \log^{1+\Omega(1)} n$.
\hard*
\begin{proof}
	We prove this theorem by contradiction. Assume that there exists an algorithm $\alg$ in the \MPC~model with $n^{1-\Omega(1)}$ space per machine that finds all connected components of any given graph with diameter $D'$ w.h.p. in $o(\log D')$ rounds. Using this assumption we show that the following procedure applied to a graph consisting of sufficiently long disjoint cycles, shrinks the length of each cycle by a factor of $\frac {4\log n}{D'}$ w.h.p.
	
	\begin{tboxalg}{}
		\textbf{Input}: a graph $G$ consisting of disjoint cycles.
		\smallskip

	\begin{enumerate}[ref={\arabic*}, topsep=0pt,itemsep=-0.5ex,partopsep=0ex,parsep=1ex, leftmargin=*]
		\item Remove each edge of $G$ with probability $\frac {2\log n}{D'}$, thus obtaining $G'$.
		\item Find the connected components of $G'$ using $\alg$.
		\item Contract each connected component of $G'$ to a single vertex and return the obtained graph.
	\end{enumerate}
\end{tboxalg}

We first prove two properties of Algorithm~\ref{alg:shrinkvertices}, then show how it helps in obtaining a contradiction.

\begin{claim}
	Algorithm~\ref{alg:shrinkvertices} takes $o(\log D')$ rounds of \MPC{} with $n^{1-\Omega(1)}$ space per machine w.h.p.
\end{claim}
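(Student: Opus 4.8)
The plan is to treat the three steps of the procedure separately, since only Step~2 is nontrivial. Steps~1 and~3 are purely local: in Step~1 each machine independently tosses a coin for every edge it stores and deletes it with probability $\frac{2\log n}{D'}$, and in Step~3 we relabel both endpoints of every surviving edge by the component identifier returned by $\alg$ and then discard duplicate edges and self-loops. Each of these is implemented in $O(1/\delta)=O(1)$ rounds of \MPC{} with $n^{1-\Omega(1)}$ space per machine using the standard primitives of Appendix~\ref{apx:implementation}, so together they contribute only $O(1)$ rounds.

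For Step~2 the point is that the graph $G'$ handed to $\alg$ has diameter at most $D'$ with high probability, so $\alg$'s promised running time applies. Because the input is a disjoint union of cycles, $G'$ is a disjoint union of paths (with at most exponentially-small probability an entire cycle survives intact), and the diameter of a component equals its number of edges. A fixed window of $D'+1$ cyclically-consecutive edges survives Step~1 with probability $\bigl(1-\tfrac{2\log n}{D'}\bigr)^{D'+1}\le \exp\!\bigl(-\tfrac{2(D'+1)\log n}{D'}\bigr) < n^{-2}$; taking a union bound over the at most $n$ such windows (the total number of edges is at most $n$), with probability at least $1-n^{-1}$ no $D'+1$ consecutive edges all survive. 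On this event every maximal run of surviving edges has length at most $D'$, so every component of $G'$ is a path on at most $D'+1$ vertices (and no cycle of length exceeding $D'$ stays intact), whence $\operatorname{diam}(G')\le D'$. Conditioned on this, the assumed guarantee of $\alg$ gives that Step~2 terminates in $o(\log D')$ rounds and outputs the correct components, except with an additional polynomially-small failure probability; a union bound over the two bad events keeps the overall failure probability $n^{-\Omega(1)}$. Combining the three steps yields a round count of $O(1)+o(\log D')+O(1)=o(\log D')$, as claimed.

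The main obstacle — and essentially the only place the argument needs care — is the diameter bound for $G'$: one has to verify that the random deletions cut every long cycle into pieces short enough to trigger $\alg$'s $o(\log D')$-round promise, which is exactly the run-length concentration estimate above. A minor subtlety is that $\alg$'s guarantee is phrased for graphs of diameter $D'$, whereas a component of $G'$ may have smaller diameter $D''\le D'$; this is harmless since the round complexity is then $o(\log D'')=o(\log D')$, so no loss occurs. Another minor point is that $\alg$ is a \MPC{} algorithm with $n^{1-\Omega(1)}$ space per machine, and $G'$ (being a subgraph of the original input) is already laid out in this format, so no extra redistribution rounds are incurred.
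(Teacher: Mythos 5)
Your proof is correct and follows essentially the same route as the paper's: Steps 1 and 3 are local and cost $O(1)$ rounds, and Step 2 is handled by showing that after the random deletions every run of $D'+1$ cyclically-consecutive edges has at least one edge removed with probability $\geq 1-n^{-1}$ (the same $(1-\tfrac{2\log n}{D'})^{D'+1}\le n^{-2}$ estimate plus a union bound over the at most $n$ windows), so $G'$ has diameter at most $D'$ and $\alg$'s promise applies. The extra remarks you make—handling components of diameter $D''<D'$, noting the data layout requires no redistribution, and that whole cycles survive only with negligible probability—are correct but are details the paper leaves implicit.
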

\begin{proof}
	Lines 1 and 3 of Algorithm~\ref{alg:shrinkvertices} can be trivially implemented in $O(1)$ rounds of \MPC{}. It suffices to show that the diameter of graph $G'$ is at most $D'$ so that running \alg{} takes $o(\log D')$ rounds.
	
	Fix a path of length $D'+1$ in $G$. The probability that all edges of this path survive Line 1 of Algorithm~\ref{alg:shrinkvertices} is $(1-\frac {2\log n}{D'})^{D'+1} \leq e^{-2\log n} = n^{-2}$. There are only $n$ such paths in $G$, thus by a simple union bound, w.h.p., none of them survives to $G'$; meaning that diameter of $G'$ is $\leq D'$.
\end{proof}

\begin{claim}\label{cl:removededges}
If $G$ has $m$ edges and $D' \leq m$, Algorithm~\ref{alg:shrinkvertices} removes at most $\frac{4 m \log n}{D'}$ edges w.h.p.
\end{claim}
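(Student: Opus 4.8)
The plan is that this is a single Chernoff bound; the only non-mechanical observation is that the hypothesis $D' \le m$ is exactly what makes the expected number of deleted edges large enough for concentration to apply.

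First I would set up notation. Let $p := \frac{2\log n}{D'}$ be the deletion probability from Line~1 (which satisfies $p \le 1$ since $D' \ge \log^{1+\rho} n \ge 2\log n$), and let $X$ be the number of edges of $G$ deleted in Line~1. Then $X$ is a sum of $m$ independent $\{0,1\}$ variables with mean $\mu := \E[X] = mp = \frac{2m\log n}{D'}$, and the claim is precisely the statement that $X \le 2\mu$ with high probability. Here is where the hypothesis enters: $D' \le m$ gives $\mu = \frac{2m\log n}{D'} \ge 2\log n$, i.e.\ the mean is $\Omega(\log n)$. (In the lower-bound application $G$ is a disjoint union of cycles, so in fact $m = n$ and $D' \le n = m$ holds automatically; stating the claim with the hypothesis $D' \le m$ just makes the role of this quantity explicit.)

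Then I would apply the multiplicative Chernoff bound $\Pr[X \ge 2\mu] \le e^{-\mu/3}$ and combine it with $\mu \ge 2\log n$ to obtain
\[
\Pr\!\left[X \ge \tfrac{4m\log n}{D'}\right] \;=\; \Pr[X \ge 2\mu] \;\le\; e^{-\mu/3} \;\le\; n^{-\Omega(1)},
\]
which is the claim. If one insists on failure probability below $n^{-c}$ for an arbitrarily large constant $c$ (rather than merely inverse polynomial), it suffices to inflate the constant $4$ in the statement to a constant $c' = c'(c)$, or to run $O(c)$ independent copies and union bound; the argument of Section~\ref{sec:hardness} invokes this estimate only $\poly\log n$ times, so an inverse-polynomial bound already suffices downstream.

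I do not expect any real obstacle here: modulo the choice of the exact Chernoff form, the entire content is the bookkeeping fact $D' \le m \Rightarrow \mu \ge 2\log n$. The one point to be slightly careful about is that, when this subroutine is applied iteratively, one should not conflate the edge count $m$ of the current (already shrunk) graph with that of the original instance; but since each application only deletes edges and keeps every component a cycle, the hypothesis $D' \le m$ (indeed $m = n_{\text{current}}$) persists, so the same estimate applies unchanged at every step.
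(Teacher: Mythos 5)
Your proof is correct and is essentially identical to the paper's: both compute $\mu = \E[Z] = \tfrac{2m\log n}{D'}$, observe that $Z$ is a sum of independent Bernoulli variables, apply the multiplicative Chernoff bound $\Pr[Z \ge 2\mu] \le e^{-\mu/3}$, and then use $D' \le m$ to conclude $\mu \ge 2\log n$ and hence an inverse-polynomial failure probability. The extra remarks you add (that $p \le 1$, that $m=n$ in the downstream application, and about boosting the constant) are sensible but not needed for the claim.
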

\begin{proof}
Let $Z$ be the number of removed edges. We have $\E[Z]= \frac {2\log n }{D'} \cdot m$. Observe that $Z$ is sum of independent Bernoulli random variables and hence by Chernoff bound we have
	$$
		\Pr\Big[Z \geq 2\cdot\frac {2m\log n }{D'} \Big]
		\leq \exp \Big( -\frac{2m\log n }{3D'} \Big) 
		\leq \exp \Big( -\frac{2\log n }{3} \Big) 
		=n^{-2/3},
	$$
	as desired.
\end{proof}
	
	We iteratively run Algorithm~\ref{alg:shrinkvertices} and shrink the graph until it fits the memory of a single machine. Observe that after each application of Algorithm~\ref{alg:shrinkvertices}, only those edges that were removed from the graph will remain as the rest of the edges are contracted to single vertices. This means by Claim~\ref{cl:removededges} that if the current graph has $m$ edges, after one application of Algorithm~\ref{alg:shrinkvertices}, the resulting graph will have at most $\frac{4m\log n}{D'}$ edges. We repeat Algorithm~\ref{alg:shrinkvertices} for at most 
	\begin{align*}
\log_{\frac {D'} {4 \log n} }n = \frac{\log n}{\log(\frac {D'}{4\log n})} \leq \frac{\log n}{\log(\frac {D'}{\log n}) - \log 4} \leq \frac{\log n}{\log (D'^{\frac{\rho}{1+\rho}})-\log 4} &\leq \frac{\log n}{(\frac{\rho}{1+\rho})\log D' - \log 4} = O\bigg(\frac{\log n}{\log D'}\bigg)	
	\end{align*}
	times until the number of edges in the graph drops to $n^{o(1)}$ where we can store the entire graph on a single machine and solve the problem. The overall round complexity would be
	$
	O\big(\frac{\log n}{\log D'}\big) \cdot o(\log D) = o(\log n)
	$
	which is a contradiction.
\end{proof}

\section*{Acknowledgements}

We thank the anonymous FOCS reviewers for their comments. In addition, we thank S. Cliff Liu, Robert E. Tarjan, and Peilin Zhong \cite{DBLP:journals/corr/abs-2003-00614} for demonstrating an edge-case that we previously overlooked in our analysis; see Footnote~\ref{fnlabel} for a simple fix.


\appendix
\section{$\MPC{}$ Implementation}\label{apx:implementation}

In this section we provide some details on the implementation of
our algorithm in $\MPC$. We start by reviewing some known computational primitives in the $\MPC{}$ model with strictly sublinear space per machine.

\subsection{Primitives}
The following primitives can be implemented in the $\MPC{}$ model using $O(n^{\delta})$ space per machine.  All of the algorithms here
use space proportional to the input to the primitive (denoted by $N$).
\begin{itemize}
\item {\bf Sorting.} Sorting $N$ tuples be solved in
$O(1/\delta)$ rounds~\cite{DBLP:conf/isaac/GoodrichSZ11}. The input
is a sequence of $N$ tuples, and a comparison function $f$. The output
is a sequence of $N$ tuples that are in sorted order with respect to
$f$.

\item {\bf Filtering.} Filtering $N$ tuples can be solved in
$O(1/\delta)$ rounds. The input is a sequence of $N$ tuples, and a
predicate function $f$. The output is a sequence of $k$ tuples such
that a tuple $x$ is in the output if and only if $f(x) = \texttt{true}$.

\item {\bf Prefix Sums.} Computing the prefix sum of a sequence of $N$
tuples can be solved in $O(1/\delta)$
rounds~\cite{DBLP:conf/isaac/GoodrichSZ11}.
The input is a sequence of tuples $\{t_1, \ldots, t_N\}$, an
associative binary operator $\oplus$, and an identity element, $\bot$.
The output is a sequence of tuples $S$, s.t. $S_j$ is equal to $t_j$,
with an extra entry containing $\bot \oplus_{i=1}^{j-1} t_j$. Note that
reductions are a special case of prefix sums.

\item {\bf Predecessor.} The predecessor problem on a sequence of $N$
tuples can be solved in $O(1/\delta)$
rounds~\cite{DBLP:conf/isaac/GoodrichSZ11,andoniparallel}.
The input is a sequence of $N$ tuples, where each tuple has an
associated value in $\{0, 1\}$. The problem is to compute for each
tuple $x$, the first tuple $x'$ that appears before $x$ in the
sequence s.t. $x'$ has an associated value of $1$.

\item {\bf Duplicate Removal.} Given a sequence of $N$ elements, we
can remove duplicates in the sequence in $O(1/\delta)$ rounds by
simply sorting the elements and removing any tuple that is identical
to the one before it, using predecessor.
\end{itemize}

\subsection{Algorithm Implementation}
Here we show that each subroutine used in the algorithm can be
implemented in $O(1)$ rounds of $\MPC{}$. We start by describing the
representation of the data structures maintained by the algorithm.

\newpara{Data Representation}
Here we specify the representation of several data structures that we maintain over the course of the algorithm. 
All of the data structures are collections of tuples, which have a natural distributed
representation and can be stored on any machine. In addition to each tuple we store the
round that the tuple was written. This simplifies the process of applying updates to the sets.
\begin{itemize}
  \item $G=(V, E)$: The graph is represented as a set of vertex
  neighborhoods. Each vertex $u \in V$ stores its neighborhood, $N(u)$
  as a set of tuples $(u, v)$, which can be located on any machine.

  \item $b$: budgets are represented as a collection of tuples $(u,
  b(u))$. The levels, and active flags for each vertex are stored
  similarly.

  \item $C$: The set of vertices that have been merged to some vertex,
  $u$, are stored as a collection of $(u, x)$ tuples.

  \item We represent degrees implicitly by storing $O(\log \log_{T/n}
    n)$ entries for each vertex, $v$. The $i$'th entry indicates the
    number of level $i$ neighbors that $v$ has. We refer to this
    per-vertex structure as its degree array.
\end{itemize}

\newpara{Updating Budgets and Levels}
We update the budgets and levels as follows. We emit a
tuple $(v, b(v), r)$ where $r$ is the current iteration of the algorithm.
Updates can be processed by first sorting by decreasing lexicographic
order. Next, we can use predecessor and a filter to eliminate any
tuple $(v, b(v)', r')$ that is overwritten by a tuple $(v, b(v), r)$
where $r > r'$. This can be done in $O(1)$ rounds.

\newpara{Merging and Updating Neighbor Sets and Degrees}
As the algorithm proceeds we merge active vertices to other vertices
(for example in $\relabelinter{}$, when merging to our highest level
neighbor, and in $\relabelintra{}$, when merging to a leader in our
2-hop). We assume that the output of each merge operation is a tuple
$(x, m)$ indicating that a currently active vertex $x$ is merged to
$m$.

Let $m(x)$ be the id that $x$ is merging to. To merge vertices, we
map over all tuples $(u, v)$ representing the graph, to a set of
tuples $(u, v, 0)$. We also add the tuples $(x, m, 1)$ for each merge
input. Then we sort by the first entry in the tuple, and run
predecessor, which associates each $(u, v, 0)$ tuple with $m(x)$, and
lets us emit a collection of $(m(x), v, 0)$ tuples. We apply the same
idea again on the output of the previous step, with the first and
second entries swapped, which produces a collection of $(m(v), m(x),
0)$ tuples (the tuples with 1 can be filtered out). The graph on the
merged vertices is produced by swapping the components, and removing
any duplicate tuples that result from merges. We also remove
self-loops by filtering all $(u, v)$ tuples in $G$ where $u = v$.

Lastly, we can update the degree arrays by recomputing them after each
update and merge step. Updating is easily done by sorting, and
applying a prefix sum to produce the number of level $i$ neighbors
incident to each vertex.

Since we apply a constant number of $O(1)$ round algorithms, the
procedure to update all vertex sets and degrees in the graph takes
$O(1)$ rounds in total.

\newpara{Computing Degrees}
Note that the degree of all vertices can be computed by sorting all
$(x, y)$ tuples in the graph and applying a prefix sum. Computing the
degree of each vertex can therefore be done in $O(1)$ rounds.
Alternately, we can prefix sum the degree array for the vertex.

Since we maintain the level $i$ degree of each vertex explicitly in
the degree arrays, we can easily compute the induced degree of a
neighboring vertex when restricted to vertices with level $\geq i$.
This is done by applying a prefix sum over the tuples with degree
at least $i$.

\subsection{Implementing Algorithm~\ref{alg:connectivity}}\label{apx:connectivityimpl}

\newpara{Implementing $\addtwohop{}$}
In $\addtwohop{}$, each \emph{active} vertex, $v$, either fully
connects itself to its 2-hop if the size of its 2-hop has size at most
$b(v)$, or connects itself to $b(v) - d(v)$ neighbors arbitrarily.
First for each vertex $u \in N(v)$ we compute its degree when restricted to
vertices with level at least $\ell(v)$. This can be done in $O(1)$
rounds as described previously. Note that if the restricted degree of $v$ is more than $b(v)$ we do not need to add any edges since $b(v) - d(v) \leq 0$. Thus, we assume $d(v) < b(v)$.

{\bf Case 1.} If any of $v$'s neighbors has a restricted degree $\geq b(v)$, we
take the first $b(v)$ vertices from this neighbors degree and union
them with the vertices currently in $N(v)$. We mark each of our
current neighbors with $0$ and mark each new (incoming vertex) with
$1$ and remove duplicates (ignoring the $\{0, 1\}$ tag on the
tuples). Next, we sort lexicographically. If this set contains more
than $b(v)$ vertices, we pick the first $b(v)$ to include in $N(v)$
and drop the remaining tuples. Note that any neighbors of $v$ are
guaranteed to remain in this set, since $d(v)$ was initially less than
$b(v)$, and the lexicographic sort will order our existing neighbors
before any new neighbors. The total space used in this step is
$O(b(v))$ per vertex, and the output is exactly $b(v)$ neighbors with
level at least $l(v)$.

{\bf Case 2.} If each of $v$'s neighbors has restricted degree $< b(v)$, since
$v$ has less than $b(v)$ neighbors of level at least $\ell(v)$, and each has restricted degree
smaller than $b(v)$, we can copy each of these neighbor lists into the
space available for $v$, which is at least $\Theta(b(v)^{2})$.
Finally, we can remove duplicates for these neighbors. By using a
similar tagging idea as in the previous step, we tag each neighbor
based on whether it was present in $N(v)$ originally. If more than
$b(v)$ vertices are produced in this step, we pick the first $b(v)$ of
them in the lexicographically ordered sequence.

As we use a constant number of primitive calls, each of which
require $O(1)$ rounds, the overall round-complexity is $O(1)$.
Furthermore, the maximum space a vertex uses is at most $O(b(v)^{2})$, which by Lemma~\ref{lem:sumofbudgesquare}, is precisely the space we can use for each vertex while using only $O(T)$ total space. The output either
fully connects $v$'s 2-hop, or updates $N(v)$ to have size at most
$b(v)$, adding vertices chosen arbitrarily from $v$'s
degree-restricted 2-hop.

\newpara{Implementing $\relabelinter{}$}
In $\relabelinter{}$, each \emph{active} vertex, $v$, chooses
the highest level vertex $h(v)$ in its direct neighborhood and merges
itself to it.

We can implement this procedure by having each active vertex sort its
direct neighbors by their level in descending order. If the first
neighbor in this sequence has the same level as $v$, we do nothing,
otherwise we have found $h(v)$ and merge $v$ to $h(v)$ by emitting a
tuple $(v, h(v))$. We then merge and update the neighbor sets of all
vertices in $G$. Overall, the algorithm runs in $O(1)$ rounds.

\newpara{Implementing $\relabelintra{}$}
In $\relabelintra{}$, each active saturated vertex, $v$, first samples
itself to be a leader. If it is chosen as a leader, it does nothing.
Otherwise, each non-leader saturated vertex selects a leader in its
2-hop and merges with it. The sampling can easily be done in $1$ round
of computation, assuming that each machine has a source of randomness.
For each sampled leader, we update the level and budget of the vertex
by writing the tuple $(v, l(v) + 1)$ to the levels and $(v,
b(v)^{\budgetconstant})$ to the budgets. These are updated as
described earlier.

Next, we must check whether a vertex has a leader in its 2-hop, which
can be implemented as follows. First, each active saturated vertex
chooses a candidate leader in its neighborhood, breaking ties
arbitrarily if multiple leaders are present. If no candidate exists,
we mark this fact with a null value. This can be done by using mapping
the graph to tuples $(u, v, l)$ where $l \in \{0, 1\}$ indicates
whether $v$ is marked as a leader. Then, for each $u$ we inject a
tuple $(u, \infty, 0)$, perform a lexicographic sort, and compute
predecessor. Each $(u, \infty, 0)$ tuple finds the first tuple before
it that contains a $1$---if this tuple's first entry starts with $u$,
we use the second entry as the chosen candidate for $u$. Otherwise,
the candidate is set to null. The candidates are a collection of $(u,
c)$ pairs, where $c$ is the candidate for vertex $u$ and null
otherwise. Note that the algorithm just described computes a function
$f$ (in this case projecting a leader) which is aggregated over the
neighbors of a vertex in $O(1)$ rounds.

Lastly, each non-leader saturated vertex performs another aggregation,
identical to the one described in the previous step, which gives each
active saturated vertex $v$ a leader, $l(v)$, in its 2-hop w.h.p.
We emit tuples $(v, l(v))$ indicating that $v$ is merged with $l(v)$.
Finally we merge and update the neighbor sets of all vertices in $G$.
The algorithm runs in $O(1)$ rounds as it performs a constant number
of steps, each of which take $O(1)$ rounds.

\subsection{Implementing Algorithm~\ref{alg:shrinkvertices}}
We now discuss how each subroutine used in
Algorithm~\ref{alg:shrinkvertices} can be implemented. Recall that
this algorithm eliminates a constant factor of the vertices in the
graph w.h.p. in $O(1)$ rounds of $\MPC{}$.

Line~\ref{line1} can be implemented by using a reduction (prefix sum
with $\min$) over the neighbors of each vertex. Line~\ref{line2} can
be implemented by sorting the chosen edges, and removing duplicates.
Note that if both $(u, v)$ and $(v, u)$ are chosen (say $u < v$), only
the $(u, v)$ edge remains. To implement Line~\ref{line3}, we first
compute the in-degree of each vertex, which can be done by sorting.
Next, we send the in-degrees of each vertex to its outgoing edge,
which can be done via sorting and predecessor, and drop the outgoing
edge if its in-degree is greater than 1. To implement
Line~\ref{line4}, each $v$ with incoming edge set $\{(v, u_1), \ldots,
(v, u_k)\}$, we generate the tuples $\{(u_1, v), \ldots, (u_k, v)\}$
which indicates that $u_1, \ldots, u_k$ should be merged to $v$.
Recall that we can use a previously described merging algorithm to
merge these vertices to $v$ in $O(1)$ rounds. Line~\ref{line5} simply
drops each tuple for a remaining edge with probability $2/3$. Lastly,
in Line~\ref{line6} we can detect isolated edges by computing the
in-degree and out-degree of the vertices as previously described,
summing them together, and choosing out edges of vertices whose
in-degree and out-degree sum to $1$, which can be done in $O(1)$
rounds in total. As each step takes $O(1)$ rounds of $\MPC{}$, each
call to Algorithm~\ref{alg:shrinkvertices} takes $O(1)$ rounds in
total.

\section{PRAM implementation}

In this section we show that our connectivity algorithm can be simulated in
$O(\log D + \log\log_{m/n} n)$ depth on the \multipram{}, a strong model of parallel
computation permitting concurrent reads and concurrent writes. The parallel
algorithm we derive performs $O((m+n)(\log D + \log\log_{m/n} n))$ work and is
therefore nearly work-efficient. We start by describing existing \PRAM{} models,
how the \multipram{} compares to these models, and reviewing existing results on
parallel graph connectivity algorithms from the literature.

\subsection{Model}
We state results in this section in the \emph{work-depth model} where the \emph{ work} is equal to the number of operations required (equivalent to the
processor-time product) and the \emph{depth} is equal to the number of time steps
taken by the computation. The related machine model used by our algorithms is
the parallel random access machine (\PRAM{}). Note that in work-depth models, we
do not concern ourselves with how processors are mapped to tasks (see for
example, Jaja~\cite{DBLP:books/aw/JaJa92} or Blelloch et al.~\cite{Blelloch96,
  Blelloch:2010:PA:1882723.1882748}). We now place our machine model, the
\multipram{}, in context by reviewing related PRAM models.

The \arbpram{} handles concurrent writes to the same memory cell by selecting an
arbitrary write to the cell to succeed. The \scanpram{} extends the \arbpram{}
with a unit-depth scan (prefix-sum) operation~\cite{DBLP:conf/icpp/Blelloch87}
(note that in the original paper the extended model was the EREW \PRAM{}). The
inclusion of this primitive is justified based on the observation that a
prefix-sum can be efficiently implemented in hardware as quickly as retrieving a
reference to shared-memory. The \combpram{} combines concurrent writes to the
same memory location based on an associative and commutative combining operator
(e.g., sum, max). The \multipram{} extends the \arbpram{} with a unit-depth
\emph{multiprefix operation} which is a generalization of the scan operation
that performs multiple independent scans. The input to the multiprefix operation
is a sequence of key-value pairs. The multiprefix performs an independent scan
for each key, and outputs a sequence containing the result of each
scan. The \multipram{} was proposed by Ranade, who gave a routing
algorithm for butterfly networks in which a multiprefix could be
implemented as quickly as fetching a memory
reference~\cite{Ranade:1989:FPC:916125}. In all \PRAM{} models
considered in this paper we assume that each processor has its own
random source. We refer the interested reader to the Karp and
Ramachandaran chapter on parallel algorithms for more details on
\PRAM{}s~\cite{KarpR90}.

We observe that all of aforementioned \PRAM{} models can be
work-efficiently
simulated in the \MPC{} model with strictly sublinear space per machine, such that the number of rounds of the resulting \MPC{} computation is asymptotically equal to the depth. To see
this, note that the \multipram{} can work-efficiently simulate all of the other
\PRAM{} variants, without an increase in depth. Furthermore, a multiprefix
operation on $n$ key-value pairs can be implemented in the \MPC{} with strictly
sublinear space per machine in $O(n)$ space and $O(1)$ rounds by performing
independent scan operations for each key independently in parallel (see Section
E.6 (Multiple Tasks) in~\cite{andoniparallel} for implementation details).
Therefore, the \MPC{} model with strictly sublinear space per machine is more
powerful than the \PRAM{} variants described above.

\subsection{Parallel Connectivity Algorithms}
Connectivity algorithms on the \PRAM{} have a long history, and many
algorithms have been developed over the past few
decades~\cite{shiloach1982connectivity, awerbuch1983connectivity,
  reif1985cc, phillips1989contraction, gazit1991connectivity,
  DBLP:conf/spaa/KargerNP92, cole96minimum, halperin1994cc,
  halperin2000cc, poon1997msf, pettie2000mst, shun2014practical}.
Classic parallel connectivity algorithms include the hook-and-contract
algorithms of Shiloach and Vishkin~\cite{shiloach1982connectivity} and
Awerbuch and Shiloach~\cite{awerbuch1983connectivity}, and the
random-mate algorithms of Reif~\cite{reif1985cc} and
Phillips~\cite{phillips1989contraction}. All of these algorithms
reduce the number of vertices in each round by a constant fraction,
but do not guarantee that the number of edges reduces by a constant
fraction, and therefore perform $O(m\log n)$ work and run in $O(\log
n)$ depth on the \multipram{}. (The algorithms of Reif and Phillips
are randomized, so the bounds hold w.h.p.)

Historically, obtaining a work-efficient parallel connectivity algorithm
(an algorithm which performs asymptotically the same work as the most
efficient sequential algorithm) was difficult, and progress was not
made until the early 90s~\cite{gazit1991connectivity}. A number of
work-efficient algorithms were subsequently
discovered~\cite{cole96minimum, halperin1994cc, halperin2000cc,
poon1997msf, pettie2000mst, shun2014practical}. Many of these
work-efficient algorithms also achieve $O(\log n)$ depth, with the
algorithm of Halperin and Zwick achieving this bound on the EREW
\PRAM{}~\cite{halperin1994cc}.

Since a work bound of $O(m+n)$ is optimal, the remaining question is
whether the depth can be improved.  In terms of lower-bounds, the
results of Cook \etal{}~\cite{Cook:1986:ULT:13535.13541} and
Dietzfelbinger \etal{}~\cite{Dietzfelbinger:1994:ELT:185297.185303}
imply a lower bound of $\Omega(\log n)$ depth for connectivity on
randomized EREW \PRAM{}s. Despite the lower-bound only holding for
the EREW setting, to the best of our knowledge all existing
connectivity algorithms in the \PRAM{} literature run in least
$\Omega(\log n)$ depth, even in models permitting concurrent reads and
writes. A natural question therefore is whether we can solve
connectivity on a stronger \PRAM{} model in $o(\log n)$ depth.

If work-inefficiency is permitted, the answer is
certainly yes. For example, on the CRCW \PRAM{}, which permits
unbounded fan-in writes, a folklore result for connectivity is to
perform matrix squaring, stopping once each connected component
becomes a clique. This algorithm has $O(\log D)$ depth, but requires
$O(n^{3})$ work using a combinatorial matrix multiplication algorithm
and is therefore work-inefficient for sparse graphs. Note that
$O(\log D)$ depth is a natural goal for the \PRAM{}, and is in fact a
lower bound if the \twocycle{} conjecture is true, due to known
simulations of the \PRAM{} on
\MPC{}~\cite{DBLP:conf/soda/KarloffSV10,
DBLP:conf/isaac/GoodrichSZ11}. Is there a more work-efficient parallel
connectivity algorithm that runs in $O(\log D)$ depth? Our result
shows that a nearly work-efficient \PRAM{} algorithm with $O(\log D +
\log\log_{m/n} n)$ depth w.h.p.  exists in the \multipram{}, resolving
this question affirmatively for graphs with $D =
\Omega(\mathsf{polylog}(n))$.

\subsection{Multiprefix CRCW Implementation of Algorithm~\ref{alg:connectivity}}\label{apx:paralleconnimpl}

\newpara{Data Representation}
We represent \emph{active} and \emph{next} as dense arrays of length $n$.  We
represent $C$ using an array for each vertex. We store the graph in a sparse
format, storing each vertex's neighbors in an array.

\newcommand{\defn}[1]{\emph{\textbf{#1}}}
\newpara{Sequence Primitives}
\begin{itemize}
\item \defn{Map} takes as input an array $A$ and a function $f$ and
applies $f$ to each element of $A$. Map can be implemented in $O(n)$
work and $O(1)$ depth on all of the \PRAM{} models considered in this
paper.

\item
\defn{Scan} takes as input an array $A$ of length $n$, an associative binary
operator $\oplus$, and an identity element $\bot$ such that $\bot \oplus x = x$
for any $x$, and returns the array $(\bot, \bot \oplus A[0], \bot \oplus A[0]
\oplus A[1], \ldots, \bot \oplus_{i=0}^{n-2} A[i])$ as well as the overall sum,
$\bot \oplus_{i=0}^{n-1} A[i]$. We use \defn{plus-scan} to refer to the scan
operation with $\oplus = +$ and $\bot = 0$. Scan can be implemented by simply
dropping the keys (or making all keys equal) and applying a multiprefix.

\item
\defn{Multiprefix} takes as input an array $A = [(k_0, v_0), \ldots, (k_{n-1},
v_{n-1})]$ of length $n$, an associative binary operator $\oplus$ and an
identity element $\bot$ (similarly to scan), and returns an array $[(k_0, v'_0),
\ldots, (k_{n-1}, v'_{n-1})]$ where the output values associated with each key
are the result of applying an independent scan operation for the values with
each key. We use a \defn{plus-multiprefix} to refer to the multiprefix operation
with $\oplus = +$ and $\bot = 0$.

\item
\defn{Remove Duplicates} takes an array $A$ of elements and returns a new array
containing the distinct elements in $A$, in the same order as in $A$.
Removing duplicates can be implemented by using a plus-multiprefix operation
where the keys are elements, and the values are initially all $1$s. Since
the plus-multiprefix assigns the first instance of each key a $0$ value, the
keys corresponding to values greater than $0$ are filtered out, leaving only a
single copy of each distinct element in $A$.

\item
\defn{Filter} takes an array $A$ and a predicate $f$ and returns a new array
containing $a \in A$ for which $f(a)$ is true, in the same order as in $A$.
Filter can be implemented by first mapping the array in parallel with the
predicate $f$, and setting the key to $1$ if $f(e)$ is true, and $0$ otherwise.
A plus-multiprefix is then used to assign each element with a $1$ key contiguous
and distinct indices. Finally, the elements where $f(e)$ is true are copied to
the output array using the indices from the previous step.
\end{itemize}

The multiprefix operation on an array of length $n$ costs $O(n)$ work
and $O(1)$ depth on the \multipram{}. Therefore, scan, filter and
removing duplicates on arrays of length $n$ can also be implemented in
$O(n)$ work and $O(1)$ in this model.

We note that for convenience, the parallel algorithm we describe below
often runs \emph{multiple} multiprefix operations in parallel. These
operations can usually be run using a single multiprefix.  In
particular, parallel multiprefix operations can be simulated using a
single operation so long as each parallel operation is keyed by a
unique key. The idea is to prepend the unique key to the keys within
each multiprefix operation. In our implementation, this unique key is
usually the vertex id.

\newpara{Graph Primitives}
\defn{Symmetrize} takes as input a directed graph $G=(V, E)$ as a
collection of tuples and outputs an undirected graph $G_{S}$ in
adjacency array form. The algorithm first computes a plus-scan over
the array $S = [2d(i) | i \in [0, n)]$. Each vertex $v$ then copies
its incident out-edges into an array of size $2m$ at offset $S[v]$,
and copies the edges with their direction reversed at offset $S[v] +
d(v)$. Next, the algorithm removes duplicates from the array. Lastly,
the algorithm collects the edges incident to a vertex contiguously.
This is done by first running a plus-multiprefix, $M$, where the keys are
the first component of each tuple and the value is $1$. Next, the
algorithm computes $S$, a plus-scan over the distinct keys, where the
value is degree of the vertex, which is obtained from the result of
the multiprefix within each key. The algorithm finally allocates an
array proportional to the output of this scan, and copies edge $i$
into the location $S[fst(i)] + M[i]$ where $fst(i)$ is the first
component of edge $i$. The work of this operation is $O(|E| + |V|)$
and the depth is $O(1)$.

\defn{Contract} takes as input an undirected graph $G=(V, E)$ in
adjacency array form and a mapping $m : V \rightarrow V$ s.t.  either
$m(u) = u$ or $m(u) = v$ where $v \in N_{G}(u)$. The contraction algorithm
constructs the graph $G' = \{(m(u), m(v))\ |\ (u, v) \in
E(G)\}$ with duplicate edges and self-loops removed.  The contraction
algorithm is implemented as follows. The algorithm first uses a
plus-scan to count the number of remaining vertices, $n'$. Next, it
computes $d'(v)$, the degree of the $v$ in $G'$ using a
plus-multiprefix (or a combining write, which can be work-efficiently
simulated in $O(1)$ depth). It then maps each edge $(u, v) \in E(G)$
to $(m(u), m(v))$, and uses a multiprefix operation to remove
duplicates, which takes $O(m)$ work and $O(1)$ depth. After this step,
edges exist in both directions and all edges incident to a vertex
$m(u)$ are stored contiguously. The work of this operation is $O(|E| +
|V|)$ and the depth is $O(1)$.

\newpara{Implementing $\addtwohop{}$}
Recall that in this step every active vertex either fully connects to its 2-hop
if the size of its 2-hop is at most $b(v)$, or connects to $C = \min(b(v)-d(v),
|H(v)|)$ neighbors arbitrarily. For ease of discussion we assume that we must
connect to exactly $b(v)$ neighbors.

{{\bf Case 1:} $\forall u \in N(v), d(u) \leq b(v)$.} In this case, the
algorithm must set $v$'s neighbors to $N(v) \cup \cup_{u \in N(v)} N(u)$. One
simple idea (implementable on the \arbpram{}) is to initialize a parallel
hash-table and insert all neighbors into the table, which takes $O(b(v)^2)$ work
and $O(\log^{*} n)$ depth~\cite{DBLP:conf/focs/GilMV91}. After insertion, all
elements in the table will be distinct. However, using parallel hashing
increases the depth by a multiplicative factor of $O(\log^{*} n)$. Instead, our
algorithm uses the multiprefix operation to copy the neighbor's neighbors into
an array and remove duplicates from this array in $O(1)$ depth.

Concretely, the algorithm first writes the degree of each of its neighbors
(including itself) into an array $S$, and computes a plus-scan over this array.
Next, it allocates an array $E_v$ with size proportional to the result of the
scan, and copies the neighbors of the $i$th neighbor to the sub-array
$E_v[S[i]]$. Note that the array $E_v$ now contains $N(v) \cup \cup_{u \in N(v)}
N(u)$, possibly with duplicates. We produce the new neighbors of $v$ by removing
duplicates from $E_v$.

We now address how to add edges discovered by a vertex $u$ in the
\addtwohop{} procedure the endpoint, $v$. Note that $v$ may already
have degree at least $b(v)$, but still have edges added to it by
vertices that discover $v$ in this procedure. This operation is
implemented by simply symmetrizing the graph.

{{\bf Case 2:} $\exists u \in N(v)$ s.t. $\ell(u) > \ell(v)$.}
This case can be checked in constant depth using a concurrent write. Note that
if $v$ has any higher-level neighbors it will become inactive on this round in
\relabelinter{} so we can quit.

{{\bf Case 3:} $\exists u \in N(v)$ s.t. $d(u) > b(v)$.}
This case can also be checked using an arbitrary write. In this case,
the algorithm copies $b(v)$ neighbor ids from the neighbor $v$ with
degree $> b(u)$.  It then removes duplicates, and adds edges from the
chosen endpoints to itself by symmetrizing the graph, as before.

\newpara{Implementing $\relabelinter{}$}
Recall that in $\relabelinter{}$, each \emph{active} vertex, $v$, chooses the
highest level vertex $h(v)$ in its direct neighborhood and merges itself to it
by updating \emph{next}. The maximum value can be selected either using a scan
with the $\max$ operation. An arbitrary neighbor with the maximum level in $v$'s
direct neighborhood can then be selected using a concurrent write. We then
contract the graph using the contraction primitive where $m = \emph{next}$.
Finally, we update $C$. Since each vertex is uniquely stored in $C(v)$, we do
not need to remove duplicates in this step and simply flatten the sets $C(v)
\forall m(v) = u$ to be contiguous, which can be done using a plus-scan and a
parallel copy.

\newpara{Implementing $\relabelintra{}$}
Detecting whether an active vertex, $v$ is saturated is done by computing a
prefix sum over its neighbors, filtering out neighbors with degree less than
$b(v)$. In a second synchronous step, each vertex $v$ checks whether it has a
neighbor $u$ with $l(u) = l(v)$ that is marked as saturated, and if so marks
itself as saturated by performing an arbitrary write. We then use the
processor's internal randomness to sub-sample vertices as leaders. Vertices
which successfully become leaders do nothing. The non-leader saturated vertices
select a leader in their 2-hop as follows.

First, each vertex checks if it has a leader in its direct neighborhood, which
can be done by mapping over the neighbors and using a concurrent write. Vertices
that successfully find a leader in their neighborhood indicate mark themselves
with the selected neighbor. Each vertex that failed the previous step re-check
their direct neighborhood, and pick an arbitrary leader chosen by some marked
neighbor, again using a concurrent write (such a neighbor exists w.h.p.).
Finally, we contract the graph using the contraction primitive, and update $C$
using the same method as in \relabelinter{}.

\subsection{Parallel Implementation of Algorithm~\ref{alg:shrinkvertices}}
Recall that this algorithm eliminates a constant factor of the
vertices in the graph w.h.p. per round. We discuss how to implement
each step of the algorithm on the \arbpram{}. Step~\ref{line1} can be
implemented work-efficiently using the minimum algorithm in $O(d(v))$
work and $O(1)$ depth w.h.p. Step~\ref{line2} can be implemented in
$O(n + m)$ work and $O(1)$ depth by checking each neighbor. Similarly,
Step~\ref{line3} can be implemented in two PRAM rounds by first
arbitrarily writing any neighbor pointing to the vertex $u$, and in
the second round writing another edge if it differs from the first
one. The step also takes $O(m+n)$ work and $O(1)$ depth.
Step~\ref{line4} can be checked similarly. The merge is handled using
the contraction primitive given above which costs $O(m+n)$ work and
$O(1)$ depth. Step~\ref{line5} can be done in $O(m + n)$ work and
$O(1)$ depth using a random source within each processor. Finally, we
can detect isolated edges in Step~\ref{line6} similarly to
Step~\ref{line3} above, and merge these edges using graph contraction
in $O(m+n)$ work and $O(1)$ depth using the contraction algorithm
described above. In total, one round of
Algorithm~\ref{alg:shrinkvertices} costs $O(m+n)$ work and $O(1)$
depth.

\subsection{Cost in the \multipram{} Model}

\workdepthpram*
\begin{proof}
Observe that in our parallel implementation, in each iteration a vertex
never performs more than $O(b(v)^{2})$ work. By
Lemma~\ref{lem:sumofbudgesquare} the total work is therefore $O(T)$
per round. All other operations in an iteration such as contraction
and symmetrization cost $O(m + n)$ work. As the algorithm performs
	$O(\log D + \log\log_{m/n} n)$ rounds w.h.p., the overall work is $O(T(\log
	D + \log\log_{m/n} n)) = O((m+n)(\log D + \log\log_{m/n} n))$ w.h.p. for $T =
O(m+n)$.

	By Lemma~\ref{lem:rounds} the overall depth is $O(\log D + \log\log_{m/n} n)$ w.h.p. since each round
of the algorithm is implemented in $O(1)$ depth.
\end{proof}

\bibliographystyle{plain}
\bibliography{references,referencesMPC}
\end{document}